\newcommand{\cmark}{\ding{51}}
\newcommand{\xmark}{\ding{55}}
\newcommand{\subscript}[2]{$#1 _ #2$}
\theoremstyle{plain}
\newtheorem{thm}{Theorem}[section]
\newtheorem{cor}[thm]{Corollary}
\newtheorem{prop}[thm]{Proposition}
\newtheorem{lem}[thm]{Lemma}
\newtheorem{Def}[thm]{Definition}
\newtheorem{obs}[thm]{Observation}
\newtheorem{remark}[thm]{Remark}
\newcommand{\poly}{\mathrm{poly}\xspace}
\newcommand{\eat}[1]{}
\newcommand{\eps}{\epsilon}
\newcommand{\calA}{\mathcal{A}}
\newenvironment{wrapper}[1]
{
	\smallskip
	\begin{center}
		\begin{minipage}{\linewidth}
			\begin{mdframed}[hidealllines=true, backgroundcolor=gray!20, leftmargin=0cm,innerleftmargin=0.375cm,innerrightmargin=0.375cm,innertopmargin=0.375cm,innerbottommargin=0.375cm,roundcorner=10pt]
				#1}
			{\end{mdframed}
		\end{minipage}
	\end{center}
	\smallskip
}
\newcommand{\inserttwod}{\textsc{insert}}
\newcommand{\remove}{\textsc{remove}}
\newcommand{\leqstar}{\stackrel{\star}{\leq}}
\newcommand{\geqstar}{\stackrel{\star}{\geq}}
\title{Beating the Folklore Algorithm for Dynamic Matching}
\author{Mohammad Roghani}
\author{Amin Saberi}
\author{David Wajc}
\affil{Stanford University\\ \{roghani,saberi,wajc\}@stanford.edu}
\author{}
\date{}
\begin{document}

\maketitle

\pagenumbering{gobble}
\begin{abstract}
    The maximum matching problem in dynamic graphs subject to edge updates (insertions and deletions) has received much attention over the last few years; a multitude of approximation/time tradeoffs were obtained, improving upon the folklore algorithm, which maintains a maximal (and hence $2$-approximate) matching in $O(n)$ worst-case update time in $n$-node graphs. 
    
    \medskip 
    
    We present the first deterministic algorithm which outperforms the folklore algorithm in terms of {\em both} approximation ratio and worst-case update time. Specifically, we give a $(2-\Omega(1))$-approximate algorithm with $O(m^{3/8})=O(n^{3/4})$ worst-case update time in $n$-node, $m$-edge graphs. For sufficiently small constant $\eps>0$, no deterministic $(2+\eps)$-approximate algorithm with worst-case update time $O(n^{0.99})$ was known. Our second result is the first deterministic $(2+\eps)$-approximate \emph{weighted} matching algorithm with $O_\eps(1)\cdot O(\sqrt[4]{m}) = O_\eps(1)\cdot O(\sqrt{n})$ worst-case update time. Neither of our results were previously known to be achievable by a randomized algorithm against an adaptive adversary.
    
    \medskip
    
    Our main technical contributions are threefold: 
    first, we characterize the tight cases for \emph{kernels}, which are the well-studied matching sparsifiers underlying much of the $(2+\eps)$-approximate dynamic matching literature. This characterization, together with multiple ideas---old and new---underlies our result for breaking the approximation barrier of $2$.
    Our second technical contribution is the first example of a dynamic matching algorithm whose running time is improved due to improving the \emph{recourse} of other dynamic matching algorithms.
    Finally, we show how to use dynamic bipartite matching algorithms as black-box subroutines for dynamic matching in general graphs without incurring the natural $\frac{3}{2}$ factor in the approximation ratio which such approaches naturally incur (reminiscent of the integrality gap of the fractional matching polytope in general graphs).
    \color{black}
\end{abstract}

\newpage 
\pagenumbering{arabic}
\section{Introduction}

We study the dynamic (weighted) matching problem, where our goal is to maintain an approximately maximum (weight) matching subject to edge updates (insertions and deletions) in a dynamically evolving graph.

\smallskip

For approximation and update time, there are two natural barriers for the dynamic matching problem: an approximation ratio of $2$, and an update time of $O(n)$. Both bounds are achieved by a folklore deterministic algorithm maintaining a maximal matching in worst-case update time $O(n)$ (see \Cref{fig:folklore}).
Moreover, both bounds stand at the frontier of known time/approximation tradeoffs for randomized and deterministic dynamic matching algorithms.
For example, an approximation ratio of $2$ is the best approximation ratio known to be achievable in polylogarithmic worst-case update time \cite{bernstein2019deamortization} or constant amortized time \cite{solomon2016fully} for randomized algorithms.\footnote{A dynamic algorithm has amortized update time $f(n)$ if for any sequence of $t$ updates starting with the empty graph, the algorithm takes $t\cdot f(n)$ time.}
Similarly, an approximation ratio of $2+\eps$ is the best approximation ratio known to be achievable in polylogarithmic amortized update time deterministically \cite{bhattacharya2021deterministic,bhattacharya2016new} or worst-case polylogarithmic update time without the oblivious adversary assumption \cite{wajc2020rounding}.\footnote{The oblivious adversary assumption stipulates that the update sequence is generated non-adaptively, and in particular each update is independent of the algorithm's previous random coin tosses. This assumption, needed for the analysis of many randomized dynamic algorithms, rules out their black-box use in many applications. See \cite{nanongkai2017dynamic,wajc2020rounding} for discussions.}
On the other hand, $O(\sqrt{m})=O(n)$ is the current best worst-case update time for deterministic algorithms achieving better than $2$ approximation \cite{gupta2013fully,peleg2016dynamic,neiman2016simple}.
Moreover, an update time of $\Omega(n)$ (ignoring sub-polynomial factors) is known to be necessary for (exact) dynamic matching, assuming any one of several widely believed conjectures, including the Strong Exponential Time Hypothesis, and the 3SUM, APSP, and OMv conjectures \cite{dahlgaard2016hardness,henzinger2015unifying,kopelowitz2016higher}.

\begin{figure}[H]
\begin{center}
\fbox{\begin{minipage}{0.95\linewidth}
    \textsc{Init:} $M\gets \emptyset$.
    
    \textsc{Insert}$(e)$: If both $u,v\in e$ are unmatched in $M$, then $M\gets M\cup\{e\}$.
    
    \textsc{Delete}$(e)$: If $e\in M$, for each $v\in e$, if $v$ has an unmatched neighbor $w$, then $M\gets M\cup\{(v,w)\}$.
    \end{minipage}}
\end{center}
\vspace{-0.5cm}
\caption{The Folklore Algorithm}\label{fig:folklore}
\end{figure}

Given the above, breaking the approximation ratio of $2$ and worst-case update time of $O(n)$ barriers stand as two recurring goals of the rich literature on the dynamic matching problem.
A concentrated effort, starting with the influential work of \citet{onak2010maintaining}, has resulted in numerous algorithms breaking either one of these barriers individually \cite{baswana2015fully,onak2010maintaining,solomon2016fully,charikar2018fully,arar2018dynamic,bernstein2019deamortization,bhattacharya2016new,bhattacharya2018deterministic,bhattacharya2021deterministic,gupta2013fully,neiman2016simple,bernstein2015fully,bernstein2016faster,peleg2016dynamic,wajc2020rounding,behnezhad2019fully,bhattacharya2020deterministic}. However, despite this long line of work, both barriers were not known to be surpassable simultaneously without assuming sparsity or bipartiteness \cite{gupta2013fully,bernstein2015fully,neiman2016simple,peleg2016dynamic,wajc2020rounding}, settling for amortized update time \cite{bernstein2016faster}, or using randomness and the oblivious adversary assumption \cite{behnezhad2020fully}. Whether or not there exists a deterministic algorithm that beats the trivial folklore $O(n)$-time maximal matching algorithm both in terms of worst-case update time and approximation ratio, thus simultaneously breaking these two natural barriers for this problem in its full generality, remained a vexing open problem.

\subsection{Results}

We resolve the above open problem, and present the first deterministic algorithm which outperforms the folklore dynamic matching algorithm,  in terms of both approximation and worst-case update time. Our main result is the following. 

\begin{wrapper}
    \begin{restatable}{thm}{beatfoklore}(Beating the Folkore Algorithm)\label{thm:det-beat-2}
        There exists a deterministic $1.999999$-approximate dynamic matching algorithm with worst-case update time $O(m^{3/8})$.
    \end{restatable}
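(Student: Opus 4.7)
The plan is to build on the three technical ingredients outlined in the abstract: (i) a tight-case characterization of \emph{kernels} (matching sparsifiers with bounded max degree), (ii) a recourse-aware coupling of dynamic matching subroutines, and (iii) a reduction that lets bipartite dynamic matching algorithms be used in general graphs without incurring the $\tfrac{3}{2}$ integrality-gap penalty. The overall architecture should be the standard kernel-plus-exact-matching scheme, pushed past factor $2$ by augmenting along length-$3$ alternating paths whenever the kernel-induced matching is ``tight''.

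First, I would maintain a kernel $K$ of $G$ of max degree $O_\eps(1)$ whose maximum matching has size at least $\tfrac{1}{2+\eps}$ times the maximum matching of $G$, using a standard deterministic $(2+\eps)$-approximate kernel scheme with worst-case update time inside the $O(m^{3/8})$ budget, and, on top of $K$, maintain a (near-)maximum matching $M_K$ of $K$ via an exact matching subroutine. Since $K$ is sparse and bounded-degree, the cost of maintaining $M_K$ is governed by the recourse of $K$ under edge updates, which we control by choosing the kernel subroutine appropriately.

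Second, I would invoke the tight-case characterization: whenever $|M_K|$ is close to half the maximum matching of $G$, the structure of $G$ around $M_K$ must be highly constrained, so that many matched edges $(u,v) \in M_K$ have ``private'' unmatched neighbors on both sides, yielding vertex-disjoint length-$3$ augmenting paths. A bipartite matching computation on an auxiliary graph whose two sides correspond to the two endpoints of the matching edges then selects a large vertex-disjoint subset of these candidate paths. Because the bipartition is natural (inherited canonically from $M_K$), this is the step where a bipartite subroutine applies \emph{without} the usual $\tfrac{3}{2}$ loss expected in general graphs. Applying the chosen augmentations to $M_K$ drives the approximation strictly below $2$ by a constant.

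The main obstacle is making all of this fit in \emph{worst-case} update time $O(m^{3/8})$. Each subroutine---kernel maintenance, exact matching on $K$, and bipartite matching on the auxiliary graph---contributes to the per-update cost, and the three must be balanced so none dominates. Moreover, augmentations are most naturally amortized against a potential tracking the tightness of $M_K$, so converting to worst-case bounds requires deamortization. This is precisely where the recourse-improvement contribution enters: small recourse of $K$ bounds the churn in $M_K$ and hence in the auxiliary bipartite instance, enabling worst-case guarantees. I expect the bulk of the work to lie in tuning the kernel parameters and augmentation schedule so that all three pieces simultaneously fit inside the $O(m^{3/8})$ budget, and in verifying that the bipartite-to-general reduction faithfully converts augmenting matchings in the auxiliary graph into augmenting paths in $G$.
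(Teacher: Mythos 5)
Your high-level architecture (kernel plus near-maximum matching on it, structural tightness argument, bipartite subroutine, recourse control) matches the paper's, but several concrete steps are either wrong or missing, and they are not cosmetic.

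First, the kernel degree. You propose a kernel of maximum degree $O_\eps(1)$. This cannot work: the deterministic kernel-maintenance primitive (Lemma~\ref{kernel-maintenance}) runs in worst-case time $O(\sqrt{m}/(\eps d))$, so a constant-degree kernel costs $\Theta_\eps(\sqrt{m})$ per update, already exceeding the $O(m^{3/8})$ budget. The paper balances by taking $d=m^{3/8}$: kernel maintenance then costs $O(\sqrt{m}/d)=O(m^{1/8})$, the bounded-degree matching on the (augmented) kernel costs $O(d)=O(m^{3/8})$, and the bipartite subroutine sits in between, giving the claimed bound. The kernel's degree is polynomial in $m$ by design.

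Second, your auxiliary bipartite graph is the wrong object. You describe a graph whose two sides ``correspond to the two endpoints of the matching edges'' and invoke a general-to-bipartite reduction. The paper instead identifies \emph{bipartite subgraphs of $G$ itself} induced by a degree-class bipartition in $K$: roughly $G[V_H,V_L]$ and $G[V_{SH},V_L]$, where high/super-high/low refer to degree in $K$. The tight-case characterization (Lemma~\ref{lem: kernelLemma}, Observation~\ref{obs: edge-restriction}, Lemma~\ref{lem: manyNf}) shows precisely that most augmenting edges of $M^*$ live in such high--low bipartite slices, which is why no $\tfrac32$ integrality-gap loss arises: the bipartite matchings are edges of $G$ that directly yield disjoint augmenting paths with respect to the kernel's maximum matching (Lemma~\ref{aug-lb-in-terms-of-n_i}). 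A single such bipartite slice does not suffice (the paper exhibits type-(2)/(3) counterexamples), so \emph{two} slices are needed (Lemma~\ref{lem: augmentingKernel}).

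Third, and most importantly, you have no mechanism for making these bipartite subgraphs dynamically maintainable. If the slices are defined by exact degree thresholds in $K$, a single edge update can move a node across a threshold and cause a \emph{vertex} update in the bipartite subgraph, for which no $o(n)$-time dynamic bipartite matching algorithm is known. The paper's fix is to define the slices by \emph{approximate} degrees (Definitions~\ref{def:approx-degs}--\ref{def: threshold}) so that each kernel edge change translates into $O(1)$ \emph{star} updates of size $O(\Delta/(\eps^2 d))$; this also forces $O(1/\eps)$ shifted thresholds to control misclassification (Observation~\ref{lem: threshold}, Lemma~\ref{lem: augmentingApproximateKernel}). Without this, the bipartite step blows past the time budget.

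Finally, your claim that augmentations are ``most naturally amortized against a potential'' and then deamortized is not how the paper proceeds, and I do not see how to make that framing rigorous here. The paper maintains the augmented kernel $AK=K\cup A$ directly and proves a \emph{per-time-step} structural guarantee $\mu(G)\le(2-\delta)\mu(AK)$ (Theorem~\ref{core-approx-lem}), then runs the bounded-degree $(1+\delta/8)$-approximation on $AK$ in worst-case time; there is no potential-function amortization to undo. Recourse bounds (via the Solomon--Solomon reduction, Proposition~\ref{recourse}) are used only to cap the number of edge changes to $A$ per update so that the bounded-degree subroutine on $AK$ stays within budget.
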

\end{wrapper}

Prior to this work, no deterministic algorithm was known to   achieve $2+\eps$ approximation in $O_\eps(n^{0.99})$ worst-case update time, where $O_\eps(\cdot)$ suppresses dependencies on $\eps$. As a byproduct of the algorithm given for \Cref{thm:det-beat-2}, 
we obtain a secondary result, rectifying this state of affairs for both the matching and weighted matching problem and present the first $(2+\eps)$-approximate matching algorithm with (polynomially) sub-linear update time for any constant $\eps>0$.

\begin{restatable}{thm}{twoplusepsalgo}\label{thm:fast-kernel-algo}(Informal for Weighted)
    There exist deterministic $(2+\eps)$-approximate dynamic matching and weighted matching algorithms with worst-case update time $O_\eps(1)\cdot O(\sqrt[4]{m}) = O_\eps(1) \cdot O(\sqrt{n})$.
\end{restatable}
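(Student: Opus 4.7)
The plan is to construct a two-layer algorithm. The outer layer maintains a sparse $(2+\eps/2)$-approximate kernel $K \subseteq G$ with maximum degree $\beta = O_\eps(1)$, and hence with $|E(K)| = O_\eps(n)$, while the inner layer maintains a $(1+\eps/4)$-approximate matching within $K$ by running the fully dynamic matching algorithm of \citet{gupta2013fully} as a black box on the kernel. Composing the two approximation factors yields a $(2+\eps/2)(1+\eps/4) \leq (2+\eps)$-approximate matching of $G$. For general (non-bipartite) $G$ I would invoke the paper's third main technical contribution to lift bipartite dynamic matching algorithms to general graphs without incurring the usual $3/2$ loss in approximation; and for weighted matching I would reduce to the unweighted case via a standard decomposition into $O(\eps^{-1} \log W)$ geometric weight classes, combining the class-wise matchings through the fractional matching polytope at an $O_\eps(1)$ multiplicative overhead.

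For the running time: since $|E(K)| = O_\eps(n)$, \citet{gupta2013fully}'s algorithm updates the kernel's matching in $O_\eps(\sqrt{|E(K)|}) = O_\eps(\sqrt{n})$ worst-case time per edge change in $K$. To convert this into a bound on the per-update time in $G$, I need to control the number of edge changes in $K$ triggered by a single edge update to $G$, i.e.\ the kernel's \emph{recourse}. Using the improved worst-case recourse for kernel maintenance---the paper's second main technical contribution---each edge update to $G$ induces only $O_\eps(1)$ edge updates to $K$, so the total per-update time is $O_\eps(\sqrt{n})$. To sharpen $O_\eps(\sqrt{n})$ to the stated $O_\eps(\sqrt[4]{m})$, I would case on the size $\mu^*$ of the maximum matching, using a sparser sub-kernel of size $O_\eps(\mu^*)$ (on which Gupta--Peng costs $O_\eps(\sqrt{\mu^*})$) when $\mu^* \leq \sqrt{m}$, and a phased periodic-recomputation scheme amortizing an $O_\eps(n)$ recomputation over a phase of length $\Theta(\eps\mu^*)$ when $\mu^* > \sqrt{m}$; in both regimes the per-update cost works out to $O_\eps(\sqrt[4]{m})$.

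The hard part of the argument, and the heart of the new contribution invoked for this theorem, is maintaining the sparse kernel $K$ with $O_\eps(1)$ \emph{worst-case} recourse per update to $G$. Standard kernel algorithms---such as those based on the EDCS of Bernstein and Stein---only guarantee polylogarithmic \emph{amortized} recourse, with potentially $\Omega(n)$ worst-case recourse per update, since a single edge change can cascade through a long sequence of rebalancing operations on the kernel. Overcoming this requires a careful de-amortization scheme that spreads the cascade over future updates, paired with a kernel definition that is robust to such lazy rebalancing without significantly degrading the approximation factor. This recourse bound is the technical bottleneck that ultimately determines the exponent in the final update time, and once it is in place, the rest of the proof---composing the two approximation factors, invoking the bipartite-to-general reduction, and performing the weight-class decomposition---proceeds by black-box composition.
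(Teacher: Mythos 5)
Your proposal correctly identifies the key new ingredient---a de-amortized kernel maintenance algorithm with $O(1)$ worst-case recourse per update---but the way you put the pieces together has genuine errors that would not produce the claimed $O_\eps(\sqrt[4]{m})$ worst-case bound.

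The central problem is your choice of kernel degree $\beta=O_\eps(1)$ combined with running the general Gupta--Peng algorithm inside the kernel. First, maintaining an $(\eps,d)$-kernel itself costs $\Theta(\sqrt{m}/(\eps d))$ worst-case time per update (\Cref{kernel-maintenance}); with $d=O_\eps(1)$ this alone is $O_\eps(\sqrt{m})$, already blowing your budget before any matching maintenance happens. Second, you invoke Gupta--Peng at cost $O_\eps(\sqrt{|E(K)|})=O_\eps(\sqrt{n})$, but this ignores that the kernel is a \emph{bounded-degree} graph: the relevant primitive is \Cref{bounded-deg-algo}, giving $O(d/\eps^2)$ per change to $K$. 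The paper's actual argument keeps $d$ as a \emph{tunable} parameter and balances the two terms: kernel maintenance at $O(\sqrt{m}/(\eps d))$ against inner matching at $O(d/\eps^2)\cdot O(1)$, and the optimum is $d=\Theta(\sqrt[4]{m})$, not $d=O_\eps(1)$. Your later casing on $\mu^*$ and phased recomputation is an amortized patch, so it cannot recover the \emph{worst-case} bound the theorem asserts, and even if de-amortized it still does not fix the $O_\eps(\sqrt{m})$ cost of maintaining a constant-degree kernel.

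Two secondary misattributions are worth flagging. You invoke the paper's bipartite-to-general reduction for this theorem, but kernels (and the bounded-degree matching algorithm) work directly in general graphs with no loss; the reduction avoiding the $3/2$ factor is used for the $(2-\Omega(1))$-approximate result, not here. And for weighted matching, the paper does not use geometric weight-class decomposition---that approach typically costs a constant factor strictly larger than $(1+\eps)$ in approximation, which would break the $(2+\eps)$ bound. Instead it plugs the kernel algorithm into the fold/unfold framework of Bernstein et al.~(\Cref{kernel-reduction}), which preserves the $(2+O(\eps))$ ratio at only a $\gamma_\eps\cdot\log R$ overhead in time.
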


For constant $\eps>0$, the update time of the algorithms of \Cref{thm:fast-kernel-algo} is $O(\sqrt[4]{m})$. Such worst-case update time was only previously known to yield a $(9/4+\eps)$ approximation (see \cite[Section 7]{bernstein2021framework} and \Cref{sec:bipartite:reduction}), or worse \cite{wajc2020rounding}.

\smallskip

We contrast our results with prior linear- and sublinear-time deterministic algorithms in \Cref{table:Results}.

\vspace{-0.3cm}
\begin{center}	
	\begin{table}[H]
		{
		\small
		\begin{center}
			\centering
			
			\begin{tabular}{ | c | c |  c |  c | c |}
				\hline				
				Approx. & Update Time & Worst Case & Notes  & Reference \bigstrut\\  
				\specialrule{.125em}{.0625em}{.0625em} 	
				
				$4+\epsilon$ & $O(\sqrt[3]{m}/\epsilon^2)$ & \cmark &  
				
				& \multirow{2}{*}{Bhattacharya et al.~(SODA '15) \cite{bhattacharya2018deterministic}\bigstrut} \\ 
				
				$3+\epsilon$ &
				$O(\sqrt{n}/\epsilon)$ & \xmark &  
				& \\ 	
				\hline	
				
				$9/4+\epsilon$ & $O(\sqrt[4]{m})\cdot \poly(1/\eps)$ \bigstrut  & \cmark & & Bernstein et al.~(STOC '21) \cite{bernstein2021framework} \bigstrut \\ \hline	
				
				$(2+\epsilon)\cdot c$ & $\tilde{O}(n^{1/c})\cdot \poly(1/\eps)$ \bigstrut  & \cmark &  $\forall c\geq 1$ & Wajc (STOC '20) \cite{wajc2020rounding} \bigstrut \\ \hline		
				
				\specialrule{.125em}{.0625em}{.0625em}

			   $2 + \eps$ & $O(\sqrt[4]{m}/\sqrt{\eps})$ & \cmark &   & \textbf{This Work} \bigstrut \\ 	\hline
				
				\multirow{2}{*}{$2+\eps$} & \multirow{2}{*}{$\poly(\log n,1/\eps)$} & \multirow{2}{*}{\xmark} & & Bhattacharya et al.~(STOC '16) \cite{bhattacharya2016new} \bigstrut \\
                 & & &  & Bhattacharya and Kiss (ICALP'21) \cite{bhattacharya2021deterministic} \bigstrut \\ 	\hline

			   $2$ & $O(n)$  & \cmark &  & Folklore \bigstrut \\ 	\hline					
				\specialrule{.125em}{.0625em}{.0625em}			
			   
			   $1.999999$ & $O(\sqrt{n}\cdot \sqrt[8]{m})$
			   & \cmark &   & \textbf{This Work} \bigstrut \\ 	\hline		

				$3/2 +\epsilon$ & $O(\sqrt[4]{m}/ \epsilon^{2.5})$
				& \xmark &  
				& Bernstein and Stein (SODA '16) \cite{bernstein2016faster} \bigstrut \\ 		\hline
				
				$3/2$ & $O(\sqrt{m})$
				& \cmark &   & Neiman and Solomon (STOC '13) \cite{neiman2016simple} \bigstrut \\ 		\hline
				
				$1+\epsilon$ & $O(\sqrt{m}/\epsilon^{2})$
				& \cmark &  & Gupta and Peng (FOCS '13) \cite{gupta2013fully}\bigstrut \\ 		\hline
			\end{tabular}
		\end{center}}	
		
		\captionsetup{justification=centering}
		\caption{Known $O(n)$ Time Deterministic Dynamic Matching Algorithms in General Graphs
		\\(References are to the latest publication, with the first publication venue in parentheses)}	
		\label{table:Results}	
	\end{table}
\end{center}

\subsection{Our Approach in a Nutshell}

\textbf{Warm-up: Faster $(2+\eps)$-Approximate Matching.}
Our starting point is the highly successful matching sparsifiers of \citet*{bhattacharya2018deterministic}---termed \emph{kernels}---previously used in numerous dynamic matching algorithms \cite{bhattacharya2018deterministic,arar2018dynamic,wajc2020rounding,bhattacharya2016new,bernstein2021framework}.
Kernels are $(2+\eps)$-approximate matching sparsifiers (i.e., these subgraphs contain a $(2+\eps)$-approximate matching),
of small maximum degree, $d=o(n)$.
To obtain approximation ratios close to $(2+\eps)$, it is therefore natural to combine kernel-maintenance algorithms with known near-maximum (i.e., $(1+\eps)$-approximate) algorithms with worst-case update time linear in the maximum degree \cite{gupta2013fully,peleg2016dynamic}.
This seems to suggest an $O(T+d)$ time $(2+O(\eps))$-approximate algorithm, where $T$ is the update time for maintaining the kernel.

\smallskip 

Unfortunately, combining these two ideas does not immediately result in an $O(T+d)$ update time. This is because  near-maximum matching algorithms take  $O(d)$ \emph{time per update to the kernel}, and not per update to the input dynamic graph $G$. Consequently, if $c$ is the number of changes to the kernel per update to $G$, then combining these algorithms yields an update time of $O(T+c\cdot d)$. 
Prior deterministic kernel maintenance algorithms \cite{bhattacharya2018deterministic} all had $c\cdot d = \Omega(n)$ in the worst case. Consequently, no deterministic $(2+\eps)$-approximate algorithm with worst-case $o(n)$ time was previously known. 

\smallskip

In \Cref{sec:kernel-maintenance}, we show how to de-amortize an $O_\eps(n/d)$ update time kernel maintenance algorithm of 
\citet{bhattacharya2018deterministic} (and speed it up by an $O(n/\sqrt{m})$ factor in \Cref{sec:kernel-in-sqrtm-time}), while making only $c=O(1)$ updates to the kernel $K$ per update to $G$. Setting the kernel's (tunable) maximum degree to $d=\sqrt[4]{m}$ yields our sublinear $(2+\eps)$-approximate unweighted algorithm of \Cref{thm:fast-kernel-algo}.
Our weighted algorithm of the same theorem (see \Cref{sec:application:MWM}) uses our kernel algorithm and the recent dynamic weighted matching framework of \citet{bernstein2021framework}.

\medskip 
\noindent\textbf{Breaking the Barrier of Two.}
If the kernel $K$ which we maintain \emph{happens to be} a $(2-\eps)$-approximate matching sparsifier, then a near-maximum matching in $K$ would be a $(2-\Omega(\eps))$-approximate matching in $G$, maintainable in worst-case $o(n)$ update time. 
Unfortunately, the $(2+\eps)$ approximation ratio can be tight for kernels. 
Nonetheless, we show that it is possible to use kernels to obtain better-than-two-approximate matchings dynamically.

\smallskip 

Our high-level approach is a natural one: we find a bounded-degree subgraph $A$, whose union with the kernel contains, say, $2\eps\cdot \mu(G)$ disjoint augmenting paths with respect to some maximum matching in the kernel $K$. Since the kernel $K$ is $(2+\eps)$-approximate, 
the augmented kernel $AK:=A\cup K$ is therefore $(2-\eps)$-approximate, whether or not the kernel is $(2-\eps)$-approximate. Therefore,  we can maintain in $o(n)$ worst-case update time, a $(1+O(\eps))$-approximate matching in this low-degree subgraph, $AK$, giving a $(2-O(\eps))$-approximate matching in $G$.

\smallskip

The above approach seems simple, but its dynamic implementation presents two challenges.
The first challenge is to identify a \emph{dynamically maintainable} sparse subgraph $A$ whose edges ``augment'' the kernel. 
The second challenge is to guarantee that following each update to $G$, the worst-case number of changes to the augmented kernel $AK=A\cup K$ is small. This is crucial, since the time to update the output matching, i.e., a near-maximum matching in $AK$, is proportional to the product of the maximum degree of $AK$ and the number of edge updates to $AK$ \emph{per update to $G$.}

\medskip\noindent\textbf{Identifying Good Augmentations.}
Our first key observation is a structural characterization of kernels which are worse than $(2-\eps)$-approximate.
First, we show that such a kernel $K$ contains a maximum matching $M$ such that nearly all connected components of $M\cup M^*$ (where $M^*$ is any maximum matching in $G$) are augmenting paths of length three w.r.t.~$M$. 
Moreover, in the vast majority of these augmenting paths, the edges of $M^*$ connect a high- and a low-degree node in $K$.
This motivates us to compute a large matching $M'$ in the \emph{bipartite} subgraph between high- and low-degree nodes in $K$, in the hope that such an $M'$ contains edges of $G\setminus K$ whose addition to the kernel $K$ increases its maximum matching size.
(Foreshadowing our use of the dynamic bipartite matching algorithm of \citet{bernstein2015fully}, we focus on $(3/2+\eps)$-approximate matchings in this subgraph.) 
While a single such matching $M'$ is insufficient, we show in \Cref{sec:two-matchings-suffice} that the union of two $(3/2+\eps)$-approximate matchings in two similarly-defined bipartite subgraphs do constitute a good bounded-degree augmentation $A$ of the kernel. That is, the augmented kernel, $AK=A\cup K$, contains a $(2-\eps)$-approximate matching. 
This allows us to use kernels and $(3/2+\eps)$-approximate bipartite matching algorithms to obtain $(2-\eps)$-approximate matchings in general graphs in a \emph{static} setting. 

\medskip\noindent\textbf{Dynamizing Our Approach.}
An intricate part of this work is in dynamizing our outlined approach. For this, we must (i) make few changes to the bipartite subgraphs derived from $K$ (as each such change will cost the update time of the bipartite matching algorithm), and (ii) make few changes to the matchings computed in these subgraphs (as each such change will cost the update time of the bounded-degree algorithm in $K$). 
However, the bipartite subgraphs discussed in our static approach above are defined by degrees in the kernel $K$, which may change abruptly, resulting in \emph{vertex updates} (vertex insertions and deletions) in these bipartite subgraphs. This is problematic, as no $(2-\Omega(1))$-approximate algorithm is known with $o(n)$ vertex update time, even in bipartite graphs. 

\medskip\noindent\textbf{Approximate Degrees, and Star Updates.} To overcome the above bottleneck, we extend our static approach, and
prove that a sparse augmentation $A$ of the kernel $K$ is also obtained from the union of large matchings in some $O(1)$ bipartite subgraphs, $B_1,B_2,\dots,B_{O(1)}$, defined based on \emph{approximate degrees} in $K$ (see \Cref{sec:main-approx-degrees} for precise definition).
The key advantage of basing these subgraphs on approximate degrees is that (i) we can maintain such approximate degrees, and hence these subgraphs, in $o(n)$ update time (see \Cref{sec: approx-deg-maintenance}), and (ii) these bipartite subgraphs $\{B_i\}_i$ change less abruptly than their exact-degree-based counterparts. In particular, each change in $G$ (and hence in $K$) results in \emph{local} changes to the approximate degrees. Specifically, each update to $G$ causes the bipartite subgraphs $\{B_i\}_i$ to change by \emph{star updates}---addition/removal edges of a star graph with $b=o(n)$ edges.

\smallskip 
These star updates are doubly beneficial: first, they allow us to maintain, using the bipartite algorithm of \citet{bernstein2015fully}, a $(3/2+\eps)$-approximate matching $M_i$ in each $B_i$ in worst-case update time $b\cdot O_\eps(\sqrt[4]{m})$, which for our values of $b$ is sublinear in $n$. Second, since a star update only increases or decreases the size of any matching by at most one, we can appeal to (a slight extension of) the recent framework of \citet{solomon2021generalized} for decreasing \emph{recourse} (number of changes to the output per update) to guarantee that each matching $M_i$ only changes by $O(1/\eps)=O(1)$ edges per star update, and hence per update to $G$. 
We then take the union of these dynamic matchings $\{M_i\}_i$ to be our dynamic augmentation, $A=\bigcup_i M_i$, which changes by $O(1)$ edges per update to $G$. Since the kernel likewise changes by $O(1)$ edges per update to $G$, we have that the augmented kernel $AK=A\cup K$, which has maximum degree $d+O(1)$, has $c=O(1)$ edge changes per update to $G$. This then allows us to maintain a near-maximum matching in $AK$ in an additional $O(c\cdot d) = O(d) = o(n)$ worst-case time per update in $G$. By our structural results of \Cref{sec:kernels++}, this last matching is a $(2-\Omega(\eps))$-approximate matching in $G$, yielding our main result, \Cref{thm:det-beat-2}.

\subsubsection{Conceptual Contributions}
As the above overview suggests, our main result requires a careful combination of a number of ideas. Here we outline the key novel ideas. 

\smallskip 

Our first new contribution is a structural characterization of ``tight'' kernels, i.e., kernels whose approximation is no better than roughly two approximate. Given the use of kernels in the dynamic matching literature \cite{bhattacharya2018deterministic,arar2018dynamic,wajc2020rounding,bernstein2021framework}, it seems plausible that this characterization will prove useful in subsequent developments in the area.

\smallskip 

Our second contribution is in exhibiting the power of dynamic matching algorithms for restricted class families---bipartite graphs--- to improve algorithms in general graphs. Prior work has relied on fast algorithms for bounded-degree \cite{gupta2013fully} or bounded-arboricity graphs \cite{peleg2016dynamic} to improve their running time \cite{bernstein2015fully,bernstein2016faster,arar2018dynamic,wajc2020rounding,bhattacharya2021deterministic}. 
We present the first use of dynamic  matching algorithms for \emph{bipartite} graphs to improve dynamic matching algorithms in general graphs (beyond the trivial, yet overlooked reduction which incurs a loss factor of $\frac{3}{2}$ in the approximation ratio, discussed in \Cref{sec:bipartite:reduction}). 
This adds to the list of tools for dynamic matching algorithms.

\smallskip 

Finally, our work reinforces the message of prior work, that the study of dynamic algorithms with bounded recourse (number of changes to the output) may be of interest beyond its fundamental nature: reducing recourse may prove useful in reducing algorithms' \emph{update time} (as we show---even for the same problem!). 
This suggests that some slow but recourse-bounded algorithms,  may prove useful in designing fast algorithms for problems for which no such algorithms are known (e.g., \cite{gupta2020fully,bhattacharya2021online}).

\color{black}

\subsection{Related Work}

We briefly review the most relevant algorithmic results for the dynamic matching problem.
For a more detailed survey of the rich literature on dynamic matching and dynamic algorithms more broadly, we refer to the recent survey of \citet{hanauer2021recent}.

\medskip\noindent\textbf{Breaking the Approximation Barrier of Two.} 
For many computational models with dynamic inputs, such as streaming and online algorithms, an approximation ratio of two is easy to achieve for the maximum matching problem, and beating this bound is either a major open problem, or is provably impossible \cite{gamlath2019online}.
In the dynamic graph setting, 
\citet{neiman2016simple} were the first to break the natural approximation ratio of two, giving an $O(\sqrt{m})=O(n)$ worst-case time algorithm with approximation ratio of $\frac{3}{2}$, later improved to $(1+\eps)$  \cite{gupta2013fully}.
\citet{bernstein2015fully,bernstein2016faster} then showed how to maintain a $\frac{3}{2}+\eps$ approximation in $O(\sqrt[4]{m}/\poly(\eps))=O(\sqrt{n}/\poly(\eps))$ time---worst-case for bipartite graphs, and amortized for general graphs. In a recent work, \citet{behnezhad2020fully} showed that for any $\eps>0$, there exists a \emph{randomized} algorithm with worst-case $O(\Delta^\eps) = O(n^\eps)$ update time, and approximation ratio of $2-\frac{1}{1000\cdot 2^{13/\eps}}$, also relying on the search for short augmenting paths. For this, their algorithm crucially relies on the oblivious adversary assumption.
Whether the barrier of $2$ can be broken in $o(n)$ worst-case time by an algorithm (whether deterministic or randomized) which does not require this assumption remained a tantalizing open question, which we resolve in the affirmative.

\medskip\noindent\textbf{Breaking the $O(n)$ Worst-Case Time Barrier.}
As mentioned above, $\Omega(n)$ update time is likely impossible for exact dynamic matching algorithms \cite{dahlgaard2016hardness,henzinger2015unifying,kopelowitz2016higher}.
On the other hand, the dynamic matching literature abounds with (approximate) algorithms breaking the $O(n)$ time barrier \cite{wajc2020rounding,arar2018dynamic,charikar2018fully,bernstein2019deamortization,bhattacharya2021deterministic,bhattacharya2018deterministic,bhattacharya2016new,solomon2016fully,onak2010maintaining}. 
For example, \citet{wajc2020rounding} presented a family of deterministic algorithms trading off worst-case update time and approximation, requiring $\tilde{O}(n^{1/c}\cdot \poly(1/\eps))$ update time for $(2+\eps)\cdot c$ approximation, for any $c\geq 1$ and $\eps>0$. This, however, does not result in $o(n)$ update time for all $2+\eps$ approximation, let alone for $2-\Omega(1)$ approximation. 
The only sub-linear time sub-$2$-approximate algorithms known in general graphs 
are the aforementioned \emph{randomized} algorithm of \citet{behnezhad2020fully} and the deterministic \emph{amortized} time algorithm of \citet{bernstein2016faster}.
Indeed, for $9/4$-approximate matching (or better), no deterministic algorithm for general graphs with $o(n)$ worst-case update time was previously known.
On the other hand, the extent of the usefulness of randomization and amortization are key questions in the dynamic algorithms literature. Fittingly, much effort has been spent on de-randomizing and de-amortizing dynamic matching results (see, e.g.,  \cite{bhattacharya2021deterministic,bernstein2019deamortization}). 
In this work, we show that neither randomization nor amortization are needed to achieve sub-linear-time sub-two-approximate dynamic matching algorithms.


\medskip\noindent\textbf{Subsequent work.} Following the posting of this work, two papers obtaining improved $3/2+\eps$ approximation in deterministic $o(n)$ worst-case update time (respectively, in $O_\eps(\sqrt[4]{m})$ and $O_\eps(\sqrt{n})$) were posted \cite{grandoni2021maintaining,kiss2021improving}. Those later papers' techniques are orthogonal to ours.
\color{black}

\section{Preliminaries}\label{sec:prelims}

\textbf{Problem statement.} 
Our input is a dynamically changing graph $G=(V,E)$, initially empty, undergoing edge \emph{updates} (insertions and deletions). We denote by $G_t$ this graph after $t$ updates.
The objective is to maintain a matching, i.e., a node-disjoint set of edges, of size close to the maximum matching size, $\mu(G)$, spending little time following each update. We denote a maximum (weight) matching in $G$ by $M^*$.
For a weighted graph $G=(V,E,w)$, we wish to compute a matching whose weight approximate the maximum weight matching in $G$, denoted by $MWM(G):=w(M^*):=\sum_{e\in M^*} w(e)$.

\medskip\noindent\textbf{Matching Theory Basics.} 
A matching $M$ is (inclusion-wise) maximal in $G=(V,E)$ if $G$ contains no matching $M'\supsetneq M$.
For a matching $M$ in $G=(V,E)$, an alternating path $P$ in $G$ is a path whose edges alternate between $M$ and $E\setminus M$. An augmenting path is an alternating path whose endpoints are unmatched in $M$. For a set of $k$ disjoint augmenting paths $P_1,P_2,\dots,P_k$ with respect to $M$, the symmetric difference between $M$ and $P:=\bigcup_i P_i$, denoted by $M\bigoplus P$, is a matching of cardinality $|M|+k$, matching all nodes matched by $M$ (and others). The symmetric difference of $M$ and $M^*$ consists of even length paths and cycles, as well as $|M^*|-|M|$ odd-length augmenting paths.
We say an edge $e\in M$ is \emph{3-augmentable} if the connected component of $M\bigoplus M^*$ containing $e$ is an augmenting path of length three.
A standard result in matching theory (see, e.g., \cite[Lemma 1]{konrad2012maximum}) is that a maximal matching which is not much better than $2$-approximate must consist mostly of 3-augmentable edges.

\begin{prop}\label{pre: threeAugmenting}
    Let $\eps \geq 0$. Let $M$ be a maximal matching of $G$ s.t. $|M| \leq \left(\frac{1}{2}+\eps\right)\cdot \mu(G)$. Then $M$ contains at least $\left(\frac{1}{2}-3\eps\right)\cdot \mu(G)$ edges which are 3-augmentable.
\end{prop}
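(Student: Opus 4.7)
The plan is to analyze the symmetric difference $M \oplus M^*$ and count augmenting paths by length. Since $M$ is a matching and $M^*$ is a maximum matching, the connected components of $M \oplus M^*$ decompose into alternating paths and cycles, with exactly $|M^*| - |M|$ of these being odd-length augmenting paths with respect to $M$. Let $a_k$ denote the number of augmenting paths of length $2k+1$ (which contain $k$ edges of $M$ and $k+1$ edges of $M^*$). Since $M$ is maximal, there are no free edges, hence no length-$1$ augmenting paths, so $k \geq 1$ for every such path.

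The 3-augmentable edges of $M$ are in bijection with the length-$3$ augmenting paths: each such path contains exactly one $M$-edge, whose component in $M \oplus M^*$ is itself. So the quantity to lower-bound is precisely $a_1$.

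The two key quantitative inputs I would extract are:
\begin{itemize}
\item $A := \sum_{k \geq 1} a_k = |M^*| - |M| \geq \left(\tfrac{1}{2} - \eps\right)\mu(G)$, using the hypothesis on $|M|$;
\item $B := \sum_{k \geq 1} k \cdot a_k \leq |M| \leq \left(\tfrac{1}{2} + \eps\right)\mu(G)$, since $B$ counts $M$-edges lying on augmenting paths.
\end{itemize}
From the decomposition $B = a_1 + \sum_{k \geq 2} k \cdot a_k \geq a_1 + 2\sum_{k \geq 2} a_k$, I get $\sum_{k \geq 2} a_k \leq \tfrac{1}{2}(B - a_1)$, hence $A = a_1 + \sum_{k \geq 2} a_k \leq \tfrac{1}{2}(a_1 + B)$. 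Rearranging gives $a_1 \geq 2A - B$, and plugging in the bounds above yields
\[
a_1 \;\geq\; 2\left(\tfrac{1}{2} - \eps\right)\mu(G) - \left(\tfrac{1}{2} + \eps\right)\mu(G) \;=\; \left(\tfrac{1}{2} - 3\eps\right)\mu(G),
\]
which is the desired conclusion.

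There is no real obstacle here; the only subtlety worth flagging is the use of maximality, which rules out $k = 0$ (length-$1$ augmenting paths) and is what makes the inequality $B \geq 2(A - a_1)$ tight enough to give the $(1/2 - 3\eps)$ bound rather than something weaker. The rest is a two-line linear combination of the trivial counting identities on $M \oplus M^*$.
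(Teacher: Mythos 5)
Your proof is correct, and it is essentially the standard counting argument that the paper delegates to \cite[Lemma 1]{konrad2012maximum}: decompose $M\oplus M^*$ into alternating components, use maximality to exclude length-one augmenting paths, then combine the identity $\sum_{k\ge 1}a_k=|M^*|-|M|$ with the bound $\sum_{k\ge 1}k\,a_k\le |M|$ to obtain $a_1\ge 2|M^*|-3|M|\ge(\tfrac12-3\eps)\mu(G)$. Nothing to flag; the bijection between 3-augmentable edges and length-three augmenting paths and the $k\ge 1$ exclusion via maximality are exactly the two facts the cited proof also rests on.
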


\subsection{Dynamic Matching Background}

In addition to our new structural results and algorithms, our work will rely on a number of previous algorithms from the dynamic matching literature, which we now outline.

\medskip\noindent\textbf{Bounded-Degree Algorithms.} 
One useful property for approximate dynamic matching is the ``stability'' of the matching problem, first used in a dynamic setting by \citet{gupta2013fully}, which implies that for an $\alpha$-approximate matching $M$ in $G_t$, the undeleted edges of $M$ by time $t'\in [t+1, t+\eps\cdot \mu(G_t)]$ constitute an $\alpha(1+\eps)$-approximate matching in $G_{t'}$.
Combined with static linear-time near-maximum matching algorithms \cite{micali1980v,duan2014linear}, this yields near-maximum algorithms with worst-case update time linear in the maximum degree.
This is implied by the work of \citet{gupta2013fully} (first observed in \citet{bernstein2015fully}) and by the arboricity-time algorithm of \citet{peleg2016dynamic}. For completeness, we provide a short proof in \Cref{appendix:prelims}.

\begin{restatable}{prop}{boundeddeg}(Bounded-Degree Algorithm)\label{bounded-deg-algo}
    For any $\eps\leq 1/3$, there exists a deterministic $(1+\eps)$-approximate matching algorithm with worst-case update time $O(\Delta/\eps^2)$ in dynamic graphs with maximum degree at most $\Delta$. 
\end{restatable}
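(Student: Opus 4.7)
The plan is to combine the matching stability property recalled just before the proposition with de-amortization of a static near-maximum matching algorithm. Concretely, I would use a static algorithm (such as Duan--Pettie) that computes a $(1+\eps/3)$-approximate matching in $O(m/\eps)$ time on a graph with $m$ edges, and partition the stream of updates into phases of length $L=\Theta(\eps\cdot \mu(G))$. At the start of each phase $i$ the algorithm freezes a snapshot $G_{t_i}$ of the current graph and begins running the static algorithm on it, spreading its $O(m/\eps)$ work evenly across the $L$ updates of the phase. During phase $i$ itself the algorithm outputs the matching $M_{i-1}$ computed in the previous phase, restricted to its currently undeleted edges.

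For the update time, the per-update work devoted to the static computation is $O\bigl(m/(\eps L)\bigr)=O\bigl(m/(\eps^{2}\mu(G))\bigr)$. Since $G$ has maximum degree at most $\Delta$ and any maximum matching $M^{*}$ is maximal, every edge has an endpoint among the $2\mu(G)$ vertices matched by $M^{*}$, yielding $m\leq 2\Delta\cdot\mu(G)$. Hence each update performs $O(\Delta/\eps^{2})$ work. For approximation, the matching $M_{i-1}$ used during phase $i$ is based on $G_{t_{i-1}}$, and by the end of phase $i$ only $2L=O(\eps\mu(G))$ updates have elapsed since then; applying the stability property gives that the undeleted edges of $M_{i-1}$ number at least $|M_{i-1}|-2L\ \geq\ \mu(G_{t_{i-1}})/(1+\eps/3)-2L$, which is at least $\mu(G_{t})/(1+\eps)$ throughout the phase provided the constant hidden in $L=\Theta(\eps\mu(G))$ is chosen sufficiently large.

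The main obstacle is that the algorithm does not know $\mu(G)$ exactly, so it cannot directly choose the phase length $L=\Theta(\eps\mu(G))$. I would handle this by maintaining in parallel a maximal matching via the folklore algorithm of \Cref{fig:folklore}, which in a max-degree-$\Delta$ graph runs in worst-case $O(\Delta)$ time per update (deletions cost $O(\Delta)$ to search for a replacement neighbor, insertions cost $O(1)$) and supplies a factor-$2$ estimate of $\mu(G)$. Whenever this estimate doubles or halves relative to its value at the start of the current phase, the phase is aborted and restarted with an updated target length; since a factor-$2$ change in $\mu(G)$ requires $\Omega(\mu(G))$ updates, such restarts are rare enough to be absorbed into the $O(\Delta/\eps^{2})$ budget. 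Adding the $O(\Delta)$ cost of maintaining the maximal matching, the $O(\Delta/\eps^{2})$ cost of spreading the static computation, and the $O(1)$ cost of suppressing a deleted output edge yields the claimed $O(\Delta/\eps^{2})$ worst-case update time.
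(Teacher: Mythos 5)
Your proof follows essentially the same approach as the paper's (in Appendix~A): stability of matchings over $O(\eps\mu)$ updates, bounding $m\leq 2\Delta\cdot\mu(G)$ via the vertex-cover argument, and de-amortizing a static $(1+\eps')$-approximate $O(m/\eps')$-time algorithm over phases of $\Theta(\eps\mu)$ updates. The core ideas and calculations all check out, modulo two small points. First, your sentence ``provided the constant hidden in $L=\Theta(\eps\mu(G))$ is chosen sufficiently large'' has the direction reversed: the constant must be chosen sufficiently \emph{small}, so that the $2L$ additive loss and the $(1+\eps/3)$-blowup together remain within the $(1+\eps)$ budget. Second, the parallel folklore maximal matching you maintain to estimate $\mu(G)$ is unnecessary; the paper simply uses the size $|M|$ of the matching computed at the end of the previous phase as the basis for the next phase's length $\lfloor \eps'\cdot|M|\rfloor+1$, and stability guarantees this is always within a $(1+O(\eps))$ factor of the true $\mu(G)$. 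This sidesteps both the extra $O(\Delta)$ maintenance cost and the (somewhat amortized-flavored) restart argument you invoke. Neither issue affects correctness, but the paper's route is tidier.
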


\medskip\noindent\textbf{Kernels.} \Cref{bounded-deg-algo} motivates the study of dynamic \emph{matching sparsifiers}---sparse (specifically, low-degree) subgraphs which approximately preserve the maximum matching in the dynamic graph $G$. This is the approach followed by numerous works in this area \cite{arar2018dynamic,bhattacharya2018deterministic,bernstein2015fully,bernstein2016faster,bernstein2021framework,gupta2013fully,wajc2020rounding,bhattacharya2021deterministic}.
One family of sparsifiers which have proven useful in a number of these works are \emph{kernels}, introduced by \cite{bhattacharya2018deterministic}.

\begin{restatable}{Def}{kerneldef}(Kernels \cite{bhattacharya2018deterministic})\label{def:kernel}
	An \emph{$(\eps,d)$-kernel} of a graph $G=(V,E)$ is an edge-induced subgraph $K=(V,E_K)$ satisfying the following properties:

\begin{enumerate}[label=(\subscript{P}{{\arabic*}}),leftmargin=2\parindent]
		\item \label{p1:bounded-deg} $\max_{v\in V} d_{H}(v)\leq d$.
		\item \label{p2:satisfied-edges} $\max_{v\in e} d_{H}(v) \geq d(1-\eps)$ for each edge $e\in E\setminus E_K$.
	\end{enumerate}
\end{restatable}

By definition, kernels are low-degre subgraphs, and hence one can maintain a near-maximum degree in these graphs in time $O_\eps(d)$, by \Cref{bounded-deg-algo}.
The following proposition implies that such a near-maximum matching in a kernel $K$ is also large in $G$. (See \cite{arar2018dynamic,wajc2020rounding,bhattacharya2018deterministic} for proofs.)

\begin{prop}[The Basic Kernel Lemma]\label{kernel:basic}
    Let $\eps\in (0,1/2)$ and $d\geq 1/\eps$. If $K=(V,E_K)$ is an $(\eps,d)$-kernel of $G=(V,E)$, then 
    $\mu(G)\leq \frac{2(1+\eps)}{1-\eps}\cdot \mu(K) \leq (2+8\eps)\cdot \mu(K).$
\end{prop}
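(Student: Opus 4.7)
The plan is to build a feasible fractional matching in $K$ whose total weight is close to $\mu(G)$, and then appeal to the relation between fractional and integer matching numbers to conclude.

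First, fix a maximum matching $M^*$ in $G$ and construct a weighting $y$ supported on $E_K$ as follows: for each $e \in M^* \cap E_K$, set $y_e = 1$; for each $e = (u,v) \in M^* \setminus E_K$, property~$(P_2)$ guarantees an endpoint $p(e) \in \{u,v\}$ with $d_K(p(e)) \geq d(1-\eps)$, and I spread weight $1/d_K(p(e))$ uniformly over the $K$-edges incident to $p(e)$. Each edge of $M^*$ contributes exactly $1$ to $\sum_f y_f$, so $\sum_f y_f = \mu(G)$.

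Next, I would bound the load $\sum_{f \ni v} y_f$ at an arbitrary $v$. The load collects three kinds of terms: (i) a contribution of $1$ from the (unique, if any) edge of $M^* \cap E_K$ incident to $v$; (ii) a contribution of $d_K(v) \cdot \frac{1}{d_K(v)} = 1$ from an edge $e \in M^* \setminus E_K$ with $p(e) = v$, if one exists; (iii) for each $K$-neighbor $w$ of $v$ that equals $p(e)$ for some $e \in M^* \setminus E_K$, a term $\frac{1}{d_K(w)} \leq \frac{1}{d(1-\eps)}$. Since $M^*$ is a matching, $v$ is incident to at most one edge of $M^*$, so cases (i) and (ii) are mutually exclusive and contribute at most $1$ in total; case (iii) contributes at most $d_K(v) \cdot \frac{1}{d(1-\eps)} \leq \frac{1}{1-\eps}$. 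Thus $\sum_{f \ni v} y_f \leq 1 + \frac{1}{1-\eps} = \frac{2-\eps}{1-\eps}$, and rescaling $y$ by $\frac{1-\eps}{2-\eps}$ produces a valid fractional matching of $K$ of total weight $\mu(G) \cdot \frac{1-\eps}{2-\eps}$.

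Finally, I would convert this fractional bound to an integral one. For bipartite $K$, K\"onig's theorem equates the fractional and integral matching numbers, yielding $\mu(G) \leq \frac{2-\eps}{1-\eps}\,\mu(K) \leq \frac{2(1+\eps)}{1-\eps}\,\mu(K)$ directly. For general $K$, the main obstacle is the $3/2$ integrality gap between fractional and integer matchings, which on its own only gives $\mu(G) \leq \frac{3(2-\eps)}{2(1-\eps)}\mu(K)$. Closing this gap within the $(1+O(\eps))$ slack required by the lemma will need a sharper argument---for instance, verifying that Edmonds' odd-set constraints for $y$ are slack (plausible since $d \geq 1/\eps$ is large and each $y_f \leq \frac{2}{d(1-\eps)}$), or a direct combinatorial accounting of augmenting paths in $M^* \triangle M_K$ in the spirit of Hopcroft--Karp. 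I expect this final integrality step to be the most delicate part of the proof.
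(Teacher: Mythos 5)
Your construction of a fractional matching in $K$ and the resulting bipartite bound are sound, and your suspicion that the integrality step in general graphs is "the most delicate part" is exactly right---but you have not resolved it, and this is where the genuine gap lies. In particular, your parenthetical hope that "each $y_f \leq \frac{2}{d(1-\eps)}$" is false: for $e \in M^*\cap E_K$ you set $y_e = 1$, and even after rescaling by $\frac{1-\eps}{2-\eps}$ such an edge has weight $\approx \frac{1}{2}$, nowhere near $O(1/d)$. Hence any argument that the odd-set constraints are automatically slack because all edge weights are tiny does not apply to your $y$. The naive $3/2$ integrality-gap bound then leaves you with $\approx 3\,\mu(K)$, not $(2+O(\eps))\,\mu(K)$, and Hopcroft--Karp-style accounting is not obviously easier than the lemma itself.

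The paper closes exactly this gap by choosing a different fractional matching---not yours---and applying \emph{Vizing's edge-coloring theorem} to it. Specifically, it sets $x_e = 1/d$ for $e\in E_K\setminus M^*$ and $x_e = \bigl(1-\sum_{v\in e}\frac{d_K(v)-1}{d}\bigr)^+$ for $e\in E_K\cap M^*$. This tailored choice makes the following argument go through (Lemma~\ref{vizing-app}): the graph $H=G[E_K\setminus M^*]$ has maximum degree $\leq d$, hence is $(d+1)$-edge-colorable; for each $e\in E_K\cap M^*$, the number of colors \emph{not} used on $H$-edges adjacent to $e$ is at least $d+1 - \sum_{v\in e}(d_K(v)-1) \geq x_e\cdot d$, so one can add exactly $x_e\cdot d$ parallel copies of $e$ and keep a $(d+1)$-edge-coloring of the resulting multigraph. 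Averaging over the $d+1$ color classes yields an integral matching of size $\geq \frac{\sum_e x_e}{1+1/d} \geq \frac{\sum_e x_e}{1+\eps}$, which combined with $\sum_e x_e \geq \frac{1-\eps}{2}\mu(G)$ gives the claimed bound. Your $y$ is not structured for this argument---the $M^*\cap E_K$ weights are fixed at $1$ rather than being calibrated to the degree deficit, so the "unused colors" count does not match the multiplicity you would need. To complete your proof you would essentially need to rediscover this choice of $\vec{x}$ and the Vizing step.
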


Combining propositions \ref{bounded-deg-algo} and \ref{kernel:basic}, one obtains $(2+O(\eps))$-approximate matching algorithms with update time $O((d/\eps^2)\cdot C(\eps,d)+T(\eps,d))$ from $(\eps,d)$-kernel maintenance algorithms with $T(\eps,d)$ update time and $C(\eps,d)$ changes to the kernel per update to $G$. Our result of \Cref{thm:fast-kernel-algo} follows precisely this well-trodden path.
The key novelty in our work is showing how to surpass the approximation factor of $2+\eps$ obtainable using kernels alone.

\medskip\noindent\textbf{Fast Bipartite Matching.} 
In our main algorithm we dynamically maintain large matchings in dynamically changing bipartite subgraphs of $G$, using the following result of \citet{bernstein2015fully}.

\begin{prop}\label{edcs-algo}(Dynamic Bipartite Matching)
    Let $\eps>0$. There exists a $(3/2+\eps)$-approximate dynamic bipartite matching algorithm with worst-case update time $O(\sqrt[4]{m}\cdot \poly(1/\eps))$.
\end{prop}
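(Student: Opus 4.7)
The plan is to maintain an \emph{edge-degree constrained subgraph} (EDCS) as a matching sparsifier of $G$, and then extract a near-maximum matching from it via \Cref{bounded-deg-algo}. Define an $(\eps,\beta)$-EDCS of $G=(V,E)$ as a subgraph $H$ satisfying (i) $d_H(u)+d_H(v)\le \beta$ for every edge $(u,v)\in H$, and (ii) $d_H(u)+d_H(v)\ge (1-\eps)\beta$ for every edge $(u,v)\in E\setminus H$. I would pick $\beta=\Theta(\sqrt[4]{m}\cdot \poly(1/\eps))$, so that the maximum degree of $H$ is $\beta$ and the total number of edges in $H$ is $O(n\beta)$.

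The first step is a structural lemma: in a \emph{bipartite} graph $G$, every $(\eps,\beta)$-EDCS $H$ satisfies $\mu(G)\le (3/2+O(\eps))\,\mu(H)$. I would prove this by constructing, from any maximum matching $M^*$ of $G$, a fractional matching on $H$ of value at least $\tfrac{2}{3}(1-O(\eps))\,|M^*|$. The weight on each edge of $H$ is derived from the EDCS invariants applied at endpoints of $M^*$-edges; bipartiteness is then used to invoke integrality of the bipartite matching polytope, so this fractional witness implies the integral bound.

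The second step is the EDCS maintenance algorithm. After each insertion/deletion in $G$, if an invariant is violated---either an edge of $H$ has $d_H(u)+d_H(v)>\beta$, or an edge of $E\setminus H$ has $d_H(u)+d_H(v)<(1-\eps)\beta$---I would repair by removing, respectively inserting, one offending edge, potentially cascading. A potential $\Phi=\sum_{(u,v)\in H}(d_H(u)+d_H(v))$ shows that the amortized number of repairs per update is $O(1/\eps)$. To obtain worst-case guarantees, I would schedule $\Theta(\poly(1/\eps))$ pending repairs per input update, storing pending operations in a buffer so that the backlog is always cleared before the slack $\eps\beta$ between the two invariants is exhausted. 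This de-amortization is the main obstacle: the intermediate graph stored between real updates must remain a valid $(2\eps,\beta)$-EDCS at all times so that the structural lemma of the previous paragraph continues to apply; a careful amortized-to-worst-case scheduling argument tailored to bipartite graphs (where vertex-side alternation makes violations easier to localize) achieves this.

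Finally, for the matching itself: since $H$ has maximum degree at most $\beta=O(\sqrt[4]{m}\cdot \poly(1/\eps))$, applying \Cref{bounded-deg-algo} on top of $H$ maintains a $(1+\eps)$-approximate matching of $H$ in worst-case time $O(\beta/\eps^2)$ per edge change to $H$. Combined with the $O(\poly(1/\eps))$ edge changes to $H$ per update to $G$ from the previous paragraph, the overall worst-case update time is $O(\sqrt[4]{m}\cdot \poly(1/\eps))$. Composing the $(1+\eps)$-approximation in $H$ with the $(3/2+O(\eps))$ structural approximation yields a $(3/2+O(\eps))$-approximate matching of $G$; rescaling $\eps$ gives the stated $(3/2+\eps)$ bound.
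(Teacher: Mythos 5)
The paper does not actually prove this proposition---it is quoted verbatim from \citet{bernstein2015fully} and used as a black box. Your sketch correctly identifies the EDCS framework underlying that result, so you have identified the right building blocks; but the two hardest steps are exactly the ones you defer, and as written the argument does not close. You never say how to \emph{find} an offending invariant-(ii) edge: such an edge lives in $E\setminus H$, and a vertex whose $H$-degree just dropped may have $\Theta(n)$ incident non-$H$ edges, so the naive scan already costs $\Theta(n)$, which is what you are trying to avoid. The standard remedy (used in \citet{bernstein2015fully}, and for kernel maintenance in Appendix~C of this very paper) is an $O(\sqrt{m})$-orientation together with pointer/pool bookkeeping so that each scanned out-edge is charged to progress made; nothing like this appears in your plan, and without it the per-repair cost is unbounded.

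Second, the de-amortization is asserted rather than established. The potential $\Phi$ bounds the \emph{amortized} repair count only; after a partial repair pass the stored graph can sit with arbitrarily many latent invariant-(ii) violations, so ``schedule $\poly(1/\eps)$ repairs per update and the backlog stays within the slack $\eps\beta$'' is precisely the hard claim, not a routine buffering trick. Indeed Bernstein and Stein obtain a worst-case bound only in the bipartite case, while their general-graph extension in \citet{bernstein2016faster} is merely amortized---evidence that the de-amortization is delicate and genuinely exploits bipartiteness, rather than following from a generic scheduling argument. Finally, the balance $\beta=\Theta(\sqrt[4]{m})$ is the right choice only once one has shown that EDCS repair via the orientation costs $O(\sqrt{m}/\beta)$ per update with $O(1)$ worst-case changes to $H$; both of these remain unestablished in your write-up.
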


\medskip\noindent\textbf{Bounding Recourse.}
Recall that to dynamize our main approach, we need that the augmenting edges $A$, which are the union of large matching $M_i$ in appropriately-chosen bipartite subgraphs $B_i$, must change slowly.
That is, the number of changes to these $M_i$ per update, also referred to as \emph{recourse}, should be small. For this, we rely on the following (minor extension) of the recent black-box reduction of \citet[Theorem 3]{solomon2021generalized}, which we prove in \Cref{appendix:prelims} for completeness.

\begin{restatable}{prop}{recourse}(Bounded Recourse) \label{recourse}
For any $\alpha\geq 1$ and $\eps\in (0,1/6)$, if there exist an $\alpha$-approximate dynamic matching algorithm with worst-case update time $T$,
then there exists an $\alpha(1 + \eps)$-approximate dynamic matching algorithm with worst-case update time $T + O(1/\eps)$ and worst-case recourse bound $O(1/\eps)$. This applies under edge and/or node updates \textbf{and/or star updates.}
\end{restatable}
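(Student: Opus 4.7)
The plan is to adapt the black-box reduction of \cite{solomon2021generalized}, instantiating it with the algorithm $\calA$ assumed in the proposition and then verifying that the star-update extension requires only a small additional observation. We run $\calA$ in the background to maintain an $\alpha$-approximate matching $M_{\calA}$ in worst-case time $T$, and separately maintain our output matching $M$ so as to enforce the invariant $|M|\geq (1-\eps')\cdot|M_{\calA}|$ for some $\eps'=\Theta(\eps)$. Combined with $|M_{\calA}|\geq \mu(G)/\alpha$, this yields an $\alpha(1+\eps)$-approximation after rescaling $\eps'$ by an appropriate constant.

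After each update, we perform three steps. First, feed the update to $\calA$ and update $M_{\calA}$ in time $T$. Second, remove from $M$ any edge that no longer exists in $G$; this is at most one removal for an edge update, at most one for a node update, and, crucially, at most one for a star update as well, since any matching contains at most one edge incident to the center of the star. Third, if the invariant is violated, we find one short augmenting path for $M$ inside $M \oplus M_{\calA}$ and flip it. The existence of a short path follows from a pigeonhole argument: whenever $|M|<(1-\eps')|M_{\calA}|$, the symmetric difference contains at least $\eps'|M_{\calA}|$ vertex-disjoint augmenting paths for $M$ using in total at most $|M|+|M_{\calA}|\leq 2|M_{\calA}|$ edges, so the shortest has length $O(1/\eps)$. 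Flipping it contributes $O(1/\eps)$ to both recourse and running time; together with the single-edge removal in the second step, the recourse per update is $O(1/\eps)$.

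The main obstacle is realizing the third step in \emph{worst-case} $O(1/\eps)$ time despite the fact that $\calA$ may mutate $M_{\calA}$ by arbitrarily many edges per update to $G$. Following \cite{solomon2021generalized}, rather than materializing the changes to $M\oplus M_{\calA}$ eagerly, we buffer the edge changes emitted by $\calA$ and explore the symmetric difference lazily, charging each explored edge either to an unprocessed buffered change from $\calA$ or to the flipped augmenting path; a potential argument then shows that $O(1/\eps)$ worst-case work per update suffices to maintain the invariant. The only new input needed for the star-update case is the single-edge observation used in the second step above: a star update in $G$ induces at most $O(1)$ changes to the vertex-disjoint-augmenting-path structure of $M\oplus M_{\calA}$, so the existing amortization and the $O(1/\eps)$ worst-case bounds on both time and recourse carry over without modification.
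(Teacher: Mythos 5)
The proposal's high-level goal is the same as the paper's — black-box wrap the algorithm $\calA$ to damp its recourse, and note that the only extra input needed for star updates is the observation that a star can contribute at most one edge to any matching — but the mechanism you use to achieve worst-case $O(1/\eps)$ recourse is different from the paper's and has a genuine gap.

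The paper's proof is phase-based: it freezes a target matching at a phase boundary and spends the next $\Theta(\eps\cdot\mu)$ updates gradually transforming the current output into that (now slightly stale) target via the $O(1)$-edge-per-step transformation of Solomon--Solomon, spending $\Theta(1/\eps)$ transformation steps per update; the stability observation (at most one matched edge per edge/vertex/star update) shows the lagging output is still $\alpha(1+O(\eps))$-approximate. Crucially, the target never moves during a phase, so the per-update work is literally $O(1/\eps)$ in the worst case by construction.

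Your proposal instead tries to maintain the real-time invariant $|M|\geq(1-\eps')\,|M_{\calA}|$, finding and flipping one short augmenting path whenever it fails. The gap: after a single update to $G$, the quantity $|M_{\calA}|$ can increase by far more than one. The proposition only assumes $\calA$ is $\alpha$-approximate with worst-case update time $T$; it makes no promise about the recourse of $\calA$, so $\calA$ is free to change up to $\Theta(T)$ edges of $M_{\calA}$ per update, and indeed $|M_{\calA}|$ can legitimately jump by $\Theta(T)$ within the range $[\mu(G)/\alpha,\mu(G)]$. One augmenting-path flip raises $|M|$ by only one, so restoring your invariant would require $\Omega(T)$ flips and hence $\Omega(T)$ recourse in a single step — not $O(1/\eps)$. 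Relatedly, the claim that ``a potential argument then shows $O(1/\eps)$ \emph{worst-case} work per update suffices'' conflates amortized and worst-case: a potential function yields amortized bounds, and the paper avoids this by explicit deamortization (fixed per-update budget against a frozen target), which your sketch does not provide. Finally, the pigeonhole existence of a length-$O(1/\eps)$ augmenting path is correct, but \emph{locating} one inside $M\oplus M_{\calA}$ in worst-case $O(1/\eps)$ time is not addressed: $\calA$ can reshuffle $M_{\calA}$ arbitrarily each update, so the symmetric-difference structure you would need to navigate cheaply can change by $\Theta(T)$ edges per update, and the buffering/charging idea is stated at a level of generality where it is not clear it terminates within the claimed worst-case budget. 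To repair the argument you would essentially need to decouple from the live $M_{\calA}$ and work against a periodically refreshed snapshot — which is exactly the phase-based scheme the paper uses.
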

\citet{solomon2021generalized} note that in some settings, bounded recourse translates to bounded running time. Our work provides another concrete example of this phenomenon, in a dynamic setting.

\section{Identifying Augmentations to Kernels}\label{sec:kernels++}

In this section we characterize a bounded-degree subgraph $A$ whose addition to $K$ results in a better-than-two-approximate matching sparsifier, $AK:=A\cup K$. In following sections we will show that this subgraph $AK$ is also efficiently maintainable.
To focus on the key ideas, we mostly provide brief proof sketches here, deferring most full proofs with detailed calculations to \Cref{appendix:kernel++}.

Our starting point, in \Cref{sec:extended-lemma}, is an extension of the proof of the approximation ratio of kernels, to obtain bounds on degrees in $K$ of endpoints of any maximum matching $M^*$ in $G$, whenever such $K$ is worse than $(2-\delta)$-approximate.
In \Cref{sec:characterizing} we 
leverage these degree bounds to characterize (most) augmenting paths of some maximum matching $M$ in $K$ in terms of the $K$-degree of their nodes.
In \Cref{sec:two-matchings-suffice} we use this characterization to prove that large matchings in two bipartite subgraphs, defined by degrees in $K$ (broadly: low-degree and high-degree nodes on opposite sides), are precisely the desired set $A$.
Unfortunately, defining these subgraphs based on degrees in $K$, which may change abruptly, causes updates to $G$ (and hence to $K$) to result in \emph{node updates} in these subgraphs, requiring $\Omega(n)$ update time.
In \Cref{sec:main-approx-degrees} we show that large matchings in some $O(1)$ bipartite subgraphs defined by \emph{approximate} degrees in $K$ (causing these graphs to only change by small star updates) still result in a sufficient $\Omega(\delta \cdot \mu(G))$ augmenting paths with respect to $M$.

\medskip\noindent\textbf{Notation and parameters.}
In this section we fix an $(\eps,d)$-kernel $K=(V,E_H)$ of graph $G=(V,E)$ with $\eps = 2\times 10^{-8}$ and $d\geq 1/\eps$.
We also fix two additional parameters $\delta = 2\times 10^{-6}$ and $s = 2 \times 10^{-4}$.
We assume that $\mu(G) \geq (2-\delta)\cdot \mu(K)$, and search for a sparse augmentation $A$ of $K$, resulting in a sparse subgraph $AK:=A\cup K$ with $\mu(G)\leq (2-\delta)\cdot \mu(AK).$
Our choices of values for $\eps,d,s$ and $\delta$ are taken to make these numbers rather round, while allowing $\delta$ to be close to its highest possible value. 
We indicate by $f(\eps,s,\delta) \leqstar g(\eps,s,\delta)$ inequalities which hold for sufficiently small $\eps$, $s$, $\delta$, $\eps/\delta$ and $\delta/s$, and which specifically hold for our particular choices of $\eps,s$ and $\delta$.

\subsection{The Extended Kernel Lemma}\label{sec:extended-lemma}

By \Cref{kernel:basic}, kernels are $(2+O(\eps))$-approximate sparsifiers. In this section we prove that for a kernel $K$ to   be not  much better than a $2$-approximation, most edges of any maximum matching $M^*$ in $G$ must have their endpoint degrees in $K$ sum to roughly $d$. More precisely, we prove the following.

\begin{restatable}{lem}{ekl}(Extended Kernel Lemma)\label{lem: kernelLemma} 
For any maximum matching $M^*$ in $G$, at most $\frac{\delta}{s}\cdot |M^*|$ edges $e\in M^*$ satisfy either of the following conditions.
\begin{enumerate}
    \item \label{high-degree-sum-condition} $\sum_{v\in e} d_K(v) \geq d\cdot (1+s-2\eps)$.
    \item \label{low-degree-sum-condition}$\sum_{v\in e} d_K(v) \leq d\cdot (1-s+2\eps)$ and $e\in E_K$.
\end{enumerate}
\end{restatable}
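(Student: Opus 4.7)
The approach is to refine the proof of the basic kernel lemma (\Cref{kernel:basic}), identifying structural features of bad edges that allow me to bound $|B_1|+|B_2|$ under the premise $\mu(G) \geq (2-\delta)\mu(K)$. The plan combines a structural observation about $B_2$ with a double-counting identity for $B_1$.

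The key structural observation is that for any $e = (u,v) \in B_2$, both $u$ and $v$ lie in $V_{\text{low}} := \{w : d_K(w) < d(1-\epsilon)\}$: indeed, $h(e) := d_K(u)+d_K(v) \leq d(1-s+2\epsilon) < d(1-\epsilon)$ for our parameters (since $s \geq 3\epsilon$), and having any endpoint of $e$ with $K$-degree at least $d(1-\epsilon)$ would force $h(e) \geq d(1-\epsilon)$, a contradiction. Consequently, $B_2$ is a matching in the induced subgraph $K|_{V_{\text{low}}}$, so $|B_2| \leq \mu(K|_{V_{\text{low}}})$. I would then refine the proof of \Cref{kernel:basic} to obtain $\mu(G) + \mu(K|_{V_{\text{low}}}) \leq 2\mu(K)(1+O(\epsilon))$, using the auxiliary claim $|V_{\text{high}}| \leq 2\mu(K|_{V_{\text{high}}})(1+O(\epsilon))$ (which follows from the fact that $V_{\text{high}}$-vertices have $K|_{V_{\text{high}}}$-degree close to $d$ for $d \geq 1/\epsilon$). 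Combined with the premise $|M^*| \geq (2-\delta)\mu(K)$, this refinement yields $|B_2| \leq (\delta + O(\epsilon))\mu(K)$.

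For $|B_1|$, I would use the double-counting identity $\sum_{e \in M^*} h(e) = \sum_{v \in V(M^*)} d_K(v) \leq 2|E_K|$. Since $B_1 \cap B_2 = \emptyset$ (the thresholds $d(1+s-2\epsilon)$ and $d(1-s+2\epsilon)$ do not overlap when $s > 2\epsilon$), I decompose $M^* = G \sqcup B_1 \sqcup B_2$ and derive per-edge lower bounds: $h(e) \geq d(1-s+2\epsilon)$ for good edges $e \in G$ (using $(P_2)$ when $e \notin E_K$, and ``not in $B_2$'' when $e \in E_K$); $h(e) \geq d(1+s-2\epsilon)$ for $e \in B_1$; and trivially $h(e) \geq 2$ for $e \in B_2$. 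Combined with $|E_K|/d \leq (1+O(\epsilon))\mu(K)$ (which follows from the same refinement of \Cref{kernel:basic}), together with the premise $|M^*| \geq (2-\delta)\mu(K)$ and the $|B_2|$-bound from above, a careful algebraic rearrangement---using $\epsilon \leqstar \delta \leqstar s$ so that the $O(\epsilon)$ corrections are absorbed---yields $|B_1| + |B_2| \leq (\delta/s)|M^*|$.

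The main obstacle I anticipate is establishing the refined kernel lemma bound $\mu(G) + \mu(K|_{V_{\text{low}}}) \leq 2\mu(K)(1+O(\epsilon))$, strengthening \Cref{kernel:basic} to account separately for the matchable capacity in $V_{\text{low}}$. The key supporting step $|V_{\text{high}}| \leq 2\mu(K|_{V_{\text{high}}})(1+O(\epsilon))$ uses a Hall-type/fractional matching argument: each $v \in V_{\text{high}}$ has $|N_K(v) \cap V_{\text{high}}| \geq d(1-\epsilon) - |V_{\text{low}} \cap N_K(v)|$, and for $d \geq 1/\epsilon$ the induced graph $K|_{V_{\text{high}}}$ is dense enough to admit a near-perfect matching, so only an $O(\epsilon)$-fraction of $V_{\text{high}}$ is left unmatched.
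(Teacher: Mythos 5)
Your proposal diverges from the paper's argument and both prongs of the plan have gaps.

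The paper proves the Extended Kernel Lemma by constructing a single fractional matching $\vec{x}$ in $K$ (setting $x_e=1/d$ for $e\in E_K\setminus M^*$ and $x_e=\bigl(1-\sum_{v\in e}\tfrac{d_K(v)-1}{d}\bigr)^+$ for $e\in E_K\cap M^*$) and showing that \emph{every} edge $e\in M^*$ satisfying either condition gets fractional degree sum $\sum_{v\in e}y_v\geq 1+s-O(\eps)$, compared to a baseline of $1-\eps$ for the remaining edges. The crucial point is that the construction places \emph{extra} mass on edges $e\in E_K\cap M^*$ with small $h(e)=\sum_{v\in e}d_K(v)$, so low-degree-sum kernel edges also contribute a bonus of $s-O(\eps)$. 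Your raw degree-sum quantity $h(e)$ does not have this feature, and it is precisely this that breaks your $B_1$ argument: with the baseline $h(e)\geq d(1-s+2\eps)$ for good edges (forced by excluding $B_2$), the per-bad-edge bonus $d(2s-4\eps)$ in your inequality is of the same order $sd$ as the slack $2(d+1)\mu(K)-d(1-s+2\eps)|M^*|\approx 2sd\,\mu(K)$. Plugging in gives $|B_1|/\mu(K)\lesssim 1+O(\delta/s)$, two orders of magnitude weaker than the required $|B_1|+|B_2|\leq(\delta/s)|M^*|\approx 0.02\,\mu(K)$, and this deficit is not an ``$O(\eps)$ correction'' that can be absorbed. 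The paper's baseline, in contrast, is $\sum_{v\in e}y_v\geq 1$ for \emph{all} $e\in E_K\cap M^*$ (not $\geq 1-s+2\eps$), which is what makes the per-bad-edge bonus dominate.

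The $B_2$ argument also has a false supporting step: the claim $|V_{\text{high}}|\leq 2(1+O(\eps))\,\mu(K\vert_{V_{\text{high}}})$ fails outright. Take $V_{\text{high}}$ to consist of $k$ pairwise non-adjacent vertices, each connected in $K$ only to $d$ private $V_{\text{low}}$ neighbors; then $\mu(K\vert_{V_{\text{high}}})=0$ while $|V_{\text{high}}|=k$. The underlying issue is that a vertex of high $K$-degree need not have any neighbors in $V_{\text{high}}$, so $K\vert_{V_{\text{high}}}$ can be arbitrarily sparse. Whether the target inequality $\mu(G)+\mu(K\vert_{V_{\text{low}}})\leq 2(1+O(\eps))\mu(K)$ is nonetheless true is unclear, but the proposed route to it does not work, and you would essentially have to reprove a fractional-matching statement of the same flavor as the paper's anyway. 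You would do better to follow the paper: build the fractional matching $\vec{x}$ above, verify $\sum_{v\in e}y_v\geq 1-\eps$ generically and $\geq 1+s-4\eps$ for condition (1) or (2), apply the Vizing-based rounding (\Cref{vizing-app}), and close with one algebraic rearrangement using $\mu(G)\geq(2-\delta)\mu(K)$.
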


We now outline the proof approach, deferring a full proof with detailed calculations to  \Cref{appendix:kernel++}.

\begin{proof}[Proof (Sketch)]
    The basic kernel lemma, \Cref{kernel:basic}, which this lemma extends, is obtained by considering a fractional matching $\vec{x}$ in $K$, with values determined by a maximum matching $M^*$ in $G$. 
    For this fractional matching $\vec{x}$, nodes' fractional degrees, $y_v:=\sum_{e\ni v} x_e$, satisfy the following bound. 
    \begin{align}\label{kernel-lem-deg-bound}
    \sum_{v\in e} y_v\geq 1-\eps \qquad \forall e\in M^*.
    \end{align}
    This bound directly implies that this fractional matching $x$ has large value in terms of $|M^*|=\mu(G)$.
    \begin{align}\label{kernel-lem-size-bound}
    \sum_e x_e = \frac{1}{2}\sum_v y_v \geq \frac{1}{2}\cdot \sum_{e\in M^*} \sum_{v\in e} y_v \geq \frac{1}{2}\cdot (1-\eps)\cdot \mu(G).
    \end{align}
    In bipartite graphs $G$, where the fractional matching polytope is integral, this immediately implies that $\mu(K)\geq \sum_e x_e\geq \frac{1-\eps}{2}\cdot \mu(G)$.
    In general graphs, using an application of Vizing's edge coloring theorem (see \Cref{vizing-app}), one can show that for this particular fractional matching $\vec{x}$, the kernel $K$ contains a large integral matching $M$ of nearly the same size as $x$'s value, namely 
    \begin{align}\label{vizing-app-internal}
    \mu(K)\geq |M|\geq \frac{1}{1+1/d}\cdot \sum_e x_e\geq \frac{1}{1+\eps}\cdot x_e.
    \end{align}
    Combining equations \eqref{kernel-lem-size-bound} and \eqref{vizing-app-internal} directly yields \Cref{kernel:basic}.
    
    \textbf{Our key observation} is that the fractional degree bound of \Cref{kernel-lem-deg-bound} is loose if either condition \eqref{high-degree-sum-condition} or \eqref{low-degree-sum-condition} hold for edge $e\in M^*$. For such edge $e$ we have that $\sum_{v\in e} y_v \geq 1+s-4\eps$. 
    Consequently, if $r$ is the number of edges of $M^*$ satisfying either condition, then the above observation and \Cref{kernel-lem-deg-bound} imply that the fractional matching $\vec{x}$ has size at least $\sum_e x_e \geq \frac{1-\eps}{2} \cdot \mu(G) + r\cdot (s-3\eps)$. By \Cref{vizing-app-internal}, an increase in $r$ therefore increases our lower bound on $\mu(K)$ in terms of $\mu(G)$. Thus, the inequality $\mu(G)\geq (2-\delta)\cdot \mu(K)$, and some simple calculations, implies the claimed upper bound on $r$.
\end{proof}

\subsection{Characterizing augmenting path node classes}\label{sec:characterizing}
In this section, we characterize the structure of length-three augmenting paths based on node  degree classes, which we now define. 

\begin{Def} \label{def: nodes-class}
A node $v$ has \emph{super-high/high/medium/low} degree in $K$ for $i \in [\frac{1}{\eps}]$, if it belongs to the following sets, respectively.\footnote{For notational simplicity, we assume that $d$ is an integer multiple of $1/\eps$. Since $d=\geq 1/\eps$, this is WLOG, up to rounding of $d$, which at most increases our running time by an $O(1)$ factor.}
\begin{align*}
    V^{(i)}_{SH} & := \left\{v\in V \mid d_K(v)\geq d\cdot\left(1-\eps- i \eps^2\right)\right\} \\
    V^{(i)}_{H} & := \left\{v\in V \mid d_K(v)\geq d\cdot \left(1-2s - i \eps^2 \right)\right\} \\
    V^{(i)}_{M} & := \left\{v\in V \mid d_K(v)\in \left(d\cdot \left(s+ i \eps^2 \right) , d\cdot \left(1-2s - i \eps^2\right)\right)\right\} \\
    V^{(i)}_{L} & := \left\{v\in V \mid d_K(v)\leq d\cdot \left(s+i \eps^2\right)\right\}.
\end{align*}
\end{Def}

The proofs of sections \ref{sec:characterizing} and \ref{sec:two-matchings-suffice} will hold for any $i\in [\frac{1}{\eps}]$. Therefore, for notational simplicity, we fix some such $i$, and drop the notation $(i)$ from the node's classes. The reader will be relieved to know that they need not commit to memory the exact formulae defining the sets $V_{SH},V_H,V_M$ and $V_L$. In particular, until \Cref{sec:main-approx-degrees}, these formulae will only be used explicitly in the following two observations and lemma.

\begin{obs}\label{obs: disjoint-classes}
    We have that $V_{SH}\subseteq V_H$. Also, $V_H, V_M$, and $V_L$ are disjoint.
\end{obs}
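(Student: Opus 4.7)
The statement is a direct verification from \Cref{def: nodes-class}, given the parameter choices $\eps = 2\times 10^{-8}$, $s = 2\times 10^{-4}$, and $i\in[1/\eps]$ (so that $i\eps^2 \leq \eps$). The plan is to check the three relevant inequalities on the thresholds defining the classes, using only the arithmetic relation $\eps \leqstar s$.

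First I would handle the containment $V_{SH}\subseteq V_H$. A node $v\in V_{SH}$ satisfies $d_K(v)\geq d(1-\eps - i\eps^2)$, and membership in $V_H$ requires the weaker bound $d_K(v)\geq d(1-2s - i\eps^2)$. Hence it suffices to note $1-\eps - i\eps^2 \geq 1-2s - i\eps^2$, which reduces to $2s \geq \eps$; this holds comfortably for our parameters.

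Next I would verify the three pairwise disjointness claims. Between $V_H$ and $V_M$, the definitions already place $V_M$-nodes strictly below $d(1-2s-i\eps^2)$ and $V_H$-nodes at or above this threshold, so the two sets are disjoint by construction. The same kind of direct comparison at threshold $d(s+i\eps^2)$ gives disjointness of $V_M$ and $V_L$. The only nontrivial disjointness is between $V_H$ and $V_L$: it reduces to showing that the lower threshold of $V_H$ exceeds the upper threshold of $V_L$, i.e., $d(1 - 2s - i\eps^2) > d(s + i\eps^2)$, equivalently $1 > 3s + 2i\eps^2$. Using $i\eps^2 \leq \eps$, it suffices to check $1 > 3s + 2\eps$, which holds by our parameter choices (and is captured by $\eps, s \leqstar 1$).

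There is no real obstacle here; the main thing is simply to bookkeep the thresholds in \Cref{def: nodes-class} and invoke the bound $i\eps^2 \leq \eps$ coming from $i\in[1/\eps]$ when comparing the $V_H$ and $V_L$ boundaries. In the detailed write-up I would just state each of the three inequalities on the thresholds and note that each reduces, after cancellation, to a relation of the form $2s\geq \eps$ or $1 \geq 3s + 2\eps$, both subsumed by our $\leqstar$ convention.
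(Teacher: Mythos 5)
Your verification is correct and is exactly the argument the paper leaves implicit (the observation is stated without proof, as a routine reading of \Cref{def: nodes-class}): the containment reduces to $2s\geq\eps$, the $V_H$/$V_M$ and $V_M$/$V_L$ disjointness are immediate from the strict inequalities defining the open interval for $V_M$, and the $V_H$/$V_L$ disjointness reduces to $1>3s+2i\eps^2$, which holds since $i\eps^2\leq\eps$. Nothing to fix.
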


Next, we note that \Cref{lem: kernelLemma} implies constraints on degree classes of endpoints of $M^*$ edges.

\begin{obs} \label{obs: edge-restriction}
At most $\frac{\delta}{s}\cdot\mu(K)$ edges $(u,v)\in M^*$ satisfy either of the following conditions.
\begin{enumerate}
    \item $(u,v) \in (V_M \times V_{SH}) \cup (V_H \times V_H)$.\label{high-sum}
    \item $(u,v)\in V_L \times (V_L \cup V_M)$. \label{low-sum}
\end{enumerate}
\end{obs}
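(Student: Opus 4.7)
The plan is to show that every edge $(u,v)\in M^*$ that satisfies one of the two conditions of the observation also satisfies one of the two conditions of the Extended Kernel Lemma (\Cref{lem: kernelLemma}), and then invoke that lemma directly. The entire proof reduces to four short arithmetic case checks plus one appeal to property \ref{p2:satisfied-edges} of kernels.

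For the first condition of the observation, I would split into two subcases. If $(u,v)\in V_M\times V_{SH}$, then by \Cref{def: nodes-class} we have $d_K(u) > d(s+i\eps^2)$ and $d_K(v) \geq d(1-\eps-i\eps^2)$, whose sum is strictly more than $d(1+s-\eps) \geq d(1+s-2\eps)$. If $(u,v)\in V_H\times V_H$, both endpoints have $d_K(\cdot)\geq d(1-2s-i\eps^2)$, so their sum is at least $d(2-4s-2i\eps^2)$; using $i\leq 1/\eps$ so $i\eps^2 \leq \eps$, this is at least $d(2-4s-2\eps)$, which exceeds $d(1+s-2\eps)$ once $5s \leq 1$. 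Both subcases thus match condition 1 of \Cref{lem: kernelLemma}.

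For the second condition, I would verify both the degree-sum bound and the kernel-membership requirement needed by condition 2 of \Cref{lem: kernelLemma}. For the degree sum: if $(u,v)\in V_L\times V_L$ the sum is at most $2d(s+i\eps^2) \leq d(2s+2\eps) \leq d(1-s+2\eps)$ (using $3s\leq 1$); if $(u,v)\in V_L\times V_M$ the sum is strictly less than $d(s+i\eps^2) + d(1-2s-i\eps^2) = d(1-s) < d(1-s+2\eps)$. For kernel membership, both endpoints lie in $V_L\cup V_M$, so each has $d_K(\cdot) < d(1-2s-i\eps^2) < d(1-\eps)$, where the last inequality uses $2s > \eps$ (which holds since $s=2\times 10^{-4}$ and $\eps=2\times 10^{-8}$). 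The contrapositive of property \ref{p2:satisfied-edges} then forces $(u,v)\in E_K$, completing the match with condition 2 of \Cref{lem: kernelLemma}.

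With both observation conditions folded into the lemma's conditions, the bound on the number of such edges follows immediately from \Cref{lem: kernelLemma} (after the mild rescaling $|M^*|=\mu(G)\leq (2+O(\eps))\mu(K)$ coming from the Basic Kernel Lemma, which only affects constants absorbed into $\delta/s$). The sole potential obstacle is arithmetic bookkeeping: one must confirm that the slack terms $i\eps^2$, controlled by $i\leq 1/\eps$, are dominated by $s$ and by $\eps$ in the right directions. The chosen parameters $\eps=2\times 10^{-8}$, $s=2\times 10^{-4}$ give substantial margin in every inequality above, and this is exactly the setting captured by the $\leqstar$ convention introduced in \Cref{sec:kernels++}.
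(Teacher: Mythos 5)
Your approach is identical to the paper's: show that each of the observation's two conditions is a special case of the corresponding condition in the Extended Kernel Lemma (\Cref{lem: kernelLemma}), with the kernel-membership step in condition 2 supplied by Property~\ref{p2:satisfied-edges}, and then invoke that lemma. Your arithmetic case analysis is correct and in fact more explicit than the paper's one-line proof, which simply asserts the ``special case'' reductions without verifying the degree sums.

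One caveat deserves a flag, however: your final ``mild rescaling'' step is pointing in the wrong direction. \Cref{lem: kernelLemma} bounds the offending edge count by $\frac{\delta}{s}\cdot |M^*| = \frac{\delta}{s}\cdot\mu(G)$, whereas the observation claims the strictly \emph{smaller} bound $\frac{\delta}{s}\cdot\mu(K)$ (since $\mu(K)\leq\mu(G)$). Passing through $\mu(G)\leq (2+O(\eps))\mu(K)$ as you propose would yield roughly $\frac{2\delta}{s}\cdot\mu(K)$, a factor of two \emph{worse} than claimed, and this factor is \emph{not} harmlessly absorbable into $\delta/s$ --- it is a concrete numerical constant used downstream (e.g., in the bound on $A(M_H)+A(M_{SH})$ in \Cref{lem: augmentingKernel}, where doubling $\delta/s$ flips the sign of the slack). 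The correct justification is to note that the proof of \Cref{lem: kernelLemma} actually establishes the intermediate inequality $r(s-3\eps) \leq 2(1+\eps)\mu(K) - (1-\eps)\mu(G)$, which together with $\mu(G)\geq (2-\delta)\mu(K)$ gives $r\leq \frac{\delta+4\eps}{s-3\eps}\cdot\mu(K)$, essentially the claimed $\frac{\delta}{s}\cdot\mu(K)$. The paper's own proof of this observation also elides this normalization, so you are inheriting the paper's sloppiness here, but you should not try to repair it by citing the Basic Kernel Lemma in the wrong direction.
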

\begin{proof}
    Condition \eqref{high-sum} is a special cases of Condition \eqref{high-degree-sum-condition} of \Cref{lem: kernelLemma}.  On the other hand, every edge $(u,v)\in V_L\times (V_L\cup V_M)$ has no super-high degree endpoint, and so by Property \ref{p2:satisfied-edges} of a kernel, must belong to $E_K$. Therefore, Condition \eqref{low-sum} is a special case of Condition \eqref{low-degree-sum-condition} of \Cref{lem: kernelLemma} too. The bound on the number of edge in $M^*$ satisfying either condition above therefore follows from \Cref{lem: kernelLemma}.
\end{proof}

Next, we note that some maximum matching $M$ in $K$ matches most $V_H$ nodes.

\begin{restatable}{lem}{Hfree}\label{lem: Hnodes-matches}
Some maximum matching $M$ in $K$ leaves at most $7s \cdot \mu(K)$ nodes of $V_H$ unmatched.
\end{restatable}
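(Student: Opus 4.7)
The plan is to combine the Gallai--Edmonds decomposition of $K$ with a Vizing-type upper bound on $|V_H|$ and a lower bound on the size of factor-critical $V_H$-only components.

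First, I would invoke the Gallai--Edmonds (GE) structure theorem on $K$ to obtain the canonical partition $V = D \sqcup A \sqcup C$. In any maximum matching of $K$, vertices of $C$ are matched internally, vertices of $A$ are matched via $|A|$ edges into $|A|$ distinct components of $K[D]$ (the choice of which components is flexible, constrained only by a bipartite matching condition between $A$ and the components of $K[D]$), and each of the remaining $c(K[D])-|A|$ ``A-unmatched'' components contributes one unmatched vertex, freely selectable from that factor-critical component. Hence $\min_M |U_H(M)|$ equals the minimum, over all valid A-matchings, of the number $q$ of A-unmatched $V_H$-only components, and it suffices to show $q \leq 7s\,\mu(K)$.

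Next I would establish two complementary bounds. \emph{Bound 1 (Vizing on $|V_H|$):} taking $x_e = 1/d$ on each $e \in E_K$ gives a fractional matching with $y_v = d_K(v)/d \geq 1 - 2s - i\eps^2$ for $v \in V_H$; applying Vizing's edge-coloring theorem exactly as in the proof of \Cref{kernel:basic} rounds this to an integral matching of size at least $\tfrac{d}{d+1}\cdot\tfrac{1}{2}|V_H|(1-2s-i\eps^2)$, which yields $|V_H| \leq (2+O(s))\,\mu(K)$ since $d \geq 1/\eps \gg 1/s$. \emph{Bound 2 (component size):} for every $v$ in a $V_H$-only component $D_j$, all $K$-neighbors of $v$ lie in $(D_j\setminus\{v\})\cup A$ (by definition of the GE decomposition), so $|D_j| \geq d(1-2s-i\eps^2)+1-|A|$. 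In the main regime $|A| \leq d/2$, Bound~2 gives $|D_j| = \Omega(d)$ for each such component, and pigeonhole with Bound~1 yields $q = O(\mu(K)/d) = O(\eps)\,\mu(K) \leq 7s\,\mu(K)$, using $d \geq 1/\eps$ and $\eps \ll s$.

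The hard part will be the complementary regime $|A| > d/2$, in which Bound~2 becomes vacuous but $\mu(K) \geq |A|$ is itself large. Here I would exploit the flexibility in the GE A-matching via Hall's theorem applied to the bipartite auxiliary graph $B$ between $A$ and the $V_H$-only components: any Hall-violating subset $S$ with $|S|-|N_B(S)| > 7s\,\mu(K)$ would, via the high degrees $d_K(v) \geq d(1-O(s))$ of the vertices in $\bigcup_{D_j\in S} D_j$ and Property~\ref{p2:satisfied-edges}, force a corresponding excess of $M^*$-edges satisfying either condition~\ref{high-degree-sum-condition} or \ref{low-degree-sum-condition} of the Extended Kernel Lemma (\Cref{lem: kernelLemma}), contradicting its $(\delta/s)$-bound. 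Converting such a Hall violation into a quantitative excess of bad $M^*$-edges, and thereby ruling out the large-$|A|$ regime, is the delicate step of the proof.
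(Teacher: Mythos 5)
Your approach is genuinely different from the paper's, but it has a gap that you yourself flag: the regime $|A| > d/2$ is not resolved. In that regime, Bound~2 is vacuous, and your proposed fallback --- converting a Hall violation in the auxiliary bipartite graph between $A$ and the $V_H$-only components of $K[D]$ into an excess of $M^*$-edges violating \Cref{lem: kernelLemma} --- is only sketched. It is not clear how a Hall-violating set $S$ of components translates into edges of $M^*$ with the required degree-sum violations: the Extended Kernel Lemma constrains endpoints of $M^*$-edges, whereas a Hall violation only gives structural information about $K[D]$ and $N_B(S)\subseteq A$, with no direct control over where $M^*$ sits. Absent this conversion, your proof is incomplete exactly where you anticipate difficulty.

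The paper's argument sidesteps the Gallai--Edmonds machinery entirely and has no case split. It takes a $(d+1)$-edge-coloring of $K$ (Vizing, since $\Delta(K)\le d$) and picks a uniformly random color class $M'$; a node $v\in V_H$ is then matched with probability at least $d_K(v)/(d+1) \ge 1-2.1s$, so by linearity $\E[|U|] \le 2.1s\cdot|V_H|$. Combined with a fractional-matching bound $|V_H|\le 3.25\,\mu(K)$ (your sharper Vizing-based $|V_H|\le(2+O(s))\mu(K)$ would also do), this gives a matching $M'$ missing at most $7s\,\mu(K)$ nodes of $V_H$. Finally, since augmenting along $\tilde M\oplus M'$ paths (for any maximum matching $\tilde M$) only enlarges the set of matched vertices, $M'$ can be augmented to a maximum matching $M$ with the same guarantee. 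This one-paragraph probabilistic argument is robust precisely because it never needs to reason about how the maximum matching distributes itself over GE components, which is where your argument stalls. If you want to pursue your route, you would need to make the Hall-violation step quantitative; but the random-color-class argument is strictly simpler and already suffices.
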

\begin{proof}[Proof (Sketch)]
    First, we note that there exists a (possibly non-maximum) matching $M'$ in $K$ which leaves at most $O(s)\cdot \mu(K)$ high-degree nodes in $K$ unmatched. This follows by considering a $(d+1)$-edge-coloring of $K$ (guaranteed to exist by $\Delta(K)\leq d$ and Vizing's theorem), and noting that a high degree node $v$ in such a color is matched with probability at least $d_K(v)/(d+1)\geq 1-O(s)$. Linearity of expectation then implies a bound of $O(s)\cdot |V_H|$ on the number of nodes in $V_H$ unmatched by $M'$. However, $|V_H|$ can be shown to be $O(\mu(K))$, and so we find that $M'$ leaves at most $O(s)\cdot \mu(K)$ unmatched nodes in $V_H$. The proof is completed by augmenting $M'$ to be a maximum matching in $M$ while only adding to the set of matched nodes.
\end{proof}

For the remainder of \Cref{sec:kernels++}, we let $M$ be the maximum matching in $K$ guaranteed by \Cref{lem: Hnodes-matches}. That lemma, and Property \ref{p2:satisfied-edges} of kernel $K$ implies that $M$ is an \emph{$O(s)$-approximately maximal matching}, i.e., it is a matching which is maximal in a subgraph obtained by removing some $O(s)\cdot \mu(G)$ nodes (and their edges). These type of matchings were useful for \citet{peleg2016dynamic} when studying bounded-arboricity graphs, and will prove useful for us, too. In particular, since such a matching is maximal in a graph $G[V\setminus U]$ with essentially the same matching size as $G$, we can appeal to \Cref{pre: threeAugmenting} to argue that $M$ has many 3-augmentable edges. (See \Cref{appendix:kernel++}.)

\begin{restatable}{lem}{manyaug}\label{lem: manyLength3}
     At least $(1 - 88s)\cdot \mu(K)$ edges of $M$ are 3-augmentable.
\end{restatable}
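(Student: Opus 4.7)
The plan is to follow the hint given in the passage: exhibit a small set of nodes $U$ whose removal makes $M$ a maximal matching in the subgraph $G[V\setminus U]$, and then invoke \Cref{pre: threeAugmenting} in $G[V\setminus U]$.

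First, I would take $U$ to be precisely the set of nodes of $V_H$ left unmatched by $M$. By \Cref{lem: Hnodes-matches} we have $|U|\leq 7s\cdot \mu(K)$. To check that $M$ is maximal in $G[V\setminus U]$, consider any edge $e=uv$ of $G$ with $u,v\in V\setminus U$, and suppose for contradiction that both endpoints are unmatched by $M$. If $e\in E_K$ then $M\cup\{e\}$ is a matching of $K$ strictly larger than $M$, contradicting maximality of $M$ in $K$. Otherwise $e\in E\setminus E_K$, so by kernel \Cref{p2:satisfied-edges} at least one endpoint satisfies $d_K(\cdot)\geq d(1-\eps)$, and hence lies in $V_{SH}\subseteq V_H$ (using \Cref{obs: disjoint-classes} and the fact that $i\eps^2\le \eps$ for $i\in[1/\eps]$). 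Since this endpoint is not in $U$, it must be matched by $M$, a contradiction.

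Second, I would compare $|M|$ to $\mu(G[V\setminus U])$. Removing $|U|$ nodes decreases the max matching size by at most $|U|$, so
\[
\mu(G[V\setminus U])\;\geq\;\mu(G)-|U|\;\geq\;(2-\delta)\mu(K)-7s\cdot\mu(K)\;=\;(2-\delta-7s)\cdot\mu(K),
\]
using the standing hypothesis $\mu(G)\geq(2-\delta)\mu(K)$. Combined with $|M|=\mu(K)$ this gives
\[
|M|\;\leq\;\tfrac{1}{2-\delta-7s}\cdot\mu(G[V\setminus U])\;=\;\Bigl(\tfrac{1}{2}+\eps'\Bigr)\cdot\mu(G[V\setminus U]),
\]
for an $\eps'$ of order $\delta+s$ (explicitly, $\eps'=\frac{\delta+7s}{2(2-\delta-7s)}$).

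Third, I would apply \Cref{pre: threeAugmenting} to the maximal matching $M$ inside the host graph $G[V\setminus U]$ with the slack $\eps'$ above, yielding at least $(\tfrac{1}{2}-3\eps')\cdot \mu(G[V\setminus U])$ edges of $M$ that are $3$-augmentable with respect to a max matching of $G[V\setminus U]$. Any length-$3$ augmenting path in $G[V\setminus U]$ is also one in $G$, so (after extending this max matching of $G[V\setminus U]$ to a max matching $M^*$ of $G$ using edges touching $U$) these same edges remain $3$-augmentable in $G$ w.r.t.\ $M^*$. Substituting the lower bound for $\mu(G[V\setminus U])$ yields at least
\[
\Bigl(\tfrac{1}{2}-3\eps'\Bigr)(2-\delta-7s)\cdot\mu(K)\;=\;\bigl(1-O(\delta+s)\bigr)\cdot\mu(K)
\]
$3$-augmentable edges, which under the convention $\delta\leqstar s$ (in particular for the numerical choices $\delta=2\times10^{-6}$, $s=2\times10^{-4}$) is at least $(1-88s)\cdot\mu(K)$.

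The main obstacle is really the first step: establishing maximality of $M$ in $G[V\setminus U]$. This requires combining Property \Cref{p2:satisfied-edges} of the kernel (to force unmatched non-kernel-edges to have a super-high-degree endpoint), the inclusion $V_{SH}\subseteq V_H$, and \Cref{lem: Hnodes-matches} (to keep $|U|$ small). Once this is in place, the two remaining pieces are a mechanical bookkeeping of matching sizes together with a direct application of \Cref{pre: threeAugmenting}; the final arithmetic is routine and only uses that $\delta$ and $\eps$ are much smaller than $s$.
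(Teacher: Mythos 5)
Your proof follows the paper's argument step for step: take $U$ to be the unmatched $V_H$-nodes, argue $M$ is maximal in $G[V\setminus U]$ (your reasoning is identical to the paper's, using $\tilde{M}^*$-maximality in $K$ for $E_K$-edges and Property \ref{p2:satisfied-edges} plus $V_{SH}\subseteq V_H$ for non-kernel edges), lower-bound $\mu(G[V\setminus U])$, apply \Cref{pre: threeAugmenting} inside $G[V\setminus U]$, and finish with arithmetic. Your bookkeeping $\bigl(\tfrac{1}{2}-3\eps'\bigr)(2-\delta-7s)=1-2\delta-14s$ is in fact a touch cleaner than the paper's chain of inequalities, and comfortably beats $1-88s$ for the given parameter values. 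So this is essentially the paper's proof.

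The one spot that deserves scrutiny is your final parenthetical, ``extending this max matching of $G[V\setminus U]$ to a max matching $M^*$ of $G$ using edges touching $U$; these same edges remain $3$-augmentable.'' That step is not rigorous as stated: a maximum matching of $G[V\setminus U]$ cannot in general be completed to a maximum matching of $G$ just by greedily adding edges incident to $U$ --- one may have to augment along longer alternating paths, and such paths can pass through (and absorb) the very length-three augmenting paths you produced in $G[V\setminus U]$. The paper elides the same transition, so this is not a gap that distinguishes your proof from theirs. If you want an airtight version: fix \emph{any} maximum matching $M^*$ of $G$, note that every length-one augmenting path of $M\bigoplus M^*$ is an $M^*$-edge with both endpoints $M$-unmatched, hence (by your maximality argument) has an endpoint in $U$, so there are at most $|U|\le 7s\,\mu(K)$ of them; then the length-accounting underlying \Cref{pre: threeAugmenting}, applied directly in $G$, gives $a_3 \ge (1-2\delta)\mu(K)-2a_1 \ge (1-2\delta-14s)\mu(K)$, matching your arithmetic without needing to compare matchings across $G[V\setminus U]$ and $G$.
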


In what follows, we will show that most length-three augmenting paths in $M \cup M^*$ belong to one of a small number of \emph{types}, which we now define. Note that since $M$ is maximum in $K$, it admits no augmenting paths in $E_K$, and hence an augmenting path $v_1-v_2-v_3-v_4$ cannot have both of its $M^*$ edges $(v_1,v_2)$ and $(v_3,v_4)$) in $E_K$. Therefore, we assume WLOG that $(v_3,v_4)\not\in E_K$.

\begin{Def} \label{def: freq-aug}
We say a length-three augmenting path $p:v_1-v_2-v_3-v_4$ with edges $e_i=(v_i,v_{i+1})$ for $i\in [3]$ and $e_3=(v_3,v_4)\not\in E_K$ is \emph{frequent} if it belongs to one of the following \emph{types} (see \Cref{aug-paths}):
\begin{itemize}
    \item Type 1: $(v_1,v_2,v_3,v_4)\in V_{L}\times V_{SH} \times V_{SH} \times V_L$.
    \item Type 2: $(v_1,v_2,v_3,v_4)\in V_{L}\times (V_{H} \setminus V_{SH}) \times V_{SH} \times V_L$ and $e_1\in E_K$.
    \item Type 3: $(v_1,v_2,v_3,v_4)\in V_{M}\times (V_{H} \setminus V_{SH}) \times V_{SH} \times V_L$ and $e_1\in E_K$.
    \item Type 4: $(v_1,v_2,v_3,v_4)\in V_{M}\times V_{M} \times V_{SH} \times V_L$ and $e_1\in E_K$.
    \end{itemize}
\end{Def}

\begin{figure}[H]
  \centering

  \resizebox{1\textwidth}{!}{
  \begin{tikzpicture}[shorten >=1pt,-]
    \tikzstyle{node}=[circle,fill=black!10,minimum size=20pt,inner sep=0pt];

    \node[node, label={$V_L$}] (1-1) at (3, 4) {$v_1$};
    \node[node, label={$V_L$}] (1-2) at (6, 4) {$v_4$};
    \node[node, label={[yshift=-1.3cm]$V_{SH}$}] (2-1) at (3, 2) {$v_2$};
    \node[node, label={[yshift=-1.3cm]$V_{SH}$}] (2-2) at (6, 2) {$v_3$};
    \draw (2-1) -- node[yshift=-1.5cm,below] {$(1)$} ++ (2-2) ;
    \draw[dashed] (1-1) -- (2-1);
    \draw[dashed] (1-2) -- (2-2);

    \node[node, label={$V_L$}] (1-3) at (9, 4) {$v_1$};
    \node[node, label={$V_L$}] (1-4) at (12, 4) {$v_4$};
    \node[node, label={[yshift=-1.3cm]$V_{SH}$}] (2-3) at (9, 2) {$v_2$};
    \node[node, label={[yshift=-1.3cm]$V_{SH}$}] (2-4) at (12, 2) {$v_3$};
    \draw (2-3) -- node[yshift=-1.5cm,below] {$(1)$} ++ (2-4) ;
    \draw (1-3) -- (2-3);
    \draw[dashed] (1-4) -- (2-4);
    
    \node[node, label={$V_L$}] (1-5) at (15, 4) {$v_1$};
    \node[node, label={$V_L$}] (1-6) at (18, 4) {$v_4$};
    \node[node, label={[yshift=-1.3cm]$V_{H} \setminus V_{SH}$}] (2-5) at (15, 2) {$v_2$};
    \node[node, label={[yshift=-1.3cm]$V_{SH}$}] (2-6) at (18, 2) {$v_3$};
    \draw (2-5) -- node[yshift=-1.5cm,below] {$(2)$} ++ (2-6) ;
    \draw (1-5) -- (2-5);
    \draw[dashed] (1-6) -- (2-6);
    
    \node[node, label={$V_M$}] (1-7) at (21, 4) {$v_1$};
    \node[node, label={$V_L$}] (1-8) at (24, 4) {$v_4$};
    \node[node, label={[yshift=-1.3cm]$V_{H} \setminus V_{SH}$}] (2-7) at (21, 2) {$v_2$};
    \node[node, label={[yshift=-1.3cm]$V_{SH}$}] (2-8) at (24, 2) {$v_3$};
    \draw (2-7) -- node[yshift=-1.5cm,below] {$(3)$} ++ (2-8) ;
    \draw (1-7) -- (2-7);
    \draw[dashed] (1-8) -- (2-8);

    \node[node, label={$V_M$}] (1-9) at (27, 4) {$v_1$};
    \node[node, label={$V_L$}] (1-10) at (30, 4) {$v_4$};
    \node[node, label={[yshift=-1.3cm]$V_M$}] (2-9) at (27, 2) {$v_2$};
    \node[node, label={[yshift=-1.3cm]$V_{SH}$}] (2-10) at (30, 2) {$v_3$};
    \draw (2-9) -- node[yshift=-1.5cm,below] {$(4)$} ++ (2-10) ;
    \draw (1-9) -- (2-9);
    \draw[dashed] (1-10) -- (2-10);
    
  \end{tikzpicture}
  }
\caption{Frequent length-three augmenting paths.}
\subcaption{Subfigures are labeled by their path's type number. (Note the two type (1) paths.)\\
	Kernel edges are solid and non-kernel edges are dashed. Nodes are labeled by their class.}
\label{aug-paths}
\end{figure}

We now prove that these path types are indeed frequent, as their name suggests. Let $n_f$ and $n_{if}$ denote the number of frequent and infrequent (i.e., not frequent) length-three augmenting paths in $M \cup M^*$, respectively. We prove the following.

\begin{lem}\label{lem: manyNf}
    We have that $n_{if}\leq \left(7s+\frac{\delta}{s}\right)\cdot \mu(K)$ and $n_f\geq (1-100s-\frac{\delta}{s})\cdot \mu(K)$.
\end{lem}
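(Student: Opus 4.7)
The plan is to upper-bound the number of infrequent length-three augmenting paths $n_{if}$ by exhibiting two sources of ``obstructions'' and showing that every path avoiding both of them must necessarily match one of the four frequent types.

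First, I would appeal to \Cref{lem: manyLength3} to conclude that there are at least $(1-88s)\cdot\mu(K)$ length-three augmenting paths in $M\bigoplus M^*$, one per 3-augmentable edge of $M$. Fix any such path $v_1-v_2-v_3-v_4$ with $e_2=(v_2,v_3)\in M$ and $e_1,e_3\in M^*$. Since $M$ is maximum in $K$, $e_1$ and $e_3$ cannot both lie in $E_K$ (else they would form a length-3 augmenting path inside $K$), so, WLOG, I assume $e_3\notin E_K$ as in \Cref{def: freq-aug}. I will then call the path \emph{bad} if either (a)~$v_1\in V_H$ or $v_4\in V_H$, or (b)~one of $e_1,e_3$ satisfies condition~(1) or~(2) of \Cref{obs: edge-restriction}. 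Counting bad paths is straightforward: since $v_1,v_4$ are unmatched in $M$ and any two distinct augmenting paths use disjoint endpoints, \Cref{lem: Hnodes-matches} gives at most $7s\cdot\mu(K)$ paths of type~(a); and since each $M^*$-edge lies in at most one augmenting path of $M\bigoplus M^*$, \Cref{obs: edge-restriction} gives at most $\tfrac{\delta}{s}\cdot\mu(K)$ paths of type~(b). Summing yields $n_{if}\leq (7s+\tfrac{\delta}{s})\cdot\mu(K)$, which is the first claim.

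The main step will be to show that every \emph{non-bad} path is of Type~1, 2, 3, or~4. Applying property \ref{p2:satisfied-edges} to $e_3\notin E_K$ gives that at least one of $v_3,v_4$ belongs to $V_{SH}$; combining with $v_4\notin V_H\supseteq V_{SH}$ (from non-(a)) and the exclusion of $V_{SH}\times V_M$ via condition~(1), I obtain $v_3\in V_{SH}$ and $v_4\in V_L$. I would then case-split on the class of $v_2$. If $v_2\in V_{SH}$, condition~(1) rules out $v_1\in V_M$, leaving $v_1\in V_L$ (Type~1). If $v_2\in V_H\setminus V_{SH}$, property \ref{p2:satisfied-edges} forces $e_1\in E_K$ (since neither endpoint is in $V_{SH}$), yielding Type~2 if $v_1\in V_L$ and Type~3 if $v_1\in V_M$. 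If $v_2\in V_M$, again $e_1\in E_K$, and condition~(2) rules out $v_1\in V_L$, giving Type~4. The sub-case $v_2\in V_L$ is impossible: property \ref{p2:satisfied-edges} again forces $e_1\in E_K$, but then condition~(2) applies to $e_1=(v_1,v_2)$ with $v_2\in V_L$ and $v_1\in V_M\cup V_L\subseteq V_L\cup V_M$, contradicting non-(b).

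Combining the bound $n_{if}\leq (7s+\tfrac{\delta}{s})\cdot\mu(K)$ with the lower bound $(1-88s)\cdot\mu(K)$ on the total number of length-three augmenting paths gives
\[n_f \;\geq\; (1-88s)\cdot\mu(K) - n_{if} \;\geq\; \Bigl(1 - 95s - \tfrac{\delta}{s}\Bigr)\cdot \mu(K) \;\geq\; \Bigl(1 - 100s - \tfrac{\delta}{s}\Bigr)\cdot \mu(K),\]
as desired. The main obstacle is the case analysis in the third paragraph: one must be careful to verify that the interplay of property \ref{p2:satisfied-edges} (forcing kernel membership of $e_1$ whenever neither endpoint is super-high) with the two conditions of \Cref{obs: edge-restriction} carves out exactly the four frequent types, with no stray combination of degree classes slipping through.
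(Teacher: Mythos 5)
Your proof is correct and follows essentially the same approach as the paper: you exclude paths with a $V_H$ endpoint via \Cref{lem: Hnodes-matches}, exclude paths whose $M^*$-edges trigger \Cref{obs: edge-restriction}, and then case-split on node classes to show remaining paths are frequent. Your case analysis is somewhat more explicit than the paper's (the paper compresses the $v_2$ case-split into a single displayed union, and implicitly invokes Property \ref{p2:satisfied-edges} to obtain $e_1\in E_K$ for Types 2--4 without saying so), and your final chain $n_f \geq (1-88s)\cdot\mu(K) - n_{if} \geq (1-95s-\tfrac{\delta}{s})\cdot\mu(K)$ correctly carries the $\tfrac{\delta}{s}$ term that the paper's proof text momentarily drops (though the lemma statement has it).
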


\begin{proof}
In what follows, whenever discussing a length-three augmenting path, we denote it by $p:v_1-v_2-v_3-v_4$, and denote its edges by $e_i=(v_i,v_{i+1})$ for $i \in [3]$. Since $M$ is a maximum matching in $K$, we cannot have $e_1,e_3\in E_K$, since the converse would entail an augmenting path for $M$ in $K$, contradicting $|M|=\mu(K)$. So, we assume WLOG that $e_3 \not\in E_K.$
We start by upper bounding the number of infrequent paths, first considering them by their first node, $v_1$.

By \Cref{obs: edge-restriction}, at most $\frac{\delta}{s}\cdot \mu(K)$ length-three augmenting paths are infrequent paths due to $(v_1,v_2)\in V_L \times (V_L \cup V_M)$ or $(v_1,v_2)\in V_M\times (V_L\cup V_{SH})$.
Now, by \Cref{lem: Hnodes-matches}, at most $7s\cdot\mu(K)$ length-three augmenting paths have a $V_H$ node as one of their endpoints (i.e., their nodes $v_1,v_4$, which are unmatched in $M$). 
This accounts for infrequent paths with $v_1\in V_H$, as well infrequent paths with $v_4\in H$.
In all remaining paths, since $e_3 = (v_3,v_4)\notin E_K$, we have by Property \ref{p2:satisfied-edges} of the kernel that $v_3\in V_{SH}$, and $v_4\in V_L$ (the case $(v_3,v_4)\in V_{SH}\times (V \setminus V_{L})$ is accounted for by the at most $\frac{\delta}{s}\cdot \mu(K)$ edges satisfying conditions \ref{high-sum} or \ref{low-sum} of \Cref{obs: edge-restriction}).
All remaining paths with $(v_3,v_4)\in (V_{SH}\times V_L)\setminus E_K$ correspond to frequent paths of types (1), (2), (3) and (4), since they satisfy, respectively, 
$$(v_1,v_2)\in  (V_L\times V_{SH}) \cup (V_L\times (V_H\setminus V_{SH})) \cup (V_M \times (V_H\setminus V_{SH})) \cup (V_M\times V_M) .$$

To summarize, the number of infrequent length-three augmenting paths is at most $n_{if} \leq \left(7s + \frac{\delta}{s}\right) \cdot \mu(K)$.
By \Cref{lem: manyLength3}, the number of length-three augmenting paths is at least $(1-88s)\cdot \mu(K)$. Combined, these two bounds imply the second claim, i.e., $n_f\geq (1-95s)\cdot \mu(K)\geq (1-100s)\cdot \mu(K)$.
\end{proof}

By \Cref{lem: manyNf}, many edges of length-three augmenting paths in $M\cup M^*$ have 
a (possibly super-)high-degree node as 
one endpoint, and a low-degree node as 
the other endpoint. This motivates us to compute large bipartite matchings between high- and low-degree nodes in $K$. We address this strategy in the following section.

\subsection{Two bipartite matchings suffice}\label{sec:two-matchings-suffice}
In this section, we prove that large matchings in judiciously-chose bipartite subgraphs contain many augmenting paths. We start by noting that one such naturally-chosen matching does not yield such augmenting paths.

\medskip\noindent\textbf{One matching does not suffice.}
Let $G[A, B]:=G[A\times B]$ denote the bipartite subgraph of $G$ induced by bipartition $(A,B)$.
The previous section's characterization of most augmenting paths suggests that a large matching in $G[V_{SH}, V_L]$ together with edges of $K$ might contain many augmenting paths with respect to $M$. However, this is not the case. For example, if all augmenting paths are of type (2), then matching all copies of $v_3$ to $v_1$ in the same path (the unmatched $V_L$ neighbor of $v_2\in V_H$) would not result in \emph{any} augmenting paths.
This example might suggest to instead compute a large matching in the subgraph $G[V_H, V_L]$. However, a similar problem then arises in paths of type (3).
\smallskip 

To conclude, a large matching in either $G[V_{SH},V_L]$ or $G[V_H,V_L]$ may not constitute an augmentation $A$ of the kernel $K$.
We will prove that a large matching in $G[V_{SH},V_L]$ and a large matching in $G[V_H,V_L]$ \emph{together} form the desired augmentation $A$ of the kernel. Formally, we prove the following.

\begin{lem}\label{lem: augmentingKernel}
    Let $M_H$ and $M_{SH}$ be a $(3/2 + \eps)$-approximate matching in $G[V_H, V_L]$ and $G[V_{SH}, V_L]$, respectively. If $A(M_H)$ and $A(M_{SH})$ denote the minimum number of disjoint augmenting paths induced by the union of $K$ and each of $M_H$ and $M_{SH}$, respectively, then $\max \left(A(M_H), A(M_{SH})\right) \geq (\delta + \eps)\cdot \mu(K)$.
\end{lem}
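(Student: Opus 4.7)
My plan is to prove the lemma by exhibiting explicit witness matchings in each bipartite subgraph whose union with $K$ induces at least $n_f$ length-three augmenting paths w.r.t.\ $M$, and then to transfer this augmenting structure to at least one of $K \cup M_H$ or $K \cup M_{SH}$ using the $(3/2+\eps)$-approximation guarantee.

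First, I would define the witness matchings as follows. Let $M_{SH}^{*} \subseteq G[V_{SH},V_L]$ consist of $e_3(p) = (v_3,v_4)$ for every frequent path $p$, together with $e_1(p) = (v_1,v_2)$ for every type-$1$ frequent path. Let $M_H^{*} \subseteq G[V_H,V_L]$ consist of the same edges (using $V_{SH} \subseteq V_H$). By the vertex-disjointness of frequent paths in $M \cup M^*$ and the node-class memberships given in \Cref{def: nodes-class}, both $M_{SH}^{*}$ and $M_H^{*}$ are valid matchings, each of size at least $n_f$.

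Next, I would check that these witnesses realize $n_f$ disjoint augmenting paths. For every frequent path $p$, the edge $e_2 = (v_2,v_3)$ lies in $M \subseteq K$, the edge $e_3(p)$ lies in both $M_{SH}^{*}$ and $M_H^{*}$, and the edge $e_1(p)$ either lies in both witnesses (for type $1$) or lies in $E_K$ (for types $2,3,4$, by definition of frequent paths). Hence $v_1-v_2-v_3-v_4$ is an augmenting path w.r.t.\ $M$ in $K \cup M_{SH}^{*}$ and in $K \cup M_H^{*}$. Vertex-disjointness of frequent paths gives $A(M_{SH}^{*}), A(M_H^{*}) \geq n_f$, and by the $(3/2+\eps)$-approximation of $M_H$ and $M_{SH}$ we obtain the size bounds $|M_H| \geq |M_H^{*}|/(3/2+\eps)$ and $|M_{SH}| \geq |M_{SH}^{*}|/(3/2+\eps)$.

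The main and most delicate step is converting these size bounds into a bound on $\max(A(M_H), A(M_{SH}))$. The idea is a charging scheme: for each frequent path $p$ of type $2,3,4$ that is destroyed in $K \cup M_{SH}$, the matching $M_{SH}$ necessarily omits the witness edge $e_3(p)$; either $v_4$ is then unsaturated by $M_{SH}$ (a bipartite-matching ``deficit'' charged to $|M_{SH}^{*}| - |M_{SH}|$), or $v_4$ is matched by $M_{SH}$ to some $V_{SH}$-vertex $v' \neq v_3$, in which case we still obtain an augmenting path through $v'$ and its $M$-partner by invoking the high kernel-degree of $v'$ combined with the extended kernel lemma's control over which $M$-edges can lack augmenting continuations. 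An analogous analysis handles $M_H$ using $V_H$ on one side. Frequent paths of type $1$ are handled similarly, requiring a parallel check on $e_1$. Summing the contributions across all frequent paths and subtracting the total deficit, which is at most $(|M_H^{*}| - |M_H|) + (|M_{SH}^{*}| - |M_{SH}|) = O(\eps) \cdot \mu(K)$, yields $A(M_H) + A(M_{SH}) \geq (1 - O(s + \delta/s + \eps))\,\mu(K)$; for the chosen parameters this is comfortably at least $2(\delta + \eps)\,\mu(K)$, and so $\max(A(M_H), A(M_{SH})) \geq (\delta + \eps)\,\mu(K)$.

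The hard part will be step four: justifying rigorously that when $M_{SH}$ (or $M_H$) misses $e_3(p)$, the remaining matching edges either still produce an augmenting path using $p$'s vertices or are charged to a distinct matching-size deficit, without double counting across paths. This requires a careful coupling between ``bad'' edges in $M_{SH}$ (like the $(v_3, v_1)$ edge from the motivating type-$2$ example) and frequent paths whose augmentation is destroyed, together with the structural restrictions coming from \Cref{obs: edge-restriction} and \Cref{lem: Hnodes-matches} to ensure the charging is injective up to constants.
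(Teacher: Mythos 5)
Your witness-matching construction is essentially how the paper lower-bounds $\mu(G[V_H,V_L])$ and $\mu(G[V_{SH},V_L])$, although you undercount for $M_H^*$: you should also add $e_1(p)$ for type-$2$ paths (where $v_2\in V_H\setminus V_{SH}\subseteq V_H$), yielding the paper's bound $\mu(G[V_H,V_L])\ge n_f+n_1+n_2$ rather than just $n_f+n_1$; this extra $n_2$ term is needed in the final accounting.

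The fatal gap is in your fourth step. You claim the total deficit $(|M_H^*|-|M_H|)+(|M_{SH}^*|-|M_{SH}|)$ is $O(\eps)\cdot\mu(K)$. This is false: $M_H$ and $M_{SH}$ are only $(3/2+\eps)$-approximate, so $|M_H|$ can be as small as roughly $\tfrac{2}{3}\mu(G[V_H,V_L])$, making each deficit as large as $\tfrac{1}{3}|M_H^*|=\Theta(\mu(K))$. Your charging scheme would thus sacrifice a constant fraction of the frequent paths---vastly more than the $(\delta+\eps)\mu(K)$ slack the lemma permits---so the claimed conclusion $A(M_H)+A(M_{SH})\ge(1-O(s+\delta/s+\eps))\mu(K)$ cannot hold; indeed the true bound the paper achieves is only about $\tfrac{1}{9}\mu(K)$, not $(1-o(1))\mu(K)$. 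Also, charging an unsaturated $v_4$ to the difference $|M_{SH}^*|-|M_{SH}|$ is not a well-founded accounting: that difference can even be negative (since $M_{SH}^*$ need not be maximum), and unsaturation of a specific vertex does not translate into a size deficit without a duality argument you have not set up.

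The paper instead proves (\Cref{aug-lb-in-terms-of-n_i}) a bound linear in the actual matching sizes: every good edge (neither endpoint bad) of $M_H$ yields an augmenting path except at most one per frequent path of types $1,2,3$, giving $A(M_H)\ge|M_H|-b-n_f$; likewise $A(M_{SH})\ge|M_{SH}|-b-(n_1+n_2)$. Substituting $|M_H|\ge(\tfrac{2}{3}-\eps)(n_f+n_1+n_2)$ and $|M_{SH}|\ge(\tfrac{2}{3}-\eps)(n_f+n_1)$ and taking the convex combination $A(M_H)+\tfrac{2}{3}A(M_{SH})$, the coefficients of $n_1,\dots,n_4$ cancel to leave $A(M_H)+A(M_{SH})\ge\tfrac{1}{9}n_f-\tfrac{5}{3}b$, and the lemma then follows from the specific numeric choices of $s,\delta,\eps$. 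You would need to establish a lemma of this type rather than try to make the $(3/2+\eps)$-approximation deficit small.
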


The proof of the above lemma will require some setting up, starting with the following definition.

\begin{Def}\label{def: badNode}
Node $v$ is \emph{bad} if $v \in V(M)$ and $v$ is not in a frequent length-three augmenting path.
\end{Def}
\begin{obs}\label{low-good}
    If node $v\in V_L$ is not bad, then it is unmatched in $M$.
\end{obs}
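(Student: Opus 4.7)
The plan is to unpack the definition of ``bad'' and then do a short case analysis on the frequent path types listed in \Cref{def: freq-aug}. By definition, a node $v$ fails to be bad exactly when either $v \notin V(M)$ or $v$ lies on some frequent length-three augmenting path. In the first case there is nothing to prove, so the entire content of the observation is to handle the second case: if $v \in V_L$ lies on a frequent length-three augmenting path $p : v_1 - v_2 - v_3 - v_4$, then $v$ must be unmatched in $M$.

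The main step is to pin down which position on $p$ the node $v$ can occupy. Since $p$ is an augmenting path with middle edge in $M$, the endpoints $v_1,v_4$ are unmatched in $M$ while the internal nodes $v_2,v_3$ are matched in $M$. Inspecting the four frequent types in \Cref{def: freq-aug}, the internal node $v_3$ always lies in $V_{SH}$, and $v_2$ always lies in $V_{SH} \cup (V_H \setminus V_{SH}) \cup V_M$. Invoking \Cref{obs: disjoint-classes}, which asserts that $V_L$ is disjoint from each of $V_H$ (and hence $V_{SH}$) and $V_M$, rules out $v = v_2$ and $v = v_3$. Therefore $v \in \{v_1,v_4\}$, and in both cases $v$ is an endpoint of the augmenting path and is thus unmatched in $M$, as needed.

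The main (and only mild) obstacle is simply making sure the class membership constraints in every one of the four types of \Cref{def: freq-aug} exclude $V_L$ from the two internal positions; the disjointness in \Cref{obs: disjoint-classes} handles this uniformly, so no further calculation is required.
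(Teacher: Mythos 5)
Your proof is correct, and it matches the intended argument: the paper states this as an observation without proof, and the natural justification is exactly the one you give—unpack the negation of ``bad,'' note the $v\notin V(M)$ case is immediate, and otherwise use the disjointness of $V_L$ from $V_H$ and $V_M$ (\Cref{obs: disjoint-classes}) to force $v$ into an endpoint position $v_1$ or $v_4$ of the frequent path, which is unmatched in $M$.
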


\begin{obs}\label{lem: fewBadNodes}
     The number of bad nodes is at most $b\leq (200s + \frac{2\delta}{s})\cdot \mu(K)$.
\end{obs}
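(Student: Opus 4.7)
The plan is to charge each bad node to the $M$-edge that matches it, and then to bound the number of $M$-edges whose endpoints are bad by combining the earlier lemmas from this section.

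First I would observe that every bad node $v$ lies in $V(M)$ by definition, so $v$ is an endpoint of exactly one edge of $M$. I would then classify the edges of $M$ into three buckets according to the component of $M \cup M^*$ containing them: (i) the middle edges of frequent length-three augmenting paths, of which there are $n_f$; (ii) the middle edges of infrequent length-three augmenting paths, of which there are $n_{if}$; and (iii) all remaining $M$-edges, which lie in components of $M \cup M^*$ that are not length-three augmenting paths at all. By \Cref{lem: manyLength3}, the total number of $M$-edges in length-three augmenting paths is $n_f + n_{if} \geq (1 - 88s)\cdot \mu(K)$, so the number of edges in bucket (iii) is at most $88s \cdot \mu(K)$.

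Next, I would observe that both endpoints of any edge in bucket (i) lie in a frequent length-three augmenting path, so neither is bad. In contrast, the two endpoints of any edge in bucket (ii) or bucket (iii) are matched by $M$ but not contained in any frequent length-three augmenting path, and are therefore bad. Hence
\[
b \;\leq\; 2\,(n_{if} + (\mu(K) - n_f - n_{if})) \;=\; 2\,(\mu(K) - n_f).
\]
Applying the lower bound $n_f \geq (1 - 100s - \tfrac{\delta}{s})\cdot \mu(K)$ from \Cref{lem: manyNf} immediately gives $b \leq (200s + \tfrac{2\delta}{s})\cdot \mu(K)$, as desired. Equivalently, one can sum the buckets directly, using $n_{if} \leq (7s + \tfrac{\delta}{s})\cdot \mu(K)$ and $88s \cdot \mu(K)$, to get $b \leq (176s + 14s + \tfrac{2\delta}{s})\cdot \mu(K) = (190s + \tfrac{2\delta}{s})\cdot \mu(K)$, which is well within the claimed bound.

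There is no real obstacle here: the argument is a pure double-counting step on top of \Cref{lem: manyLength3} and \Cref{lem: manyNf}. The only thing to be careful about is the bookkeeping that every length-three augmenting path contains exactly one $M$-edge (its middle edge), and that the two endpoints of this edge are exactly the internal $V(M)$-nodes of the path, so each bad $M$-edge contributes exactly $2$ bad nodes.
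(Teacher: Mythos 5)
Your proof is correct and follows essentially the same double-counting argument as the paper: charge each bad node to its unique $M$-edge, observe that every such edge is either not 3-augmentable (at most $88s\cdot\mu(K)$ of these, by \Cref{lem: manyLength3}) or the middle edge of an infrequent length-three augmenting path (at most $n_{if}\leq (7s+\delta/s)\cdot\mu(K)$, by \Cref{lem: manyNf}), and multiply by two. The only cosmetic difference is that you phrase the count via $2(\mu(K)-n_f)$ and the lower bound on $n_f$, whereas the paper sums the two bad-edge sources directly; these are interchangeable and both yield the stated bound.
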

\begin{proof}
First, by \Cref{lem: manyLength3}, there are at most $88s\cdot\mu(K)$ edges of $M$ that are not 3-augmentable. On the other hand, by \Cref{lem: manyNf} we have that $n_{if} \leq (7s + \frac{\delta}{s})\cdot\mu(K)$.
Combining both bounds on the number of edges containing a bad node, and noting that each such edge contributes two bad nodes, we have that $b \leq 2\cdot(88s+7s+\frac{\delta}{s})\cdot \mu(K) \leq (200s+\frac{\delta}{s})\cdot \mu(K)$, as claimed.
\end{proof}

We now lower bound the number of augmemting paths in $M_H\cup E_K$ and $M_{SH}\cup E_K$ in terms of numbers of frequent augmenting paths of the various types.

\begin{lem}\label{aug-lb-in-terms-of-n_i}
    Let $M_H,M_{SH},A(M_H)$ and $A(M_{SH})$ be as in \Cref{core-approx-lem}. 
    Let $n_i$ the number of frequent augmenting paths of type $i\in [4]$ (noting that $n_f=n_1+n_2+n_3+n_4$). 
    Then, 
    \begin{align*}
        A(M_{H}) & \geq |M_{H}| - b - n_f.\\
        A(M_{SH}) & \geq |M_{SH}| - b - (n_1 + n_2).
    \end{align*}
\end{lem}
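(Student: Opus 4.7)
The plan is to prove both inequalities via a constructive charging argument: I will exhibit a collection of vertex-disjoint augmenting paths in $K \cup M_H$ (resp.\ $K \cup M_{SH}$) with respect to $M$ of the claimed size, where each $M_H$-edge (resp.\ $M_{SH}$-edge) either lies in one of these paths or is charged to one of two kinds of ``blockers''---a bad node or a frequent augmenting path.

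First, I would note that for any $(u,v) \in M_H$ with $u \in V_H$, $v \in V_L$, and both $u, v$ good, Observation \ref{low-good} gives that $v$ is $M$-unmatched. If $u$ is also $M$-unmatched, then $(u,v)$ itself is a length-one augmenting path and I am done for this edge. Otherwise $u$ lies in a unique frequent augmenting path $v_1{-}v_2{-}v_3{-}v_4$ as either $v_2$ or $v_3$, and I try to produce the length-three augmenting path $v - u - u' - w$. The enabling fact is that for types $2, 3, 4$ the edge $(v_1, v_2) \in E_K$ is built into Definition~\ref{def: freq-aug}, so when $u = v_3$ the alternating path $v - v_3 - v_2 - v_1$ already lives in $K \cup M_H$ and terminates at the $M$-unmatched endpoint $v_1$; when $u = v_2$, an analogous three-step augmenting path requires $(v_3, v_4) \in M_H$, since $(v_3, v_4) \notin E_K$ by definition.

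The key structural observation that tightens the charging is that whenever a frequent path has \emph{both} $v_2$ and $v_3$ in $V(M_H)$ with both $V_L$-endpoints of their $M_H$-edges good, the component of $M \cup M_H$ through them is already the augmenting path $v_{L_1} - v_2 - v_3 - v_{L_2}$, absorbing \emph{both} $M_H$-edges at once. Consequently, for each frequent path, at most one $M_H$-edge with both endpoints good can fail to be incorporated into the constructed collection. Combined with the at most $b$ edges blocked by bad endpoints (each bad node being incident to at most one $M_H$-edge since $M_H$ is a matching), this yields the $|M_H| - b - n_f$ bound.

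For the sharper $A(M_{SH}) \geq |M_{SH}| - b - (n_1 + n_2)$ bound, I would exploit $M_{SH} \subseteq G[V_{SH}, V_L]$: an $M_{SH}$-edge's $V_{SH}$-endpoint can coincide with a frequent-path vertex only at $v_3$ (in any type) or at $v_2$ (in type~$1$ only, since $v_2 \notin V_{SH}$ in types $2, 3, 4$ by Definition~\ref{def: freq-aug}). For $u = v_3$ in types $2, 3, 4$, the $E_K$-edge $(v_1, v_2)$ again produces the length-three augmenting path; moreover, for types $3$ and $4$ we have $v_1 \in V_M$, which is disjoint from $V_L$ by Observation~\ref{obs: disjoint-classes}, so $v_1 \notin V(M_{SH})$ and no second $M_{SH}$-edge can compete for $v_1$. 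Hence only frequent paths of types $1$ and $2$ incur charges in this setting.

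The main obstacle will be making the vertex-disjointness bookkeeping precise, especially for frequent paths of types $1$ and $2$, where multiple constructed augmenting paths can compete for the common $M$-unmatched vertex $v_1 \in V_L$. Formally handling this conflict---greedily selecting the augmenting paths, and attributing at most a single charge to each frequent path by simultaneously resolving the ambiguity between the $M_H$-edges at $v_2$ and at $v_3$, together with any endpoint conflict at $v_1$---is the delicate step of the argument; once this is done, both bounds follow by summing over $M_H$ (resp.\ $M_{SH}$) edges.
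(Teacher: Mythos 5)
Your proposal tracks the paper's own proof closely in its overall shape: classify good $M_H$/$M_{SH}$-edges (those with neither endpoint bad), use Observation~\ref{low-good} to conclude the $V_L$-endpoint is $M$-free, exhibit a vertex-disjoint collection of augmenting paths absorbing most good edges, and charge each unabsorbed good edge to a frequent path of the relevant type. The two structural ingredients you isolate---that paired matches at $v_2$ and $v_3$ of one frequent path fuse into a single augmenting path, and that $e_1\in E_K$ lets a single match at $v_3$ reach the free endpoint $v_1$---are exactly the ones the paper uses, and you correctly identify which types contribute to the charge for $M_H$ (all of $n_f$) versus $M_{SH}$ (only $n_1+n_2$).

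The ``delicate step'' you flag at the end, however, is a complication you create for yourself rather than something inherent to the bound. The paper never forms the path $v-v_3-v_2-v_1$ when $v_1\in V_L$. For $M_H$, the $E_K$-detour through $v_1$ is invoked only for type-4 paths, where $v_1\in V_M$ is never an $M_H$-endpoint by Observation~\ref{obs: disjoint-classes}; for singly-matched type-1/2/3 paths one simply concedes the loss of one good edge, and this already caps the failures at $n_1+n_2+n_3\le n_f$. Likewise for $M_{SH}$: invoke the $E_K$-detour only for types 3 and 4 (again $v_1\in V_M$), and concede up to one loss each for types 1 and 2, giving $n_1+n_2$ failures total. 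With this routing, the constructed paths are vertex-disjoint automatically---each $V_H$ or $V_L$ vertex appears in at most one $M_H$/$M_{SH}$-edge because these are matchings, the frequent paths are pairwise disjoint components of $M\cup M^*$, and the only $v_1$-endpoints used lie in $V_M$ and so never touch $M_H$ or $M_{SH}$. You should note that the stronger claim you were aiming for (that a singly-matched $v_3$ in a type-2 path always yields an includable augmenting path) is precisely what you cannot guarantee, and also precisely what the lemma does not require: the stated inequalities allow one failure for every frequent path of the charged types. Dropping that over-ambition makes your proof close cleanly and match the paper's.
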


\begin{proof}
    Whenever discussing a frequent length-three augmenting path in this proof, we denote it by $v_1-v_2-v_3-v_4$, as in \Cref{def: freq-aug}.
    We say an edge in $M_H$ or $M_{SH}$ is \emph{good} if neither of its endpoints is bad. 
    Consequently, we have at least $|M_H|-b$ and $|M_{SH}|-b$ good edges in $M_H$ and $M_{SH}$, respectively.
    We say a node is \emph{free} if it is not matched in $M$. 
    By \Cref{low-good}, for any good edge $e\in M_H\cup M_{SH}$, the endpoint $v\in e\cap V_L$ must be free.
    We will show that these good edges form part of large sets of disjoint augmenting paths in $M_H$ and $M_{SH}$.
    
    We first consider the good edges of matching $M_H$, which by definition are between a free node in $V_L$ and a $V_H$ node which is either free or belongs to a frequent length-three augmenting path.
    Any edge in $M_H$ between a free $V_H$ node and a free $V_L$ node constitutes an augmenting path of length one. Also, if two $V_H$ nodes $v_2,v_3$ of the same frequent path of types (1), (2), or (3) match to free $V_L$ nodes, $u_2,u_3$, then $u_2-v_2-v_3-u_3$ is an augmenting path in $M_H\cup E_K$.
    Finally, if the unique $V_{SH}$ node $v_3$ of a frequent path $v_1-v_2-v_3-v_4$ of type (4) is matched to a free node $u$ in $V_L$, then $v_1-v_2-v_3-u$ is an augmenting path in $M_H\cup E_K$.
    (See \Cref{fig: lemma39-proof} for an illustration of some of the above cases.)
    We note further that the augmenting paths above are node disjoint, since no $V_H$ or $V_L$ node is matched twice, and therefore none of these nodes appear in two of these augmenting paths, and moreover, the node $v_1$ in paths of type (4) is not matched in $M_H$, since it belongs to $V_M$, which by \Cref{obs: disjoint-classes} is disjoint from $V_L$ and $V_H$.
    We conclude that any more than one good match per path of type (1), (2) or (3) and any other kind of good match all contribute one augmenting path to this disjoint set of augmenting paths.
    Therefore, the maximum number of disjoint augmenting paths in $M_H \cup E_K$ is, as claimed, at least
\begin{align*}
    A(M_H) \geq |M_H| - b - (n_1 + n_2 + n_3) \geq |M_H| - b - n_f.
\end{align*}

\begin{figure}[h]
  \centering

  \resizebox{1\textwidth}{!}{
    \input{fig.tikz}
  }

\caption{(Non-exhaustive) options for matches in $M_H$ and relation to $A(M_H)$.}
\subcaption{Kernel edges are solid and non-kernel edges are dashed, while edges in $M_H$ are curved and red. \\ 
	The number below each vertical line corresponds to the path type of the lowest four nodes in the line.\\
	The blue boxes show augmenting paths in $E_K\cup M_H$}
\label{fig: lemma39-proof}
\end{figure}

    We now consider the good edges of matching $M_{SH}$, only outlining the differences compared to the analysis for $M_{H}$. 
    For matching $M_{SH}$, the only good matches which are guaranteed to form an augmenting paths are: (i) a good match between a free $u\in V_{SH}$ and a free $v\in V_L$ node, for which $(u,v)$ is an augmenting path of length one, and (ii) matches between two $V_{SH}$ nodes $v_2,v_3$ of the same frequent path of type (1) to free $V_L$ nodes, $u_2,u_3$, for which $u_2-v_2-v_3-u_3$ is an augmenting path.
    By similar reasoning to our study of $M_{H}$, we therefore conclude that the maximum number of disjoint augmenting paths in $M_{SH} \cup E_K$ is, as claimed, at least
    \begin{align*}
        A(M_{SH}) & \geq |M_{SH}| - b - (n_1 + n_2).\qedhere
    \end{align*}
\end{proof}

Armed with \Cref{aug-lb-in-terms-of-n_i}, we are now ready to prove that the matchings $M_H$ and $M_{SH}$ form a good augmentation of $K$. That is, we can now prove \Cref{lem: augmentingKernel}.

\begin{proof}[Proof of \Cref{lem: augmentingKernel}]
By \Cref{lem: fewBadNodes}, the number of bad nodes is at most $b \leq (200s + \frac{2\delta}{s})\cdot \mu(K)$, and by \Cref{lem: manyNf}, the number of frequent length-three augmenting paths is at least $n_f\geq (1-100s-\frac{\delta}{s})\cdot \mu(K)$. However, in this proof, we will use the following looser bounds, 
$b \leq (200s + \frac{2\delta}{s} + 16\eps)\cdot \mu(K)$ and
$n_f\geq (1-100s-\frac{\delta}{s} - 8\eps)\cdot \mu(K)$. Our use of these looser bounds will prove useful in \Cref{sec:main-approx-degrees}. 

We lower bound $|M_H|$ and $|M_{SH}|$, by first lower bounding the maximum matching sizes in the bipartite graphs $G[V_H,V_L]$.
The matchings obtained by considering all $V_H\times V_L$ and $V_{SH}\times V_L$ edges in frequent length-three augmenting paths immediately imply that $\mu(G[V_H, V_L]) \geq n_f + n_1 + n_2$, and similarly $\mu(G[V_{SH}, V_L]) \geq n_f + n_1$. 
Here too, we will use the looser bounds $\mu(G[V_H, V_L]) \geq n_f + n_1 + n_2$, and similarly $\mu(G[V_{SH}, V_L]) \geq n_f + n_1$. 
On the other hand, $M_H$ and $M_{SH}$ are $(\frac{3}{2}+\eps)\leq \frac{1}{2/3-\eps}$-approximate matchings in these bipartite subgraphs. Combining the above with \Cref{aug-lb-in-terms-of-n_i}, we therefore get the following bounds on $A(M_H)$ and $A(M_{SH})$.
\begin{align*}
    A(M_H) & \geq \left(\frac{2}{3} - \eps\right)\cdot(n_1 + n_2 + n_f) - n_f - b 
    \geq \frac{2}{9}\cdot n_1 + \frac{2}{3}\cdot n_2 - \frac{1}{3}\cdot n_f - b
    ,
\end{align*}
\begin{align*}
    A(M_{SH}) & \geq \left(\frac{2}{3} - \eps\right)\cdot (n_1 + n_f) - n_1 - n_2 - b \geq \frac{2}{3}\cdot n_f - \frac{1}{3}\cdot n_1 - n_2 - b.
\end{align*}

Therefore, we have that
\begin{align*}
    A(M_H) + A(M_{SH}) & \geq A(M_H) + \frac{2}{3}\cdot A(M_{SH}) \geq \frac{1}{9}\cdot n_f - \frac{5}{3}\cdot b.
\end{align*}

Combining the above with lemmas \ref{lem: manyNf} and \ref{lem: fewBadNodes}, we obtain the desired inequality
\begin{align*}
    \max\left(A(M_H), A(M_{SH})\right) & \geq \frac{1}{2}\cdot (A(M_H) + A(M_{SH})) \\ 
    & \geq \frac{1}{2}\cdot \left(\frac{1}{9}\cdot \left(1-100s-\frac{\delta}{s} - 8\eps \right) - \frac{5}{3}\cdot \left(200s + \frac{2\delta}{s} + 16\eps\right)
    \right)\cdot \mu(K) \\
    & \geqstar (\delta + \eps)\cdot\mu(K). \qedhere
\end{align*}
\end{proof}

\subsection{Approximate degrees in the kernel}\label{sec:main-approx-degrees}
In the previous sections we show that the union of the kernel, a large matching in $G[V_{SH}, V_L]$, and a large matching in $G[V_H, V_L]$ contains a $(2-\delta)$-approximate matching of $G$. However, just \emph{maintaining subgraph} $G[V_{SH}, V_L]$ and $G[V_H, V_L]$ is costly and takes $\Omega(n)$ time to update in the dynamic setting as nodes join and leave these subgraphs when their degrees in $K$ changes. In order to have less abrupt changes, we use approximate degrees for classifying nodes.
\begin{Def}\label{def:approx-degs}
Let $\alpha \geq 0$. Elements of set $\{d^u(v)\mid u,v\in V\}$ are \emph{$\alpha$-approximate degrees} in $K$ if for every pair of nodes $u,v\in V$,  
$$d_K(v) - \alpha \leq d^u(v)\leq d_K(v) +\alpha.$$
\end{Def}

In \Cref{sec: approx-deg-maintenance}, we show how to maintain $\alpha$-approximate degrees in $K$ efficiently. 
For now, we will show that large matchings in bipartite subgraphs defined by $\eps^2d$-approximate degrees contain many disjoint augmenting paths, mirroring  \Cref{lem: augmentingKernel} for bipartite subgraphs based on exact degrees in $K$.
Specifically, we will consider the following bipartite subgraphs.

\begin{Def}\label{def: threshold}
    Let $\{d^u(v) \mid u,v\in V\}$ be $\eps^2 d$-approximate degrees in $K$.
    For $i \in [\frac{1}{\eps}]$, we denote by $B_H^{(i)}:=(V,E_H^{(i)})$ and $B_{SH}^{(i)}:=(V,E_{SH}^{(i)})$ bipartite subgraphs of $G$ induced by the following edge sets.
\begin{alignat*}{3}
E_H^{(i)} & = \left\{(u,v) \mid d^u(v) \geq d\cdot\left(1-2s - i \eps^2\right), \,\,d^v(u) \leq d \cdot \left(s+i \eps^2\right) \right\}.\\
E_{SH}^{(i)} & = \left\{(u,v) \mid d^u(v) \geq d \cdot (1-\eps-i\eps^2), \,\quad d^v(u) \leq d\cdot \left(s+i \eps^2\right) \right\}.   
\end{alignat*} 
\end{Def}

We recall that (informally) $\eps\ll s\ll 1$, and so the above subgraphs are indeed bipartite.

As in previous sections, we will drop the superscript $(i)$ when $i$ is clear from context.
Intuitively, subgraphs $B_H$ and $B_{SH}$ are similar to the bipartite subgraphs $G[V_H, V_L]$ and $G[V_{SH}, V_L]$, respectively. However, since we use approximate degrees, these subgraphs are not equal to each other, since a node's degree class can be incorrectly classified by some of its neighbors, as in the following definition.

\begin{Def}
Fix $i\in [\frac{1}{\eps}]$. Let $v\in S$, where $S\in \{V_{SH},V_H\setminus V_{SH},V_M,V_L\}$.  We say $v$ is \emph{misclassified for $i$} if some neighbor $u$ of $v$ has $d^u(v) = d'$, but if we had $d_K(v)=d'$, then we would have $v\not\in S$.
\end{Def}

Intuitively, misclassifications may result a large number of matches in $M_H$ and $M_{SH}$ which do not increase $A(M_H)$ and $A(M_{SH})$ compared to the exact degree case. That is, such misclassifications result in a lower number of augmenting paths in the exact degree case.
However, as we show, for some $i^*\in [\frac{1}{\eps}]$, such problematic miscalssifications are rare.

\begin{obs}\label{lem: threshold}
Let $\{d^u(v)\mid u,v\in V\}$ be $\eps^2d$-approximate degrees in kernel $K$. Then there exists an $i^* \in [\frac{1}{\eps}]$ such that at most $8\eps \cdot \mu(K)$ nodes in length-three augmenting paths are misclassified for $i^*$. 
\end{obs}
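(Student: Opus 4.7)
The plan is an averaging (pigeonhole) argument over the $1/\eps$ choices of $i$, reducing the claim to a per-node counting bound.

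\textbf{Step 1 (necessary condition for misclassification).} Since $|d^u(v) - d_K(v)| \leq \eps^2 d$ by the definition of $\eps^2 d$-approximate degrees, no neighbor $u$ of $v$ can witness a misclassification unless $d_K(v)$ itself lies within $\eps^2 d$ of some class-defining threshold at index $i$. By \Cref{def: nodes-class}, the relevant thresholds at index $i$ are the three values $d(1-\eps-i\eps^2)$, $d(1-2s-i\eps^2)$, and $d(s+i\eps^2)$. Hence if $v$ is misclassified for $i$, then $d_K(v)$ lies in a window of width $2\eps^2 d$ around at least one of these three threshold values.

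\textbf{Step 2 (per-node count).} Fix a node $v$. As $i$ ranges over $[1/\eps]$, each of the three thresholds slides by exactly $\eps^2 d$ per unit increase in $i$, while the corresponding window has width $2\eps^2 d$. Consequently, for each fixed threshold, $d_K(v)$ lies in its window for only a contiguous, constant-sized range of indices $i$. Moreover, the three threshold families live in disjoint parts of $[0,d]$ separated by constants of order $s \gg \eps$, whereas each threshold sweeps a range of total width only $\eps d \ll s d$; so at most one of the three thresholds can ever be near $d_K(v)$. A careful accounting of the interval arithmetic (comparing the window width $2\eps^2 d$ to the step size $\eps^2 d$) shows that $v$ is misclassified for at most $2$ values of $i$.

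\textbf{Step 3 (averaging).} The length-three augmenting paths in $M \cup M^*$ are vertex-disjoint and each uses a distinct edge of $M$, so there are at most $\mu(K)$ such paths, containing at most $4\mu(K)$ nodes in total. Combining this with the per-node bound of Step 2 gives
\[
\sum_{i=1}^{1/\eps}\,\bigl|\{v \text{ in a length-3 aug. path}\,:\,v\text{ misclassified for }i\}\bigr| \;\leq\; 2 \cdot 4\mu(K) \;=\; 8\mu(K).
\]
By pigeonhole, some $i^* \in [1/\eps]$ achieves at most $8\mu(K)/(1/\eps) = 8\eps\cdot\mu(K)$ misclassifications among nodes in length-three augmenting paths, as claimed. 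The only delicate part of the argument is squeezing the per-node count in Step 2 down to exactly $2$, which is an elementary but careful check of how many integer multiples of $\eps^2$ can lie in a length-$2\eps^2$ interval, given the open/closed boundaries of \Cref{def: nodes-class}. Every other step is a direct pigeonhole.
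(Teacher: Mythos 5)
Your proof is correct and follows essentially the same pigeonhole argument as the paper: bound the nodes in length-three augmenting paths by $4\mu(K)$, show each node is misclassified for at most two indices $i$ because consecutive thresholds differ by $\eps^2 d$ while the approximation error is at most $\eps^2 d$, and average over the $1/\eps$ choices of $i$. The additional detail you supply in Step 2 about why at most one of the three threshold families can be relevant and how the sliding window yields exactly two bad indices is a worthwhile clarification, but the underlying argument is identical to the paper's.
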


\begin{proof}
Since $M$ is a maximum matching in $K$, length-three augmenting paths contain at most $4\mu(K)$ nodes. Since Let $\{d^u(v)\mid u,v\in V\}$ are $(\eps^2d)$-approximate degrees, for a node $v$, we have that $\max_u \lvert d^u(v) - d_K(v)\rvert \leq \eps^2 d$. Hence, a node can be misclassified if the difference between its degree and the thresholds used to define a node's degree class is less than $\eps^2d$. Since the difference between thresholds in \Cref{def: nodes-class} for different $i$ and $j$ is at least $\eps^2d$, each such node can be misclassified for at most two such indices $i\in [\frac{1}{\eps}]$. Therefore, there exists an $i^*$ such that at most $8\mu(K)/(\frac{1}{\eps}) = 8\eps \cdot \mu(K)$ nodes are misclassified for $i^*$.
\end{proof}

We are now ready to generalize \Cref{lem: augmentingKernel} to the approximate degree setting.

\begin{lem}\label{lem: augmentingApproximateKernel}
    Let $i^*\in [\frac{1}{\eps}]$ be the index guaranteed by \Cref{lem: threshold}.
    Let $M_H$ and $M_{SH}$ be $(3/2 +  \eps)$-approximate matchings in $B_H^{(i^*)}$ and $B_{SH}^{(i^*)}$, respectively. Let $A(M_H)$ and $A(M_{SH})$ denote the number of disjoint augmenting paths induced by the union of $E_K$ and each of $M_H$ and $M_{SH}$, respectively. Then, 
    $$\max \left(A(M_H), A(M_{SH})\right) \geq (\delta + \eps)\cdot \mu(K).$$
\end{lem}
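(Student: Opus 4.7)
The plan is to mimic the proof of Lemma 3.7 essentially verbatim, with $i$ fixed to the index $i^*$ from Observation 3.9 throughout, and with misclassified nodes absorbed into the bad-node accounting. All the structural results of Sections 3.1--3.2 (Lemmas 3.1, 3.4, 3.5, 3.6 and Observations 3.2, 3.3) are statements about the kernel $K$ and a maximum matching $M \subseteq K$ for the exact-degree classes $V_{SH}^{(i^*)}, V_H^{(i^*)}, V_M^{(i^*)}, V_L^{(i^*)}$; none of them reference the bipartite subgraphs $G[V_H,V_L]$ or $G[V_{SH},V_L]$. So I would inherit them unchanged, obtaining in particular the same counts $b \leq (200s + 2\delta/s)\mu(K)$ and $n_f \geq (1 - 100s - \delta/s)\mu(K)$, partitioned into types $n_1,\dots,n_4$.

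The only thing that changes is how edges of the approximation matchings $M_H$ and $M_{SH}$ relate to frequent augmenting paths. I would call a node \emph{enlarged-bad} if it is either bad in the sense of Definition 3.8 or misclassified for $i^*$. By Observation 3.9, the number of enlarged-bad nodes is at most $\tilde b \le b + 8\eps\,\mu(K)$. The crucial observation is that for a match $(u,v) \in M_H$ (respectively $M_{SH}$) whose endpoints are both not enlarged-bad, the thresholds defining $B_H^{(i^*)}$ (respectively $B_{SH}^{(i^*)}$) in Definition 3.12, combined with the non-misclassification of $u,v$, force the true memberships $u \in V_H^{(i^*)}$, $v \in V_L^{(i^*)}$ (respectively $u \in V_{SH}^{(i^*)}$, $v \in V_L^{(i^*)}$). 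Consequently the good-edge $\to$ augmenting-path construction used in Lemma 3.7 goes through verbatim, yielding
\begin{align*}
A(M_H) &\geq |M_H| - \tilde b - n_f, \\
A(M_{SH}) &\geq |M_{SH}| - \tilde b - (n_1 + n_2).
\end{align*}

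Next I would lower bound the two matching sizes via the $(3/2+\eps)$-approximation guarantees. For each frequent path $v_1-v_2-v_3-v_4$ whose nodes $v_3,v_4$ are not misclassified, the edge $(v_3,v_4)$ lies in both $B_H^{(i^*)}$ and $B_{SH}^{(i^*)}$; in types 1,2 the edge $(v_1,v_2)$ additionally lies in $B_H^{(i^*)}$ (and in $B_{SH}^{(i^*)}$ for type 1) when its endpoints are also not misclassified. Since Observation 3.9 caps the total number of misclassified nodes in length-three augmenting paths by $8\eps\,\mu(K)$, I obtain
\begin{align*}
\mu(B_H^{(i^*)}) &\geq n_f + n_1 + n_2 - 8\eps\,\mu(K), \\
\mu(B_{SH}^{(i^*)}) &\geq n_f + n_1 - 8\eps\,\mu(K),
\end{align*}
so $|M_H| \geq (2/3-\eps)(n_f+n_1+n_2) - 8\eps\,\mu(K)$ and similarly for $|M_{SH}|$.

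Finally I would combine the pieces exactly as in Lemma 3.7. The two $8\eps\,\mu(K)$ matching-size losses together with the $8\eps\,\mu(K)$ misclassification correction on $\tilde b$ can be absorbed, on each side, into the single substitution $b \mapsto \tilde b + 8\eps\,\mu(K) \leq (200s + 2\delta/s + 16\eps)\mu(K)$, with $n_f$ unchanged at $n_f \geq (1-100s-\delta/s)\mu(K) \geq (1-100s-\delta/s - 8\eps)\mu(K)$. These are precisely the looser bounds already used in the proof of Lemma 3.7, so the identical algebraic derivation yields $A(M_H) + A(M_{SH}) \geq \tfrac{1}{9}n_f - \tfrac{5}{3}\tilde b$ and then the numeric calculation closing Lemma 3.7 gives $\max(A(M_H), A(M_{SH})) \geq (\delta+\eps)\mu(K)$, as required. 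The main (minor) obstacle is ensuring the accounting is consistent on both sides—the matching-size loss and the bad-node increase—but the authors of Lemma 3.7 anticipated this by stating its proof with the looser bounds that exactly absorb these two $8\eps\,\mu(K)$ terms.
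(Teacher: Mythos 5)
Your proof is correct and follows essentially the same approach as the paper's, with a small difference in bookkeeping: the paper redefines ``frequent'' paths (to exclude any path containing a misclassified node) and ``bad'' nodes accordingly, obtaining $n'_f \geq n_f - 8\eps\mu(K)$ and $b' \leq b + 16\eps\mu(K)$, and then invokes the loose-bound version of \Cref{lem: augmentingKernel} verbatim with the tagged quantities; you instead keep the original $n_f, n_i$ fixed, introduce ``enlarged-bad'' nodes (bad or misclassified) with $\tilde b \leq b + 8\eps\mu(K)$, and separately charge the $8\eps\mu(K)$ loss in $\mu(B_H^{(i^*)})$ and $\mu(B_{SH}^{(i^*)})$, folding everything into the same loose bound $\tilde b + 8\eps\mu(K) \leq (200s + 2\delta/s + 16\eps)\mu(K)$. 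Both tracks arrive at the identical algebra, and your explicit argument that a non-misclassified low-/high-degree endpoint of an edge in $B_H^{(i^*)}$ (resp.\ $B_{SH}^{(i^*)}$) truly lies in $V_L$ (resp.\ $V_H$ or $V_{SH}$) is the right justification for why the Lemma \ref{aug-lb-in-terms-of-n_i} construction carries over -- the paper leaves this implicit.
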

\begin{proof}
We slightly abuse notation, and say a length-three augmenting path is frequent if it is of types (1), (2), (3), or (4) as in \Cref{def: freq-aug} \textbf{and} none of its nodes are misclassified.
By \Cref{lem: manyNf} and \Cref{lem: threshold}, the number of frequent paths (under this new definition) is at least 
$n'_f \geq n_f - 8\eps\cdot \mu(K) \geq (1-100s-\frac{\delta}{s} - 8\eps)\cdot \mu(K)$.
Now that with this new definition, a node $v$ is \emph{bad} if $v$ belongs to both $V(M)$ and to an infrequent augmenting path, where infrequent paths also include paths with misclassified nodes. 
Therefore, by observations \ref{lem: fewBadNodes} and \ref{lem: threshold} the number of bad nodes (under this new definition) is at most $b' \leq b+ 16 \eps\cdot \mu(K) \leq (200s + \frac{2\delta}{s} + 16\eps)\cdot \mu(K)$.
The maximum matching size in $B_H$ and $B_{SH}$ are the same as their counterparts without misclassifications, only replacig $n_f$ and $n_i$ for $i\in [4]$ by their tagged counterparts. That is, we have $\mu(B_{SH}) \geq n'_f + n'_1$ and $\mu(B_{H})\geq n'_f + n'_1 + n'_2$.
The above bounds on the number of bad nodes, frequent paths, and maximum matching sizes in $B_H$ and $B_{SH}$ are precisely the loose bounds in the proof of the desired inequality in \Cref{lem: augmentingKernel}. 
Repeating the same final calculations as in that proof, this lemma follows.
\end{proof}

\Cref{lem: augmentingApproximateKernel} suggests a method for computing a sparse augmentation for $K$: compute large matchings in each $B^{(i)}_H$ and $B^{(i)}_{SH}$. Formally, we have the following.

\begin{thm}\label{core-approx-lem}
    For $i \in [\frac{1}{\eps}]$, let $M_H^{(i)}$ and $M_{SH}^{(i)}$ be a $(3/2 + \eps)$-approximate matching in $B_H^{(i)}$ and $B_{SH}^{(i)}$, respectively. The the edge set $A = \bigcup_{i \in [\frac{1}{\eps}]} \left(M_H^{(i)} \cup M_{SH}^{(i)} \right)$ augments $K$ as follows. $$\mu(G)\leq (2-\delta)\cdot \mu(G[E_K \cup A]).$$
\end{thm}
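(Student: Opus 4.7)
My plan is a short case split, reducing the theorem to \Cref{lem: augmentingApproximateKernel}. Note that $K$ is a subgraph of $G[E_K\cup A]$, so $\mu(K)\leq \mu(G[E_K\cup A])$ unconditionally.

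If $\mu(G)\leq (2-\delta)\cdot \mu(K)$, then the conclusion $\mu(G)\leq (2-\delta)\cdot \mu(G[E_K\cup A])$ follows immediately from the above monotonicity. Otherwise $\mu(G) > (2-\delta)\cdot \mu(K)$, which is precisely the standing assumption of \Cref{sec:kernels++}, so all earlier structural lemmas apply. In particular, \Cref{lem: threshold} supplies an index $i^*\in [1/\eps]$ for which the misclassification budget is under control, and \Cref{lem: augmentingApproximateKernel} applied at this $i^*$ gives
$$\max\bigl(A(M_H^{(i^*)}),\, A(M_{SH}^{(i^*)})\bigr) \;\geq\; (\delta+\eps)\cdot \mu(K).$$
Since by construction $A\supseteq M_H^{(i^*)}\cup M_{SH}^{(i^*)}$, this yields at least $(\delta+\eps)\cdot \mu(K)$ vertex-disjoint augmenting paths for $M$ inside $E_K\cup A$. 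Augmenting $M$ along all of them produces a matching in $G[E_K\cup A]$ of size $|M|+(\delta+\eps)\mu(K)=(1+\delta+\eps)\cdot \mu(K)$, hence $\mu(G[E_K\cup A])\geq (1+\delta+\eps)\cdot \mu(K)$.

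To finish I would chain this with the basic kernel lemma (\Cref{kernel:basic}), which gives $\mu(G)\leq (2+8\eps)\cdot \mu(K)$, reducing the theorem to the single arithmetic inequality
$$2+8\eps \;\leqstar\; (2-\delta)(1+\delta+\eps),$$
which upon expansion becomes $\delta \geq 6\eps + \delta\eps + \delta^2$ and holds comfortably for the chosen parameters $\eps = 2\times 10^{-8}$ and $\delta = 2\times 10^{-6}$ (indeed whenever $\eps/\delta$ and $\delta$ are small enough).

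There is no real obstacle beyond this final verification: the structural content, namely the bookkeeping on augmenting-path types, misclassified nodes, and bipartite matching approximation losses, has all been done inside \Cref{lem: augmentingApproximateKernel}. This theorem simply packages that per-$i^*$ augmenting-paths guarantee---whose existence is ensured by including \emph{every} $M_H^{(i)}\cup M_{SH}^{(i)}$ in $A$---as the desired matching-size lower bound on $G[E_K\cup A]$, together with the trivial complementary case when the kernel is already $(2-\delta)$-approximate.
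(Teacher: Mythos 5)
Your proof is correct and follows essentially the same route as the paper: invoke \Cref{lem: augmentingApproximateKernel} to get $(\delta+\eps)\mu(K)$ disjoint augmenting paths at the index $i^*$, note that these survive inside $E_K\cup A$ since $A$ contains $M_H^{(i^*)}\cup M_{SH}^{(i^*)}$, conclude $\mu(G[E_K\cup A])\geq (1+\delta+\eps)\mu(K)$, and close with \Cref{kernel:basic} and the same arithmetic inequality (your $2+8\eps \leqstar (2-\delta)(1+\delta+\eps)$ is exactly the paper's $\frac{1+\delta+\eps}{2+8\eps}\geqstar\frac{1}{2-\delta}$). Your explicit case split on whether $\mu(G)\leq(2-\delta)\mu(K)$ is a welcome bit of care: the paper handles that case only implicitly, via the standing assumption announced at the start of \Cref{sec:kernels++}.
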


\begin{proof}
By \Cref{lem: augmentingApproximateKernel}, there exists an $i^* \in [\frac{1}{\eps}]$ such that $G[E_K\cup M_H^{(i^*)}\cup M_{SH}^{(i^*)}]$ has at least $(\delta + \eps)\cdot\mu(K)$ augmenting paths with regards to $M$. Therefore, since $\mu(G) \leq (2+8\eps)\cdot \mu(K)$, by \Cref{kernel:basic}, and since $\delta\leq 1$, we obtain the desired inequality.
\begin{align*}
    \mu(G[E_K \cup A]) & \geq \mu(G[E_K\cup M_H^{(i^*)}\cup M_{SH}^{(i^*)}]) \\ 
    & \geq \mu(K) + (\delta + \eps)\cdot \mu(K) \geq \frac{1 + \delta + \eps}{2+8\eps}\cdot \mu(G) \geqstar \frac{1}{2-\delta}\cdot \mu(G). \qedhere
\end{align*}
\end{proof}
In \Cref{sec:together} we show how to efficiently make use of the above structural characterizations of kernels and their augmentations to obtain the dynamic algorithm of our main result, \Cref{thm:det-beat-2}. But first, we require an efficient algorithm for maintaining kernels, which we present in the following section.

\section{Kernel Maintenance}\label{sec:kernel-maintenance}

In this section we describe a key component of our results: algorithms for kernel maintenance. 

\citet{bhattacharya2018deterministic}, in their paper introducing kernels, provided two algorithms for kernel maintenance, with amortized update times $O(\sqrt[3]{m}/\eps^2)$ and $O(\sqrt{n}/\eps)$, respectively.\footnote{Unlike in the preceding section, in this section $\eps>0$ will not be a fixed constant.}
They further showed how to de-amortize the former algorithm. Unfortunately for our needs, this algorithm maintains a kernel of maximum degree $d=\Omega(\sqrt[3]{m})$, and the number of changes to the kernel in this algorithm is $c=\Omega(\sqrt[3]{m})$. Consequently, maintaining near-maximum matchings in this kernel $K$ would require $c\cdot d = \Omega(m^{2/3})$ update time, which is super-linear in $n$ for moderately dense graphs.

In this section we show that the second algorithm of \cite{bhattacharya2018deterministic} can also be de-amortized (and sped up for sparse graphs). Crucially for our needs, the obtained algorithm will only cause $c=O(1)$ edge changes to the maintained kernel per edge update in $G$. This sub-linear update time and small number of edge changes in these kernels will prove useful for all of our results. We now turn to describing this kernel maintenance algorithm.

\subsection{The Basic Kernel Maintenance Algorithm}\label{sec:kernel-maintenance-algo-simple}

Next, we present a slightly simplified version of our kernel maintenance algorithm, with running time $O(n/(\eps d))$. 
For simplicity, we defer the details of our $O(\sqrt{m}/(\eps d))$ time algorithm to \Cref{sec:kernel-in-sqrtm-time}.

Our algorithm maintains the edge set $E_K$ using arrays and lists in a way to allow for insertion and deletions of edges in constant time. (See \cite{cormen2009introduction}.)
In addition, it maintains $E$ in a similar structure, with the refinement that for each vertex $v\in V$, its list of neighbors in $G$, denoted by $N_G(v)$, is stored in a bidirectional circular list, and we maintain a pointer $ptr(v)$ in this list.  

Our algorithm proceeds as follows: For insertion of an edge, we simply add it to $K$ if both endpoints have degree below $d$ in $K$ (i.e., insertion does not violate Property \ref{p1:bounded-deg}). Else, we do not change $K$. Regardless, for both endpoints of $e$, we insert the second endpoint into their linked list $N_G(v)$, \emph{before} the node $ptr(v)$ points to.
See the pseudocode in \Cref{alg:kernel-insert}.
Removal is the more intricate procedure. For this procedure, we maintain, for each vertex $v$, a pointer $ptr(v)$ to its list of edges in $G$, stored in circular order. When $v$ loses an edge in $K$ due to a removal of an edge in $G$, we move $ptr(v)$ through the next neighbors of $v$ in $G$, until we encounter a neighbor $w$ of $v$ in $G$ with $d_K(w)<d$. In this case, we add $(v,w)$ to $K$ and terminate the procedure. Otherwise, the procedure terminates after $n/(\eps d)$ neighbors of $v$ are inspected.
See the pseudocode in \Cref{alg:kernel-remove}.

    \begin{algorithm}[h] 
	\caption{\textsc{kernel-update}:\inserttwod$(e)$}
	\label{alg:kernel-insert}
	\begin{algorithmic}[1]
		\smallskip
		\If{$\max_{v\in e} d_K(V) < d$}
		\State $E_K\gets E_K\cup \{e\}$
		\EndIf
		\For{$v\in e$}
		\State insert $u\in e\setminus\{v\}$ into $N_G(v)$ before the element $ptr(v)$ is pointing to
		\EndFor
	\end{algorithmic}
\end{algorithm}

\begin{algorithm}[h] 
	\caption{\textsc{kernel-update}:\remove$(e)$}
	\label{alg:kernel-remove}
	\begin{algorithmic}[1]
		\smallskip
		\For{$v\in e$}
		\State remove $u\in e\setminus \{v\}$ from $N_G(v)$
		\EndFor
		\If{$e\in E_K$}
		\State{$E_K\gets E_K\setminus \{e\}$}
		\For{$v\in e$}
		\For{$i=1,2,\dots,n/(\eps d)$} \label{line:kernel-remove-loop-start}
		\State advance $ptr(v)$ in $N_G(v)$ \Comment{advance to next element in cyclic list of $N_G(v)$}
		\State $w\gets ptr(v).element$ \Comment{get node in $N_G(v)$ pointed to by $ptr(v)$}
		\If{$d_K(w)<d$}
		\State $E_K\gets E_K\cup \{(v,w)\}$\label{line:kernel-remove-fix}
		\State \Return \label{line:kernel-remove-loop-end}
		\EndIf
		\EndFor
		\EndFor
		\EndIf
	\end{algorithmic}
\end{algorithm}    

The intuition behind \Cref{alg:kernel-remove} is that for any edge not added to the kernel upon insertion, at least one of its endpoints $v$ must  have degree $d_K(v)=d$. 
The point of iterating through $n/(\eps d)$ neighbors of a vertex who lost an edge in $E_K$ is to guarantee that by the time $\eps d$ edges of this vertex have been deleted from $E_K$ without a replacement edge of $v$ being found in \Cref{line:kernel-remove-fix}, this vertex $v$ must have considered each of its $n$ (or fewer) neighbors $u$. When this happened, either $(u,v)$ was added to $E_K$ (and never removed), or $d_K(v)\geq d$. In either case, Property \ref{p2:satisfied-edges} is satisfied for edge $(u,v)$.

Formalizing the above discussion, we now analyze the above algorithm.

\begin{restatable}{lem}{kernelalgosimple}\label{kernel-maintenance-simple}
    For any $\eps>0$ and $d$, the algorithm handling edge insertions and deletions using algorithms \ref{alg:kernel-insert} and \ref{alg:kernel-remove} is a deterministic dynamic algorithm for maintaining an $(\eps,d)$-kernel using $O(n/(\eps d))$ worst-case update time and causing $O(1)$ changes to the kernel per update.
\end{restatable}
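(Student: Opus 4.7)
The plan has three components: the worst-case update time, the per-update recourse, and the correctness of both kernel properties.

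The first two components follow by direct inspection of the pseudocode. \textsc{Insert} does $O(1)$ list operations and modifies $E_K$ by at most one edge. \textsc{Remove} removes at most one edge from $E_K$ and, per endpoint, runs a loop of at most $n/(\eps d)$ constant-time iterations that may add at most one edge to $E_K$. Hence the worst-case update time is $O(n/(\eps d))$ and the recourse is at most $3=O(1)$ per update. Property \ref{p1:bounded-deg} is equally direct, since both algorithms add an edge $(u,w)$ to $E_K$ only after certifying $d_K(u),d_K(w)<d$, so no $E_K$-degree ever exceeds $d$.

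Property \ref{p2:satisfied-edges} is the substantive part, and I would prove it by induction on the update sequence as the invariant that after every update, every $(u,v)\in E\setminus E_K$ satisfies $\max(d_K(u),d_K(v))\geq d(1-\eps)$. The insertion step is short: if the new edge enters $E_K$, no non-kernel edge is newly created and $E_K$-degrees only grow; otherwise, the insertion condition failed, so at that moment at least one endpoint had $E_K$-degree $d \geq d(1-\eps)$, and all pre-existing non-kernel edges are untouched.

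The deletion step is the main technical work, and I would base it on the following pointer-accounting lemma: for every vertex $v$, each deletion of a $K$-edge incident to $v$ whose inner loop terminates without firing \Cref{line:kernel-remove-fix} advances $ptr(v)$ by exactly $n/(\eps d)$ positions in the cyclic list $N_G(v)$ and decreases $d_K(v)$ by one on net; a successful inner-loop iteration preserves $d_K(v)$. Consequently, starting from the most recent moment when $d_K(v)\geq d$, reaching $d_K(v) < d(1-\eps)$ requires at least $\eps d$ failed inner-loop executions, and thus at least $n$ advances of $ptr(v)$. Since $|N_G(v)|\leq n-1$ at all times and \textsc{Insert} always places a new element immediately before $ptr(v)$, such a full cycle of the pointer visits every current neighbor $w$ of $v$. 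At each such visit, either $d_K(w)<d$ held and $(v,w)$ was placed in $E_K$ at the visit, or $d_K(w)\geq d$ held and $ptr(v)$ moved on.

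The main obstacle is bridging from these past certifications to the current state of $w$. I would close the gap by applying the inductive invariant symmetrically to $w$: any subsequent removal of $(v,w)$ from $E_K$ itself triggers fresh pointer activity at $v$ (so one can always locate a ``last visit'' of $ptr(v)$ to $w$ that postdates any such removal), while any subsequent drop of $d_K(w)$ by more than $\eps d$ is governed by the induction hypothesis applied to $w$ at its own prior moment of high degree, yielding $(v,w)\in E_K$ now or $d_K(w)\geq d(1-\eps)$ now. Collecting these cases closes the induction and establishes Property \ref{p2:satisfied-edges}, which together with the runtime and recourse bounds yields the lemma.
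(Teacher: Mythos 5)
Your pointer-accounting lemma matches the paper's core counting argument exactly (at least $\eps d$ failed replacement searches, hence at least $n$ advances of $ptr(v)$, hence a full lap of the cyclic list $N_G(v)$), and you correctly flag the nontrivial step: converting a past certification at a visit time $\tau$ into a statement about the current time $t$. But the bridge you propose does not hold up. The first clause, ``any subsequent removal of $(v,w)$ from $E_K$ itself triggers fresh pointer activity at $v$,'' is a red herring: the algorithm removes an edge from $E_K$ only when that edge is deleted from $G$, at which point $(v,w)\notin E$ and Property~\ref{p2:satisfied-edges} is vacuous for it, so there is no state to re-certify. The second clause, ``the induction hypothesis applied to $w$ at its own prior moment of high degree,'' conflates two different things: the invariant at $w$'s full-degree moment is trivially satisfied for every edge of $w$ (since $d_K(w)=d\geq d(1-\eps)$ there) and says nothing about $d_K(w)$ at time $t$, while the invariant at time $t-1$ controls degrees only up to $\pm 1$ per step and therefore does not preserve the $d(1-\eps)$ threshold across a single update. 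What you actually need in your second case is a fresh, recursive application of the pointer lemma to $w$---not the invariant---and your write-up neither articulates this nor addresses why the resulting ping-pong between $v$ and $w$ terminates.

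The paper avoids all of this with one choice made up front, and it proves Property~\ref{p2:satisfied-edges} directly for each edge $e$ at each time $t$ rather than by a step-by-step induction. Since $e$ was not added to $E_K$ at its insertion, some endpoint had $d_K=d$ at that moment; take $v\in e$ to be the endpoint whose \emph{latest} full-degree moment $t'<t$ is the later of the two. Then the other endpoint $u$ satisfies $d_K(u)<d$ throughout $(t',t]$, because its own latest full-degree moment is at most $t'$. Now if $d_K(v)<d(1-\eps)$ at $t$, your pointer lemma shows that $ptr(v)$ visited $u$ at some $\tau\in(t',t]$; at that moment $d_K(u)<d$, so \Cref{line:kernel-remove-fix} fired and added $e$ to $E_K$, and since kernel edges are only removed when deleted from $G$, $e\in E_K$ at time $t$ and Property~\ref{p2:satisfied-edges} holds vacuously. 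This choice of $v$ reduces your ``bridging'' to a one-line observation and is the ingredient missing from your proposal.
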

\begin{proof}
    The running time and number of changes to $E_K$ per update are immediate from the pseucodode. Property \ref{p1:bounded-deg} is also apparent, as we never add an edge to a vertex $v$ which already has $d_K(v)=d$. The only non-trivial property is Property \ref{p2:satisfied-edges}, which we now prove.
    
    Fix an edge $e\in E$ and time point $t$. 
    If $e$ was added to $E_K$ upon insertion of $e$, then, since edges are only removed from $E_K$ when they are removed from $G$, edge $e$ is still in $E_K$ at time $t$, and so it vacuously satisfies Property \ref{p2:satisfied-edges}. 
    Suppose instead $e$ was not added to $E_K$ upon insertion.
    Therefore, by \Cref{alg:kernel-insert}, at that time we had that $\max_{v\in e} d_K(v)=d$. Let $v\in e$ be such that $d_K(v)=d$ at the latest time $t'<t$. Now, if $d_K(v) < d(1-\eps)$ at time $t$, vertex $v$ must have had at least $\eps d$ edges deleted from $E_K$ without a replacement edge found in lines \ref{line:kernel-remove-loop-start}-\ref{line:kernel-remove-loop-end}. But then, during these $\eps d$ applications of the $n/(\eps d)$ iterations of lines \ref{line:kernel-remove-loop-start}-\ref{line:kernel-remove-loop-end} for vertex $v$, we  inspect all (at most) $n$ neighbors of $v$ which were not added after time $t'$, and in particular we have had in one of these iterations  $w=u$, where $u$ is the second endpoint of $e=(v,u)$. Since we chose $v$ to be the vertex in $e$ which had $d_K(v)$ at the last point $t'<t$, we have that $d_K(w)<d$, and so we add $e$ to $E_K$ in \Cref{line:kernel-remove-fix}. Consequently, again relying on edges only being removed from $E_K$ when they are removed from $G$, we have that edge $e$ is in $E_K$ at time $t$, and therefore vacuously satisfies Property \ref{p2:satisfied-edges} at time $t$.
\end{proof}

    In \Cref{sec:kernel-in-sqrtm-time}, we refine the result of \Cref{kernel-maintenance-simple}, by relying on the dynamic edge orientation algorithm of  \cite{bernstein2015fully}.
    This allows us to speed up our kernel algorithm for sparse graphs, replacing the $n$ factor in the running time of \Cref{kernel-maintenance-simple} by $\sqrt{m}\leq n$. From this, we obtain the following lemma.
\begin{restatable}{lem}{kernelalgo}\label{kernel-maintenance}
For any $\eps>0$ and $d$, there exists a deterministic dynamic algorithm for maintaining an $(\eps,d)$-kernel with $O(\sqrt{m}/(\eps d))$ worst-case update time, with $O(1)$ changes to the kernel per update.
\end{restatable}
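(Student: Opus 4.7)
My plan is to modify the algorithm of Lemma~\ref{kernel-maintenance-simple} by maintaining a dynamic edge orientation of $G$ with worst-case out-degree $\beta = O(\sqrt{m})$, using the orientation algorithm of \cite{bernstein2015fully} (which also guarantees $O(1)$ worst-case reorientations per update). For each vertex $v$, I would replace the cyclic list $N_G(v)$ of \emph{all} neighbors with a cyclic list $L_v$ consisting only of $v$'s \emph{out-neighbors} under this orientation, equipped with a pointer $ptr(v)$. On each edge insertion, the new edge is oriented per the Bernstein--Stein rule and inserted into the owner's out-list just before $ptr(\cdot)$, mirroring Algorithm~\ref{alg:kernel-insert}; each of the $O(1)$ reorientations per update is carried out by a constant number of list operations (remove from the old owner's list, insert into the new owner's list just before its pointer).

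The replacement loop inside Algorithm~\ref{alg:kernel-remove} is modified so that, upon removal of a kernel edge $(u,v)$, each of $u$ and $v$ advances its pointer through only $\beta/(\eps d) = O(\sqrt{m}/(\eps d))$ successive elements of its own out-list, seeking an out-neighbor $w$ with $d_K(w) < d$, and on finding one adding the replacement edge to $E_K$ as before. Since out-degrees are bounded by $\beta$, any $\eps d$ consecutive invocations at a single vertex cycle through its entire out-list. The worst-case update time is therefore $O(\sqrt{m}/(\eps d))$, and the number of edge changes to $E_K$ per input update remains $O(1)$, exactly as in Lemma~\ref{kernel-maintenance-simple}.

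Correctness follows the template of the proof of Lemma~\ref{kernel-maintenance-simple}, but the main obstacle is the asymmetry of the orientation: if an edge $e'=(x,y)\in E\setminus E_K$ is oriented $x\to y$, then $y$ lies in $L_x$ but $x$ does not lie in $L_y$, so a scan by $y$ alone may miss $x$. To resolve this, observe that a violation of Property~\ref{p2:satisfied-edges} at $e'$ requires \emph{both} $x$ and $y$ to have kernel-degree below $d(1-\eps)$. Fix the endpoint whose kernel-degree was most recently equal to $d$, say $x$, at some time $t' < t$; the $\eps d$ kernel-edge deletions at $x$ that dropped $d_K(x)$ below $d(1-\eps)$ trigger a full traversal of $L_x$, which at the relevant moment contains $y$ (this is where the argument is anchored: by convention on insertions and reorientations, the edge $e'$ is placed just before $ptr(x)$ so that its next inspection is within one full lap). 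At that inspection $y$ is already low-degree, so $(x,y)$ is re-added to $E_K$ by Line~\ref{line:kernel-remove-fix}, contradicting $e'\notin E_K$ at time $t$.

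The remaining delicate point is handling reorientations of $e'$ during the critical window from $t'$ to $t$: since there are only $O(1)$ reorientations per input update and each such reorientation reinserts $e'$ into its new owner's out-list immediately before that owner's pointer, the edge $e'$ is still visited within the next $O(\sqrt{m}/(\eps d))$ steps at the new owner, and the degree-monotonicity argument can be reapplied. Combining this with the $O(1)$-recourse bound on kernel changes gives the stated bounds.
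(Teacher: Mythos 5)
Your modification of the pointer-scan to operate only over out-neighbors is the right starting point, and the time and recourse accounting is fine. But the argument for Property~\ref{p2:satisfied-edges} has a genuine gap precisely where you flag the ``asymmetry of the orientation''---and the fix you propose does not close it.

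Concretely, suppose $e'=(x,y)\notin E_K$ is oriented $x\to y$, so that $y\in L_x$ but $x\notin L_y$. Your argument designates the endpoint whose $K$-degree most recently equaled $d$ and then asserts that this endpoint's full lap of its out-list encounters the other endpoint. That conclusion only holds when the designated endpoint is the \emph{owner} $x$. The orientation is produced by the Bernstein--Stein algorithm purely to keep out-degree $O(\sqrt{m})$; it carries no correlation with kernel-degree history. So it can perfectly well be that $y$ (the head, not the owner) is the endpoint that last had $d_K(y)=d$, while $x$'s $K$-degree dropped below $d(1-\eps)$ earlier and $x$ subsequently undergoes no $E_K$-deletions at all. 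Then $x$ never rescans $L_x$ and so never revisits $y$, and $y$'s lap visits only out-neighbors, which do not include $x$. At the moment both degrees sit below $d(1-\eps)$, edge $e'$ violates Property~\ref{p2:satisfied-edges} and the invariant is lost. The ``anchor'' you describe (placing a freshly inserted or reoriented edge immediately before the owner's pointer) guarantees the owner's \emph{next} lap sees the edge, but does not cause a lap to be triggered; without fresh $E_K$-deletions at $x$, no lap occurs.

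What is missing is a mechanism by which $y$ can learn about a low-degree \emph{in}-neighbor $x$ without scanning in-edges. This is exactly the purpose of the ``pool'' data structure in the paper's proof: each vertex $v$ maintains a set $p_v$ of in-neighbors $w$ with $d_K(w)<d$, with the invariant that a vertex $w$ whose $K$-degree sits $\eps d$ below $d$ belongs to the pools of \emph{all} of its out-neighbors; $w$ pays $O(\sqrt{m}/(\eps d))$ per unit degree drop to propagate itself into those pools. When $y$ loses a kernel edge, it first checks $p_y$, and if nonempty adds an in-edge to $E_K$ instead of scanning. This guarantees that once $d_K(y)=d$, the degree of $y$ never drops (it is immediately restored from the pool) as long as any in-neighbor of low degree exists, which is precisely what saves Property~\ref{p2:satisfied-edges} in the case your argument misses. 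Without some analogous device for incoming edges, the out-neighbor-only scan cannot be made correct.
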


\subsection{Applications: Fast $(2+\eps)$-Approximate (Weighted) Matching}
Since $(\eps,d)$-kernels have degree at most $d$ (by property \ref{p1:bounded-deg}), we can maintain near-maximum matchings in dynamic kernels in time $O_\eps(d)$ per edge change in kernels, by \Cref{bounded-deg-algo}. Combined with \Cref{kernel-maintenance} and \Cref{kernel:basic}, our deterministic kernel maintenance algorithms therefore yield the first deterministic $(2+\eps)$-approximate dynamic matching algorithm with worst-case sublinear update time. Specifically, we obtain the following.

\begin{thm}\label{thm:unweighted-kernel}
     There exists a deterministic $(2+\eps)$-approximate dynamic matching algorithm with worst-case update time $O(\sqrt[4]{m}/\sqrt{\eps}) = O(\sqrt{n}/\sqrt{\eps})$. 
\end{thm}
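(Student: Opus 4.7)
The plan is to compose three ingredients already established in the paper: the kernel maintenance algorithm of \Cref{kernel-maintenance}, the bounded-degree matching algorithm of \Cref{bounded-deg-algo}, and the basic kernel approximation bound of \Cref{kernel:basic}. I would set an auxiliary parameter $\eps' = \Theta(\eps)$ (chosen so that the final approximation works out to $2+\eps$) and a tunable maximum kernel degree $d$ to be fixed at the end by balancing.

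First, maintain an $(\eps', d)$-kernel $K$ of the input graph $G$ using the algorithm of \Cref{kernel-maintenance}. This takes worst-case update time $O(\sqrt{m}/(\eps' d))$ per update to $G$, and crucially makes only $O(1)$ changes to $K$ per update to $G$. Second, feed the dynamic kernel $K$ (which has maximum degree at most $d$) into the bounded-degree matching algorithm of \Cref{bounded-deg-algo} with parameter $\eps'$, to maintain a $(1+\eps')$-approximate matching $M$ in $K$. That algorithm runs in worst-case $O(d/\eps'^2)$ time per edge update to its input graph; since each update to $G$ triggers only $O(1)$ edge updates to $K$, this contributes $O(d/\eps'^2)$ worst-case time per update to $G$.

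For the approximation guarantee, \Cref{kernel:basic} gives $\mu(G) \le (2+8\eps')\mu(K)$, and by construction $|M| \ge \mu(K)/(1+\eps')$. Therefore
\[
|M| \;\ge\; \frac{\mu(G)}{(1+\eps')(2+8\eps')} \;\ge\; \frac{\mu(G)}{2+\eps},
\]
provided $\eps'$ is a sufficiently small constant multiple of $\eps$. The total worst-case update time per update to $G$ is
\[
O\!\left(\frac{\sqrt{m}}{\eps' d} + \frac{d}{\eps'^{2}}\right),
\]
and balancing by choosing $d = \Theta(\sqrt[4]{m}\cdot\sqrt{\eps'})$ yields the claimed $O(\sqrt[4]{m}/\sqrt{\eps})$ bound (absorbing lower-order $\eps$-factors into the polynomial dependence on $1/\eps$). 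Using $d \le \sqrt{m}$, this is also $O(\sqrt{n}/\sqrt{\eps})$.

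There is essentially no hard step here: all three components have been built in earlier sections, and the argument is a clean composition. The only point that requires a moment of care is the observation that the per-update cost of the bounded-degree subroutine is charged not per update to $G$ but per update to $K$, which is exactly why the $O(1)$-recourse guarantee of \Cref{kernel-maintenance} (i.e., $C(\eps,d) = O(1)$) is essential; without it, the second term in the update-time expression would blow up by a factor equal to the number of kernel changes per input update. Given that guarantee, the balancing of $d$ is routine.
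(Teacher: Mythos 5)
Your proof follows essentially the same route as the paper: maintain an $(\eps',d)$-kernel via \Cref{kernel-maintenance}, run \Cref{bounded-deg-algo} on it (exploiting the $O(1)$ kernel recourse so that the per-kernel-update cost is charged $O(1)$ times per update to $G$), combine \Cref{kernel:basic} with the $(1+\eps')$-guarantee for the approximation, and balance $d$. The only difference is cosmetic — you explicitly rescale via $\eps'$ whereas the paper writes $2+O(\eps)$ and reparameterizes implicitly, and you choose $d=\Theta(\sqrt[4]{m}\sqrt{\eps'})$ instead of the paper's $d=\sqrt[4]{m}/\sqrt{\eps}$ — but neither choice precisely realizes the stated $1/\sqrt{\eps}$ dependence (both give $\sqrt[4]{m}\cdot\poly(1/\eps)$, consistent with the formal $O_\eps(1)\cdot O(\sqrt[4]{m})$ bound).
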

\begin{proof}
    Let $d=\sqrt[4]{m}/\sqrt{\eps}$.
    Our algorithm maintains an $(\eps,d)$-kernel $K$ in deterministic $O(\sqrt{m}/(\eps d))$ worst-case update time, \Cref{kernel-maintenance}.
    By \Cref{bounded-deg-algo} and property \ref{p1:bounded-deg}, we can maintain a $(1+\eps)$-approximate matching in $K$ in deterministic $O(d/\eps^2)$ worst-case update time per edge update in $K$. By \Cref{kernel-maintenance}, our kernel $K$ has $O(1)$ edges added or removed per update, and so maintaining this matching in $K$ contributes an $O(d/\eps^2)$ term to the (deterministic, worst-case) update time. Finally, by the kernel lemma \ref{kernel:basic}, this matching satisfies $|M|\geq \frac{1}{1+\eps}\cdot \mu(K)\geq \frac{1}{1+\eps}\cdot \frac{1}{2+\eps}\cdot\mu(G)$. Consequently, this algorithm maintains a $(1+\eps)\cdot (2+\eps) = 2+O(\eps)$ approximate matching in $G$ in deterministic $O(\sqrt{m}/(\eps d) + d/\eps^2)$ worst-case update time. By our choice of $d=\sqrt[4]{m}/\sqrt{\eps}$, the theorem follows.
\end{proof}

For weighted matching, we obtain the same asymptotic running time in terms of $n$ and $m$, albeit with an exponential dependence on $\eps$, by relying on the recent weighted matching framework of \citet{bernstein2021framework}. Letting $R$ denote the ratio of the maximum weight and minimum weights ever in the graph, we obtain the following. (See \Cref{sec:application:MWM} for details, as well as a discussion on exponentially decreasing this dependence on $R$.)

\begin{thm}\label{thm:weighted-kernel}
     There exists a deterministic $(2+\eps)$-approximate dynamic \emph{weighted} matching algorithm with worst-case update time $O(\sqrt[4]{m}/\eps^{O(1/\eps)}\cdot \log R) = O(\sqrt{n}/\eps^{O(1/\eps)}\cdot \log R)$.
\end{thm}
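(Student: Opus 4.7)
The plan is to invoke the weighted-to-unweighted reduction of \citet{bernstein2021framework} as a black box, using our unweighted kernel-based algorithm of \Cref{thm:unweighted-kernel} on each of the unweighted subproblems the reduction produces. This is the natural approach, since the reduction is exactly what has already been used to derive $(2+\eps)$-approximate weighted algorithms from $(2+\eps)$-approximate unweighted algorithms in prior works that maintain kernels.

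More concretely, the first step is to decompose the weighted input $G=(V,E,w)$ into $L=O(\log R/\eps)$ unweighted subgraphs $G_1,\dots,G_L$, where $G_j$ consists of all edges whose weights lie in the geometric class $[(1+\eps)^{j-1},(1+\eps)^j)$; each update in $G$ causes exactly one edge update in exactly one $G_j$. The second step is to run, on each $G_j$, an instance of the unweighted $(2+\eps)$-approximate algorithm of \Cref{thm:unweighted-kernel}, which has worst-case update time $O(\sqrt[4]{m}/\sqrt{\eps})$. The third step is to feed these $L$ unweighted matchings into the Bernstein--Dudeja--Langley framework, which combines them into a single $(2+\eps)$-approximate weighted matching; the combination step and its analysis incur a multiplicative overhead of $\eps^{-O(1/\eps)}$ in the worst-case update time, but preserve its worst-case (non-amortized) nature. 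The resulting total worst-case update time is
\[
L \cdot \eps^{-O(1/\eps)} \cdot O(\sqrt[4]{m}/\sqrt{\eps}) \;=\; O(\sqrt[4]{m}/\eps^{O(1/\eps)} \cdot \log R),
\]
matching the theorem.

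Since \Cref{thm:unweighted-kernel} is already proven and the framework of \cite{bernstein2021framework} is used as a black box, the only items to verify carefully are (i) that our unweighted algorithm satisfies the exact interface required by the framework (in particular, worst-case update time and the standard kernel-based approximation guarantee, both of which it does), and (ii) that the $\log R$ weight-class partition incurs only one edge update per input update, so the per-class update time composes additively with the framework's overhead. The main (and really the only) delicate point in a fully written-out proof will be confirming that the framework's overhead is indeed multiplicative in $\eps^{-O(1/\eps)}$ in the worst case (rather than amortized) when fed our worst-case unweighted oracle; this is established in \cite{bernstein2021framework}, and we will simply appeal to their statement. A short remark can then note how a more refined weight-class partitioning or dynamic rescaling can reduce the $\log R$ factor, as alluded to in \Cref{sec:application:MWM}.
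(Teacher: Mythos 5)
Your high-level plan (invoke the \cite{bernstein2021framework} framework as a black box) is the right one, but the way you describe the framework and, crucially, the object you feed into it, is not correct, and the discrepancy is not merely cosmetic: it changes the approximation ratio you obtain.

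You propose to partition $G$ into $O(\log R/\eps)$ unweighted weight-class subgraphs, run the unweighted matching algorithm of \Cref{thm:unweighted-kernel} on each, and hand the resulting collection of matchings to the framework of \cite{bernstein2021framework} to ``combine''. That is not how the framework works, and if one did naively use its generic weighted-to-unweighted reduction with a $(2+\eps)$-approximate unweighted matching algorithm as the oracle, the output in \emph{general} graphs would be a $(3+\eps)$-approximation, not $(2+\eps)$: the generic reduction loses a multiplicative $3/2$ factor coming from the integrality gap of the fractional matching polytope. The paper explicitly flags this and avoids it. The mechanism that \cite{bernstein2021framework} actually provide, and that the paper uses (Lemma~\ref{kernel-reduction}), takes as input a \emph{kernel maintenance} algorithm, not an unweighted matching algorithm. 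One runs the kernel maintenance algorithm of \Cref{kernel-maintenance} on the (Kao et al.-style) \emph{unfolded} graph $\phi(G)$, proves that the resulting kernel is a $(2+O(\eps))$-\emph{refolding-approximate} subgraph (Lemma~\ref{refolding:kernels}), then ``refolds'' to obtain a low-degree weighted subgraph of $G$ on which a near-optimal weighted matching is periodically recomputed (Lemma~\ref{BDL:refolding-reduction}). This folding/refolding route is precisely what circumvents the $3/2$ loss; the per-scale decompose-and-combine picture you sketch has no analogous escape hatch.

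Your running-time accounting is also off for the same reason: the $\eps^{-O(1/\eps)}$ term is not an overhead of a ``combination step'' applied to $L$ matchings, it is the size blowup $\gamma_\eps=(1/\eps)^{1/\eps}$ of the unfolded graph $\phi(G)$ (appearing as $T(\eps,d,\gamma_\eps n,\gamma_\eps m)$ in Lemma~\ref{kernel-reduction}), and the $\log R$ factor arises internally in \cite{bernstein2021framework}'s analysis rather than from you running one instance per weight class. To fix the proposal: set $d=\sqrt[4]{m}$, observe that \Cref{kernel-maintenance} maintains an $(\eps,d)$-kernel with worst-case update time $O(\sqrt{m}/(d\cdot\poly(\eps)))=O(\sqrt[4]{m}/\poly(\eps))$ and $O(1)$ changes per update, check the hypothesis $d\geq 288\eps^{-3}\log(1/\eps)$, and plug these directly into Lemma~\ref{kernel-reduction} (not into the generic unweighted reduction). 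That is the paper's entire proof.
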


\Cref{thm:fast-kernel-algo} follows by combining the preceding two theorems.

\section{Putting it all Together: Beating the Folklore Algorithm}\label{sec:together}
In this section we combine our ideas from the previous section to obtain our main result---a deterministic algorithm which outperforms the folklore algorithm in terms of both approximation ratio and worst-case update time. Specifically, we prove the following.

\beatfoklore*

\begin{proof}
    As in \Cref{sec:kernels++}, we let $\eps=2\times 10^{-8}$ and $\delta=2\times 10^{-6}$.
    Set $d=m^{3/8}\geq 1/\eps$.
    This last inequality is WLOG, since if $m^{3/8} < 1/\eps$, then we can trivially maintain a $1$-approximate matching in worst-case time $\poly(n)=\poly(1/\eps)$ per update, by recomputing a maximum matching after each update.
    Furthermore, in this proof we assume that $m$ is fixed (up to some multiplicative constant). This too is WLOG, since whenever $m$ increases or decreases by a constant factor, we can (gradually) re-run the dynamic matching algorithm with an appropriately larger/smaller value of $m$, incurring a constant multiplicative overhead. 

    Our algorithm consists of a number of subroutines applied to several subgraphs of the dynamic graph $G$. Our algorithm maintains an $(\eps,d)$-kernel $K=(V,E_K)$. In addition, it maintains $\eps^2 d$-approximate degrees $\{d^{u}(v) \mid u,v\in V\}$ for all node pairs $u,v\in V$, and, for all $i\in [1/\eps]$, the subgraphs $B^{(i)}_{SH}$ and $B^{(i)}_H$ induced by these approximate degrees, as defined in \Cref{sec:main-approx-degrees}, where all (star) updates to approximate degree counters $d^{u}(v)$ affect the $O(1)$ appropriate subgraphs $B^{(i)}_{SH}$ and $B^{(i)}_{H}$.
    In each of the bipartite subgraphs $B^{(i)}_{SH}$ and $B^{(i)}_{H}$, we maintain $(3/2+\eps)$-approximate matchings, $M^{(i)}_{SH}$ and $M^{(i)}_H$, respectively, using the bipartite matching algorithm of \cite{bernstein2015fully}, making it incur only $O(1/\eps)$ recourse per star update in $B^{(i)}_{SH}$ and $B^{(i)}_H$, using \Cref{recourse}.
    Finally, we maintain a $(1+\delta/8)$-approximate matching to use as our output, $M^{out}$, in the low-degree graph $AK:=G\left[E_K\cup \bigcup_{i} \left(M^H_i \cup M^{SH}_i\right)\right]$, using \Cref{bounded-deg-algo}.
    
    As we shall show, the above is a  $((2-\delta)\cdot (1+\delta/8))$-approximate algorithm with $O(\sqrt[4]{m}\cdot\sqrt{\Delta})$ update time. To speed up this algorithm for sparse graphs, we combine our algorithm with \Cref{lem:arboricity} (essentially a corollary of the sparsification scheme of \cite{solomon2018local}), which allows us to decrease $\Delta$ to $O(\sqrt{m}/\eps)$ at the cost of increasing the approximation ratio by a $(1+\eps)$ factor. 
    

    
    \textbf{Approximation Ratio.} By \Cref{core-approx-lem}, we have that $\mu(G)\leq (2-\delta)\cdot \mu(AK)$. Also, by applying dynamic sparsification of \Cref{lem:arboricity} to our algorithm, we lose a $(1+\eps)$ factor in approximation ratio.
    Therefore, the $(1+\delta/8)$ approximate matching $M^{out}$ which we maintain satisfies the following.
    $$\mu(G)\leq (2-\delta)\cdot \mu(AK) \leq (2-\delta)\cdot (1+\eps)\cdot(1+\delta/8)\cdot |M^{out}| \leq (2-\delta/2)\cdot |M^{out}|.$$
    That is, this algorithm is $(2-10^{-6})$-approximate, as claimed.
    
    \textbf{Running Time.\footnote{All following bounds are worst-case, so we omit the words worst-case for brevity.}} 
    By \Cref{kernel-maintenance}, the maintenance of the $(\eps,d)$-kernel $K=(V,E_K)$ requires $O(\sqrt{m}/(\eps d))$ time and $O(1)$ changes to $E_K$ per update.
	For each insertion or removal of an edge $(u,v)$ in $E_K$, we update the approximate degrees, in $O(\Delta/(\eps^2 d))$ time, causing an insertion or removal of $O(1)$ stars (centered on $u$ or $v$) with $O(\Delta/(\eps^2 d))$ leaves in the bipartite subgraphs of the form $B^{(i)}_{SH}$ and $B^{(i)}_H$, by \Cref{alg:approx-degrees} and \Cref{B-maintenance}. Consequently, the time to maintain the auxiliary approximate matchings $\{M^{(i)}_{H}, M^{(i)}_{SH}\}_i$ in these bipartite subgraphs $B^{(i)}_{H}$ and $B^{(i)}_{SH}$ following a star update is $O(\Delta/(\eps^2 d))
    \cdot O(\sqrt[4]{m}/\poly(\eps)) = O(\Delta\cdot \sqrt[4]{m} /d)$, by \Cref{edcs-algo}. 
    Therefore, since only $O(1)$ such star updates occur per update in $G$, the time per update to maintain the $1/\eps=O(1)$ matchings $M^{(i)}_{SH}$ and $M^{(i)}_H$ is $O(\Delta\cdot \sqrt[4]{m} /d)$.
    Moreover, each star update results in $O(1/\eps)=O(1)$ changes to each of these auxiliary matchings, by \Cref{recourse}, and so $A:=\bigcup_i (M^{(i)}_{SH}\cup M^{(i)}_H)$ changes by $O(1/\eps)$ edges per update to $G$. We conclude that the graph $AK:=G[A\cup E_K]$ is a graph of maximum degree $d+O(1) = O(d)$, which changes by $O(1)$ edges per update: $O(1)$ changes to the kernel per update, by \Cref{kernel-maintenance-simple} and $O(1)$ changes in the matchings, by the preceding discussion.
    Therefore, we can maintain a $(1+\delta/8)$-approximate matching in $AK$ in $O(d/\delta^2)=O(d)$ time per update to $AK$, by \Cref{bounded-deg-algo}, which is $O(d)\cdot O(1)=O(d)$ time per update to $G$.
    We conclude that this algorithm's (worst-case) update time is $O(d+\Delta\cdot \sqrt[4]{m}/d)$, which can decrease to $O(d+m^{3/4}/d)$ for sparse graphs by using dynamic sparsification of \Cref{lem:arboricity}. Choosing $d=m^{3/8}$, yielding the claimed time bound.
    
    
\end{proof}

\paragraph{Acknowledgements.} The authors thank Sayan Bhattacharya and Aaron Bernstein for helpful discussions, and Aaron Bernstein for sharing a preprint of \cite{bernstein2021framework}.
This research is partially supported by NSF award 1812919, ONR award N000141912550, and a gift from Cisco Research.

\appendix
\section*{APPENDIX}
\section{Omitted Proofs of \Cref{sec:prelims}}\label{appendix:prelims}

Here we provide omitted proofs of some basic dynamic algorithmic primitives, apparent in prior work, which we use. These proofs will rely on the following simple observation, itself implicit in  \citet[Lemma 3.1]{gupta2013fully}, whereby the matching problem is in some sense ``stable'', and small changes do not affect the size of any matching much.

\begin{obs}\label{stability}
    Let $\alpha\geq 1$ and $\eps \leq 1/3$. If $M$ is an $\alpha$-approximate matching in $G_t$, then for all $t'\in [t, t+\eps\cdot|M|]$, the matching $M'$ consisting of the edges of $M$ which were not deleted during (edge/vertex/star) updates $t,t+1,\dots,t'$ is an $\alpha(1+3\eps)$-approximate matching in $G_{t'}$.
\end{obs}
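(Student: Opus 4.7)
The plan is to track two quantities across the window of updates $[t, t']$: the size of $M'$ (which can only shrink) and the maximum matching size $\mu(G_{\cdot})$ (which can grow or shrink). Since $t'-t \leq \eps |M|$, I just need to show that each update can decrease $|M \cap E(G_{\cdot})|$ by at most $1$ and can increase $\mu(G_{\cdot})$ by at most $1$, so that after $\eps |M|$ updates the ratio $|M'|/\mu(G_{t'})$ is still close to $|M|/\mu(G_t) \geq 1/\alpha$.

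The per-update bound on the loss in $M$ is where the matching property of $M$ is used: an edge deletion trivially removes at most one edge of $M$; a vertex deletion removes all edges incident to the deleted vertex $v$, but since $M$ is a matching it contains at most one edge at $v$; a star deletion centered at $v$ likewise removes at most one edge of $M$. The same reasoning shows that vertex/star \emph{insertions} cannot delete any edges of $M$. Hence $|M'| \geq |M| - (t'-t) \geq (1-\eps)|M|$. The per-update bound on $\mu$ is elementary: for any single edge insertion/deletion, vertex insertion/deletion, or star insertion/deletion, the maximum matching can change by at most $1$ (any edge, vertex, or star-center touches at most one edge of any fixed matching). Thus $\mu(G_{t'}) \leq \mu(G_t) + (t'-t) \leq \alpha |M| + \eps |M| = (\alpha + \eps)|M|$.

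Combining these two inequalities gives
\[
\frac{\mu(G_{t'})}{|M'|} \;\leq\; \frac{\alpha + \eps}{1-\eps}.
\]
It remains to verify $\tfrac{\alpha+\eps}{1-\eps} \leq \alpha(1+3\eps)$ whenever $\alpha \geq 1$ and $\eps \leq 1/3$. Cross-multiplying, this is equivalent to $\alpha(1+3\eps)(1-\eps) \geq \alpha + \eps$, i.e., $\alpha(2\eps - 3\eps^2) \geq \eps$, i.e., $\alpha(2 - 3\eps) \geq 1$, which holds since $\alpha \geq 1$ and $2 - 3\eps \geq 1$. There is no real obstacle here; the proof is essentially a two-line accounting argument, and the only subtlety is carefully listing the allowed update types (edge/vertex/star, insertion/deletion) and noting that in each case both the ``at most one matching edge destroyed'' and ``$\mu$ changes by at most one'' bounds hold, which is exactly the generalization needed beyond the edge-only version of \cite{gupta2013fully}.
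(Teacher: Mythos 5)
Your proof is correct and follows essentially the same approach as the paper's: bound the per-update change to both $|M'|$ (can only shrink by at most one per update) and $\mu(G_\cdot)$ (can change by at most one per update), sum over the at-most-$\eps|M|$ updates, and verify the resulting ratio is at most $\alpha(1+3\eps)$ using $\eps \leq 1/3$. The only cosmetic difference is that you bound $\mu(G_{t'}) \leq (\alpha+\eps)|M|$ directly whereas the paper passes through $\mu(G_{t'}) \leq (1+\eps)\mu(G_t) \leq \alpha(1+\eps)|M|$; both yield the claimed bound, and your explicit case analysis of edge/vertex/star updates is a slightly more careful spelling-out of what the paper asserts in one line.
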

\begin{proof}
    Each edge/vertex/star update can contribute/remove at most one edge of any matching. Therefore, during the $u = t'-t \leq \eps\cdot |M|$ updates following time $t$, both the maximum matching size and the size of the non-deleted subset of $M$ cannot change by more than $\eps\cdot |M|\leq \eps\cdot \mu(G_t)$. 
    Consequently, we have that 
    \begin{align*}
        \mu(G_{t'}) & \leq \mu(G_t) + u \leq (1+\eps)\cdot \mu(G_t) \\
        |M'| & \geq |M|-u \geq \left(1-\eps\right)\cdot |M|.
    \end{align*}
    Combining both bounds, the claimed bound follows by $M$ being $\alpha$-approximate in $G_t$ implying $\mu(G_t)\leq \alpha\cdot |M|$, together with our choice of $\eps\leq 1/3$, which imply
    \begin{align*}
        \mu(G_{t'}) & \leq (1+\eps)\cdot \mu(G_t) \leq \alpha(1+\eps) \cdot |M| \leq \frac{\alpha (1+\eps)}{1-\eps}\cdot |M'| \leq \alpha\cdot (1+3 \eps)\cdot |M'|. \qedhere
    \end{align*}
\end{proof}

\Cref{stability} plays a crucial role in a number of prior results for the dynamic matching problem, many of which rely on fast algorithms for bounded-degree graphs, as in the following proposition, implied by \cite{gupta2013fully,peleg2016dynamic}. 

\boundeddeg*
\begin{proof}[Proof (Sketch)]
    Let $\eps'=\eps/5$. The algorithm divides the update sequence into phases, with the phase starting with matching $M$ consisting of $\lfloor \eps'\cdot |M|\rfloor+1$ updates.
    At the beginning of each phase (i.e., after its first update), we run a static $(1+\eps')$-approximate matching algorithm in $G$ in time $O(|E|/\eps')$ (e.g., \cite{micali1980v}).
    Since nodes of a maximum matching form a vertex cover with $2\mu(G)$ nodes, $G$ has at most $2\mu(G)\cdot \Delta$ edges, and so this static algorithm takes $O(|E|/\eps') = O(\mu(G)\cdot \Delta/\eps)$ time. 
    During the phase starting with matching $M$, we use the edges of $M$ not deleted during the phase as our matching.
    By \Cref{stability}, this results in approximation ratio at most $(1+\eps')\cdot (1+3\eps')\leq (1+5\eps') = (1+\eps)$.
    The amortized update time of this algorithm is $O(\Delta/\eps^2)$. This is easily de-amortized by spreading out the work of the static $(1+\eps')$-approximate matching over subsequent updates, and again appealing to \Cref{stability} to argue that this does not deteriorate the approximation ratio by more than a further $(1+O(\eps'))$ multiplicative factor, resulting in a $1+\eps$ approximation, by appropriately re-parameterizing.
\end{proof}

Next, we provide a brief proof sketch of \Cref{recourse}, originally stated for edge or vertex updates by \citet{solomon2021generalized}, and show that their argument carries through seamlessly to star updates as well.
\recourse*
\begin{proof}[Proof (Sketch)]
    In \cite[Theroem 1]{solomon2021generalized}, Solomon and Solomon provide an algorithm for transforming one matching $M$ into (a possible super-set of) another matching $M'$ in time $O(|M|+|M'|)$, such that each intermediate matching in this transformation has size at least $\min\{|M|,|M'|-1\}$, and every two consecutive matchings differ by $O(1)$ edges. 
    To obtain a bounded-recourse dynamic algorithm from the above, one again divides the update sequence into phases, with a phase starting with a matching $M$ consisting of the following $\lfloor \eps\cdot |M|\rfloor$ updates. Let $M_i$ be the matching maintained by the dynamic matching algorithm at the end of phase $i$. 
    At the end of phase $i$ (i.e., beginning of phase $i+1$), we gradually transform $M_i$ into $M_{i+1}$, by performing $\Theta(1/\eps)$ operations of the operations in the transformation of \citet[Theorem 1]{solomon2021generalized}, such that at the end of the phase, all $O(|M_i|+|M_{i+1}|)$ operations were performed.
    This trivially contributes $O(1/\eps)$ to the update time and recourse per update.
    Finally, by \Cref{stability}, the matching maintained this way, which has size at least $\min\{|M_i|,|M_{i+1}|-1\}$, is an $\alpha(1+6\eps)$-approximate matching, unless $|M_{i+1}|=1$, in which case we can maintain a $1\leq \alpha$ approximation by maintaining a matching consisting of a single edge.
\end{proof}

\section{Omitted Proofs of \Cref{sec:kernels++}}\label{appendix:kernel++}

Here we provide full proofs of results of \Cref{sec:kernels++}. We use the same settings of $\eps,\delta,s$ and $d$ as in that section, and again use $\leqstar$ and $\geqstar$ to denote inequalities which hold for our particular choices of these variables.

\subsection{The Extended Kernel Lemma}
In this section we revisit the proof that kernels contain a large matching compared to their supergraph's maximum matching \cite{arar2018dynamic,bhattacharya2018deterministic,wajc2020rounding}. Specifically, as stated in \Cref{kernel:basic}, a kernel is $(2+\eps)$-approximate matching sprasifier. We extend this lemma to prove that if a kernel is not much better than $2$-approximate, then this imposes restrictions on degress in the kernel of endpoints of edges in a maximum matching $M^*$ in $G$. Specifically, we prove the following.
\ekl*
\begin{proof}
    To prove that $K$ contains a large matching, we first prove that it contains a large fractional matching. 
    We consider the following such fractional matching in $K$, using the shorthand notation $z^+ := \max\{0,z\}$.
\begin{equation}\label{fractional-matching}
    x_e = \begin{cases} \frac{1}{d} & e\in E_K\setminus M^* \\
    \left(1-\sum_{v\in e}\frac{d_K(v)-1}{d}\right)^+ & e\in E_K\cap M^*,
    \end{cases}
\end{equation}
It is easy to see that $\vec{x}$ is a fractional matching in $K$, i.e., $x_e\geq 0$ and $\sum_{e\ni v} x_e\leq 1$.
We will now prove that this fractional matching's size is large compared to $|M^*|=\mu(G)$, i.e., that is satisfies $\sum_e x_e \geq \frac{1}{\alpha}\cdot |M^*|$ for some small $\alpha\leq 2+O(\eps)$, and that moreover, $\alpha\leq 2-\Omega(\delta)$ if many edges of $M^*$ satisfy either condition \eqref{high-degree-sum-condition} or \eqref{low-degree-sum-condition}.
By the integraligy of the fractional matching polytope in bipartite graphs, such a bound on $\sum_e x_e$ immediately implies that if $G$ is bipartite, then the largest \emph{integral} matching in $K$ has size at least $\mu(K)\geq \sum_e x_e \geq \frac{1}{\alpha}\cdot |M^*|$. 
In general graphs, where sizes of fractional matchings may be a factor of $\frac{3}{2}$ greater than the cardinality of the maximum (integral) matching, $\sum_e x_e \geq \frac{1}{\alpha}\cdot \mu(G)$ would na\"ively imply a $\frac{3\alpha}{2}$ approximation. 
However, prior work \cite{arar2018dynamic,bhattacharya2018deterministic,wajc2020rounding} shows that for this particular fractional matching $\vec{x}$, this factor of $\frac{3}{2}$ blowup is not necessary. In particular, they show that there exists an integral matching in $K$ of cardinality nearly $\sum_e x_e$. More precisely, relying on Vizing's theorem \cite{vizing1964estimate}, they show the following. (The proof is provided below for completeness.)
\begin{restatable}{lem}{vizingapp}\label{vizing-app}
     Any $(\eps,d)$-kernel $K=(V,E_K)$ and  fractional matching $\vec{x}$ as in \eqref{fractional-matching} satisfy $\mu(K)\geq \frac{\sum_e x_e}{1+1/d}.$
\end{restatable}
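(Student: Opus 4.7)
The plan is to construct a carefully chosen multigraph $H$ out of $K$ and $\vec{x}$, show that $H$ admits a proper edge coloring with only $d+1$ colors, and then observe that the largest color class descends to an integer matching in $K$ of size at least $\sum_e x_e /(1+1/d)$.

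First, I would define $H$ on vertex set $V$ by taking each edge $e \in E_K$ with integer multiplicity $d\cdot x_e$. The integrality check is easy: $d\,x_e = 1$ for $e \in E_K \setminus M^*$, while for $e=(u,v)\in E_K \cap M^*$ we have $d\,x_e = \max(0,\,d+2-d_K(u)-d_K(v))$, which is a non-negative integer because $d$ and the $d_K(\cdot)$ are integers and $d_K(u),d_K(v)\ge 1$ (since $e\in E_K$). Because $\vec{x}$ is a fractional matching, every vertex satisfies $d_H(v) = d\cdot \sum_{e\ni v} x_e \le d$, so $\Delta(H)\le d$. Moreover $|E(H)| = \sum_e d\,x_e = d\sum_e x_e$.

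Next, I would produce a proper edge coloring of $H$ with $d+1$ colors by starting from a proper $(d+1)$-edge coloring $c$ of the simple graph $K$ (guaranteed by Vizing's theorem since $\Delta(K)\le d$). For each $e \in E_K \setminus M^*$ (which has a single copy in $H$) I keep the color $c(e)$. For each $e=(u,v)\in M^*\cap E_K$ with $d\,x_e>0$, I need to choose $d\,x_e$ distinct colors for its parallel copies, avoiding the (at most $d_K(u)-1$) colors used by the non-$M^*$ edges at $u$ under $c$, and the (at most $d_K(v)-1$) colors used by the non-$M^*$ edges at $v$. The number of forbidden colors is at most $(d_K(u)-1)+(d_K(v)-1)=d_K(u)+d_K(v)-2$, so the number of available colors is at least $(d+1)-(d_K(u)+d_K(v)-2)=d+3-d_K(u)-d_K(v)\ge d\,x_e$, leaving enough colors to assign. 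Crucially, because $M^*\cap E_K$ is a matching, the recolorings for different multi-edges are performed at vertex-disjoint locations and do not interfere, so the resulting coloring is a valid proper $(d+1)$-edge coloring of $H$.

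Finally, by averaging, at least one of the $d+1$ color-class matchings in $H$ has size $\ge |E(H)|/(d+1) = d\sum_e x_e/(d+1)$. Any matching in $H$ descends to a matching of the same size in the simple graph $K$, because two parallel copies of a single underlying edge $e=(u,v)$ share both endpoints and hence at most one can be selected. This gives the required $\mu(K)\ge d\sum_e x_e/(d+1) = \sum_e x_e/(1+1/d)$.

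The main obstacle is establishing $\chi'(H)\le d+1$: the generic Vizing bound for multigraphs, $\chi'(H)\le \Delta(H)+\mu(H)$, is far too weak (possibly $2d$) since $\mu(H)$ can be as large as $d$. The non-trivial structural point that the argument rests on is that the edges of multiplicity greater than one in $H$ are exactly the edges of the matching $M^*\cap E_K$, so they are pairwise vertex-disjoint and can each be independently re-colored using the ``slack'' in colors left at its endpoints by Vizing's coloring of the simple graph $K$.
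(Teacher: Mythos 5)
Your proof is correct and takes essentially the same approach as the paper: build the multigraph with edge multiplicities $d\,x_e$, color the non-$M^*$ part via Vizing, and use the fact that $M^*\cap E_K$ is a matching plus the formula for $x_e$ to show each $M^*$ multi-edge can be independently recolored with the $d+1$ palette, then average over color classes. The only cosmetic difference is that you start from a $(d+1)$-coloring of all of $K$ and discard the colors on $M^*$ edges, whereas the paper colors $G[E_K\setminus M^*]$ directly and extends; the resulting arguments and counts are identical.
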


By \Cref{vizing-app}, to prove the basic kernel lemma, \Cref{kernel:basic}, it is sufficient to prove that $\sum_e x_e \geq \frac{1-\eps}{2}\cdot |M^*|$. 
For this, one can (as we shall do) prove that for every edge $e\in M^*$, we have that $\sum_{e'\cap e} x_{e'}\geq 1-\eps$.
In what follows, we use a refinement of this bound to prove that if \Cref{kernel:basic} is not too loose, i.e., $\mu(G)\geq (2-\delta)\cdot \mu(K)$, then most edges $e\in M^*$ must have that their endpoints' degrees in $K$ have a bounded sum, as asserted in the lemma statement.

    Let $y_v:=\sum_{e\ni v} x_e$. 
    We obtain a lower bound on $\sum_e x_e = \frac{1}{2}\cdot \sum_v y_v$ by lower bounding $\sum_{v\in V(M^*)} y_v$ in terms of $|M^*|$, by lower bounding $\sum_{v\in e} y_v$ for each edge $e\in M^*$.
    
    If $e\in M^*\setminus E_K$, then $\max_{v\in e} d_K(v)\geq d(1-\eps)$, by property \ref{p2:satisfied-edges} of the $(\eps,d)$-kernel $K$, and so
    \begin{align}\label{out-of-kernel-dual-bound}
    \sum_{v\in e} y_v = \sum_{v\in e} d_K(v)\cdot \frac{1}{d} \geq 1-\eps.
    \end{align}
    Conversely, if $e\in M^*\cap E_K$, we have that \begin{align}\label{in-kernel-dual-bound} 
    \sum_{v\in e} y_v = \sum_{v\in e}\frac{d_K(v)-1}{d} + 2\cdot \left(1-\sum_{v\in e}\frac{ d_K(v)-1}{d}\right)^+ \geq 1 \geq 1-\eps.
    \end{align}
    
    Next, we revisit the bounds of equations \eqref{out-of-kernel-dual-bound} and \eqref{in-kernel-dual-bound}, 
    and bound $\sum_{v\in e} y_v$ for edges $e\in M^*$ satsifying either of conditions \ref{high-degree-sum-condition} or \ref{low-degree-sum-condition}.
    For edges $e\in M^*$ satisfying condition \ref{high-degree-sum-condition}, the endpoints' degree sum in $K$ satisfies $\sum_{v\in e} d_K(v) \geq  (1+s-2\eps)\cdot d$, and so $\sum_{v\in e} (d_K(v)-1) \geq d\cdot (1+s-4\eps)$, since $d\geq 1/\eps$. The bounds of equations \eqref{out-of-kernel-dual-bound} and \eqref{in-kernel-dual-bound} then imply the following.
    \begin{align}\label{high-degs-dual-bound}
        \sum_{v\in e} y_v \geq \sum_{v\in e}\frac{d_K(v)-1}{d} \geq 1+s-4\eps.
    \end{align}
    Similarly, edges $e\in M^*$ satisfying condition \ref{low-degree-sum-condition} satisfy $\sum_{v\in e} d_K(v)\leq d\cdot (1-s+2\eps)$ and $e\in E_K$, and hence, 
    \begin{align}\label{low-degs-dual-bound} 
    \sum_{v\in e} y_v = \sum_{v\in e}\frac{d_K(v)-1}{d} + 2\cdot \left(1-\sum_{v\in e}\frac{ d_K(v)-1}{d}\right)^+ = 1 + \left(1-\sum_{v\in e}\frac{ d_K(v)-1}{d}\right) \geq 1+s-2\eps.
    \end{align}
    
    Denote by $r$ the number of edges $e$ in $M^*$ which satisfy either condition \ref{high-degree-sum-condition} or \ref{low-degree-sum-condition}. Equations \eqref{out-of-kernel-dual-bound}
    --\eqref{low-degs-dual-bound} then imply the following lower bound on the size of $x$.
    \begin{align}\label{12345}
        \sum_e x_e & = \frac{1}{2}\cdot \sum_v y_v \geq \frac{1}{2}\cdot \sum_{v\in V(M^*)} y_v \geq \frac{1}{2}\cdot \left(|M^*|\cdot (1-\eps) + r\cdot (s - 3\eps)\right).
    \end{align}
    On the other hand, we have by \Cref{vizing-app} and $d\geq 1/\eps$ that 
    \begin{align}\label{123456}
        \mu(K)\geq \frac{1}{1+1/d}\cdot \sum_e x_e \geq \frac{1}{1+\eps}\cdot \sum_e x_e.
    \end{align}
    Now, by the hypothesis of the lemma, we have that $\mu(G)\geq (2-\delta)\cdot \mu(K)$. 
    Combining this with equations \eqref{12345} and \eqref{123456}, we therefore have that 
    \begin{align*}
        \frac{1}{2-\delta} \cdot \mu(G)\geq \mu(K) \geq \frac{1}{1+\eps}\cdot x_e \geq \frac{1}{1+\eps} \cdot \frac{1}{2}\cdot \left(\mu(G)\cdot (1-\eps) + r\cdot (s - 3\eps)\right).
    \end{align*}
    Rearranging terms, we find that the number $k$ of edges of $M^*$ satisfying either condition \eqref{high-degree-sum-condition} or \eqref{low-degree-sum-condition} is indeed bounded as claimed, due to our choice of $\eps,\delta$ and $s$.
    \begin{align*}
        r & \leq \frac{2(1+\eps)}{s-3\eps}\cdot \left(\frac{1}{2-\delta} - \frac{1-\eps}{2(1+\eps)}\right) \cdot \mu(G) \\
        & \leqstar \frac{2(1+\eps)}{s-3\eps}\cdot \left(\frac{1}{2}+\delta/3-\left(\frac{1}{2}-\eps\right)\right) \cdot \mu(G) \\
        & = \frac{2(1+\eps)(\delta/3+\eps)}{s-3\eps} \cdot \mu(G). \\
        & \leqstar \frac{\delta}{s}\cdot \mu(G). \qedhere
    \end{align*}
\end{proof}

Finally, we provide a brief proof of \Cref{vizing-app}.
\begin{proof}[Proof of \Cref{vizing-app}]
Consider the graph $H:=G[E_K\setminus M^*]$. This subgraph of $K$ has maximum degree at most $d$, and hence is $(d+1)$-edge-colorable, by Vizing's theorem \cite{vizing1964estimate}. 
Now, consider the multigraph obtained by adding, for each edge $e\in E_K\cap M^*$, some $x_e\cdot d = (d-\sum_{v\in e}(d_K(v)-1))^+$ parallel copies of $e$ to $H$, yielding a multigraph $\mathcal{H}$.
For any edge $e\in E_K\cap M^*$, the $(d+1)$-edge-coloring of $H$ uses at most $\sum_{v\in e}(d_K(v)-1)$ distinct colors on the edges sharing an endpoint with $e$. Consequently, there are at least $(d+1 - \sum_{v\in e}(d_K(v)-1))^+\geq (d- \sum_{v\in e}(d_K(v)-1))^+$ unused colors among the palette of size $d+1$ used to color $H$, and so this multigraph $\mathcal{H}$ is also $(d+1)$-edge-colorable. 

Now, we note that for every edge $e\in H$ we have $x_e\cdot d$ parallel copies of $e$ in the multigraph $\mathcal{H}$. 
Consequently, this graph has $\sum_e x_e\cdot d$ many edges (counting multiplicities). 
Therefore, by averaging over the $d+1$ matchings (colors) in the $(d+1)$-edge-coloring of $\mathcal{H}$, we find that this multigraph (and therefore its simple graph counterpart, $K$) has a large matching in terms of $\vec{x}$, as claimed.
\begin{align*}
    \mu(K) & = \mu(\mathcal{H}) \geq \frac{\sum_e x_e\cdot d}{d+1} = \frac{1}{1+1/d}\cdot \sum_e x_e. \qedhere
\end{align*}
\end{proof}

\subsection{Characterizing augmenting path node classes}\label{appendix:characterizing}

\Hfree*
\begin{proof}
First, by Vizing's theorem \cite{vizing1964estimate}, the kernel $K$, which has maximum degree $d$, can be $(d+1)$-edge-colored. Each of these colors is a matching in $K$. Let $M'$ be a color (matching) uniformly sampled from these colors. Let $U$ be the set of nodes in $V_H$ that are not matched in $M'$. Since $M'$ is a randomly chosen color, a node $v \in V_H$ is matched with probability at least $\frac{d\cdot (1-2s-i\eps^2)}{d+1} \geq \frac{d\cdot(1-2s-\eps)}{d+1} \geqstar 1 - 2.1s$. 
Hence, by linearity of expectation, we have that $\mathbb{E}[|U|] \leq 2.1s \cdot |V_H|.$

Next, we upper bound $|V_H|$ in terms of $\mu(K)$, relying on the $\frac{3}{2}$ integrality gap of the fractional matching polytope. 
Consider a fractional matching such that each edge $e$ has a value equal to $x_e=1/d$. Since each node $v$ in $V_H$ has degree at least $d_K(v)\geq d\cdot (1-2s-i\cdot \eps^2)\geq d\cdot (1-2s-\eps)$, we have that
\begin{align*}
    \frac{3}{2}\cdot \mu(K) \geq \sum_e x_e \geq \frac{|V_H|}{2}\cdot\frac{1}{d}\cdot d\cdot(1-2s-\eps) = \left(\frac{1}{2} - s - \frac{\eps}{2}\right)\cdot|V_H| \geqstar \frac{6}{13}\cdot |V_H|.
\end{align*}
Dividing through by $\frac{6}{13}$, we obtain $|V_H|\leq \frac{13}{4}\cdot \mu(K) = 3.25\cdot \mu(K).$ Combining this bound with our upper bound on $\mathbb{E}[|U|]$, we find that
\begin{align*}
\mathbb{E}[|U|] \leq 2.1s \cdot |V_H| \leq 7s\cdot \mu(K).
\end{align*}

By the above, there exists some (not necessarily maximum) matching $M'$ in $K$ leaving at most $7s\cdot \mu(K)$ nodes in $V_H$ unmatched.
Let $\tilde{M}$ be a maximum matching in $K$. Note that $\tilde{M} \bigoplus M'$ consists of disjoint cycles and paths and exactly $|\tilde{M}| - |M'|$ augmenting paths. 
Let $P$ be this set of augmenting paths. Then, $M:=M'\bigoplus P$ is a maximum matching of size $|\tilde{M}|=\mu(K)$ matching all nodes matched by $V(M')$ (and possibly others). Therefore, the obtained matching $M$ is a maximum matching in $K$ which leaves at most $7s\cdot \mu(K)$ high degree nodes unmatched.
\end{proof}

\manyaug*
\begin{proof}
Let $U$ be the set of nodes in $V_H$ that are not matched by $M$. First, we claim that $M$ is a maximal matching in $G[V\setminus U]$. First, since $M$ is a maximum matching in $K$, there is no edge in $E_K$ with both of its endpoints unmatched by $M$. On the other hand, by Property \ref{p2:satisfied-edges} of the kernel, for any edge $e \notin E_K$, we have $\max_{v\in e} d_{H}(v) \geq d(1-\eps)$. Hence, at least endpoint should be in $V_{SH}\subseteq V_H$, but all $V_H$ nodes in $G[V\setminus U]$ are matched in $M$. We conclude that $M$ is indeed maximal.

Now, $M$ is a maximal matching in $G[V\setminus U]$, whose size satisfies
\begin{align*}
    |M| = \mu(K) \leq \frac{1}{2-\delta}\cdot \mu(G) \leq \frac{1}{2-\delta}\cdot \mu(G[V\setminus U]) + 7s\cdot \mu(K), 
\end{align*}
where the last inequality follows from \Cref{lem: Hnodes-matches}.
Re-arranging terms, we get that 
\begin{align*}
|M|\leq \frac{1}{(2-\delta)\cdot (1-7s)}\cdot \mu(G[V\setminus U]) \leqstar \left(\frac{1}{2}+2\delta+14s\right)\cdot \mu(G[V\setminus U]).
\end{align*}
Consequently, combining \Cref{lem: Hnodes-matches}, the assumption that $\mu(G)\leq (2-\delta)\cdot \mu(K)$, and \Cref{pre: threeAugmenting}, we find that the number of 3-augmentable edges of $M$ is at least 
\begin{align*}
    \left(\frac{1}{2} - 6\delta - 42s \right)\cdot \mu(G[V\setminus U]) & \geq \left(\frac{1}{2} - 6\delta - 42s\right)\cdot(\mu(G) - 7s\cdot \mu(K)) \\
    & \geq \left(\frac{1}{2} - 6\delta - 42s\right)\cdot (2-\delta)\cdot \mu(K) - \frac{1}{2}\cdot 7s\cdot \mu(K) \\
    & \geq \left(1 - \frac{25}{2}\delta - 84s - \frac{7}{2}s \right) \cdot \mu(K) \\
    & \geqstar \left(1- 88s \right)\cdot\mu(K). \qedhere
\end{align*}
\end{proof}

\subsection{Maintaining Approximate Degrees in Dynamic Subgraphs}\label{sec: approx-deg-maintenance}

In this section, we give a method for dynamically maintain $\alpha$-approximate degrees $\{d^u(v)\mid u,v\in V\}$ of a dynamic subgraph $K$, as well as bipartite subgraphs $B$ determined by edges $(u,v)$ with $d^u(v)\geq d_1$ and $d^v(u)\leq d_2$, for some $d_1$ and $d_2$.\footnote{We note that subgraphs determined by more elaborate conditions of these approximate degree conditions can similarly be maintained, though this level of generality will not be necessary for our use.}
Crucially for our final dynamic matching algorithm, the additive error of $\alpha$ in these approximate degrees will allow $G'$ to change by $O(1)$ \emph{star updates} with $O(\Delta/\alpha)$ leaves for every change to $K$.

\begin{lem}\label{alg:approx-degrees}
     For $\alpha>0$, there exists a deterministic algorithm which, for a dynamic subgraph $K=(V,E_K)$ of $G=(V,E)$, where $E_K$ changes by $O(1)$ edges per update in $G$, maintains in $O(\Delta/\alpha)$ worst-case update time $\alpha$-approximate degrees $\{d^u(v)\mid u,v\in V\}$ in $K$.
     Moreover, for any update (insertion or deletion) of edge $(u,v)$ in $E_K$, only $O(\Delta/\alpha)$ counters of the form $d^w(u)$ and $d^w(v)$ are changed.
\end{lem}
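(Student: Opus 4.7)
The plan is a lazy-broadcast scheme: each vertex $v$ keeps its exact $K$-degree $d_K(v)$ (updated in $O(1)$ per change to $E_K$), together with a ``published'' value $D(v)$ that, outside of broadcasts, equals the value stored at every counter $d^w(v)$ for $w$ a $G$-neighbor of $v$. I will maintain the invariant that $|D(v) - d_K(v)| \le \alpha/4$, so that every counter, which always holds either the current $D(v)$ or the pending new value, deviates from $d_K(v)$ by at most $\alpha$.

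Concretely, each vertex $v$ stores its $G$-neighbors in a doubly-linked circular list with a cursor. The counter $d_K(v)$ is updated in $O(1)$ whenever $E_K$ changes. When $|d_K(v) - D(v)|$ first reaches $\alpha/4$, vertex $v$ starts a broadcast: it records a target $D_{\mathrm{new}}(v) := d_K(v)$, resets its cursor, and then, triggered by each subsequent update to $d_K(v)$, advances the cursor by $\lceil 4\Delta/\alpha\rceil$ positions, overwriting $d^w(v) \leftarrow D_{\mathrm{new}}(v)$ on each visited neighbor. After at most $\alpha/4$ such steps the cursor has swept the full neighborhood; at that moment the broadcast is finalized by setting $D(v) \leftarrow D_{\mathrm{new}}(v)$. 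Insertions of edge $(u,v)$ into $G$ cost $O(1)$ to initialize $d^u(v) \leftarrow D(v)$ (or $D_{\mathrm{new}}(v)$ if a broadcast for $v$ is in flight) and symmetrically for $d^v(u)$; deletions splice the corresponding counter out of the list in $O(1)$.

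Correctness reduces to a one-line drift calculation. Since $d_K(v)$ changes by at most $1$ per $E_K$-update and a broadcast completes in at most $\alpha/4$ such updates, $d_K(v)$ drifts by at most $\alpha/4$ between the start and finalization of a broadcast. Hence $|D_{\mathrm{new}}(v) - d_K(v)| \le \alpha/4$ at finalization, and at every moment each counter $d^w(v)$ holds either $D(v)$ or $D_{\mathrm{new}}(v)$, yielding $|d^w(v) - d_K(v)| \le \alpha/4 + \alpha/4 = \alpha/2 \le \alpha$. The same drift bound shows that each broadcast completes before another $\alpha/4$ of drift is accumulated, so at most one broadcast per vertex is ever active. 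The running-time and counter-change bounds are then immediate: each $E_K$-update triggers $O(1)$ broadcast steps, each touching $O(\Delta/\alpha)$ counters of the form $d^w(u)$ or $d^w(v)$, and edge-updates to $G$ not affecting $E_K$ contribute only $O(1)$ bookkeeping.

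The main subtlety I anticipate is handling neighbors that are inserted into or deleted from $v$'s $G$-neighborhood mid-broadcast: the cursor must not re-visit counters it has already updated, skip counters it has not yet reached without proper initialization, or try to read counters that have been removed. I would handle this exactly as in the circular-list trick of \Cref{alg:kernel-remove}, by splicing newly inserted neighbors into the list immediately \emph{behind} the cursor (so that they count as ``already swept'' and simply receive $D_{\mathrm{new}}(v)$ at insertion time) and splicing out deleted neighbors in $O(1)$. Once this is in place, the approximation guarantee, the $O(\Delta/\alpha)$ worst-case update time, and the $O(\Delta/\alpha)$ bound on the number of $d^w(u)$- and $d^w(v)$-counters changed per $E_K$-update all follow directly from the construction above.
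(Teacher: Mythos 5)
Your proposal is correct and uses the same core mechanism as the paper's proof: a pointer (``cursor'') that circularly sweeps each vertex's adjacency list, advancing $\Theta(\Delta/\alpha)$ positions per change to $d_K(v)$, so that every counter $d^u(v)$ gets refreshed within $O(\alpha)$ degree changes of $v$, giving the $\pm\alpha$ guarantee, the $O(\Delta/\alpha)$ worst-case time, and the $O(\Delta/\alpha)$ counter changes per $E_K$-update. The paper's version is a bit leaner than yours---it always advances the pointer (no lazy trigger at drift $\alpha/4$) and writes the \emph{current} exact value $d_K(v)$ into the visited counter rather than a frozen snapshot $D_{\mathrm{new}}(v)$, so it needs neither the published/target values $D(v)$, $D_{\mathrm{new}}(v)$ nor the broadcast state machine, and the error bound follows directly from the sweep period ($\alpha$ degree changes suffice to revisit every neighbor).
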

\begin{proof}
Our algorithm for maintaining $\alpha$-approximate degree counters associates nodes in $V$ with $[n]$ and stores, for each node $v$, a pointer $p_v$, initially points to the first neighbor of the the $v$ in its adjacency list. 
Whenever a node $v$ has its degree change in $K$, we simply move $p_v$ to the next element in adjacency list (setting $p_v$ to points to the first element if we reach to the end of the list) a number of times, specifically, $\lceil \Delta/\alpha\rceil$ times. When we advance $p_v$ and it points to $u$, we set $d^u(v) \gets d_K(v)$. This clearly takes $O(\Delta/\alpha)$ time. Also, after an edge $(u, v)$ is inserted to the graph, we insert $u$ exactly before $p_v$ in the adjacency list of $v$ and set $d^{u}(v) = d_K(v)$ (similarly, we update the adjacency list of $u$). Moreover, every insertion/deletion of an edge $(u,v)$ into/from $E_K$ results in $O(\Delta/\alpha)$ updates to approximate degrees, all of the form $d^{w}(u)$ and $d^w(v)$, resulting in the desired star updates.
Finally, since $p_v$ points to $u$ (and hence $d^u(v)\gets d_K(v)$) at some point during any sequence of $\alpha$ edge updates to $E_K$ affecting the degree of $v$ in $K$, we have that $d^u(v) = d_K(v) \pm \alpha$.
\end{proof}

The bipartite subgraph $B=(V,E_B)$ with $E_B=\{(u,v)\in E \mid d^u(v)\geq d_1, \, d^v(u)\leq d_2\}$ as required by our main algorithm can be maintained trivially in the same asymptotic time as needed to maintain approximate degrees, by adding/removing edges $(u,v)\in E$ to/from $E_B$ whenever the appropriate approximate degree conditions are met/violated. Hence, we have the following.
\begin{cor}\label{B-maintenance}
    There exists a deterministic algorithm for maintaining subgraphs $B^H_i$ and $B^{SH}_i$ as in \Cref{def: threshold}, with each change to $K$ resulting in $O(1)$ star updates with $O(\Delta/(\eps^2 d))$ edges.
\end{cor}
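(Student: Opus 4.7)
The plan is to apply \Cref{alg:approx-degrees} with parameter $\alpha=\eps^2 d$ to maintain $\eps^2 d$-approximate degrees $\{d^u(v)\mid u,v\in V\}$ of the kernel $K$. By that lemma, each change to $E_K$, say an insertion or deletion of edge $(u,v)$, triggers $O(\Delta/(\eps^2 d))$ counter updates, all of the form $d^w(u)$ or $d^w(v)$ for various neighbors $w$. Thus the set of changed counters forms (at most) two ``stars'' in the counter array---one centered at $u$ and one at $v$---each of size $O(\Delta/(\eps^2 d))$.

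On top of this layer I piggyback a constant-time bookkeeping step for each of the $O(1/\eps)=O(1)$ indices $i\in[1/\eps]$ and for each of the two subgraph types ($H$ and $SH$). Whenever the algorithm of \Cref{alg:approx-degrees} changes some counter $d^w(u)$, I read in $O(1)$ time the current pair $(d^w(u),d^u(w))$, compare them against the thresholds in \Cref{def: threshold}, and determine whether the edge $(u,w)$ now lies in $E_H^{(i)}$ (resp.\ $E_{SH}^{(i)}$). I maintain, for each edge and each of the $O(1)$ (type,~$i$) labels, a single bit indicating its current presence in the corresponding subgraph, so that detecting a toggle and performing the resulting insertion or deletion---using standard doubly-linked-list edge structures---takes $O(1)$ time per counter update.

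Because every edge modification triggered in a $B^{(i)}_H$ or $B^{(i)}_{SH}$ is of the form $(u,w)$ or $(v,w)$, i.e.\ incident to an endpoint of the modified kernel edge, the induced changes in each subgraph form at most two star updates, one centered at $u$ and one at $v$, with $O(\Delta/(\eps^2 d))$ leaves each. Aggregating over the $O(1)$ values of $(i,\text{type})$ gives $O(1)$ star updates per change to $E_K$, each of the promised size. There is no real obstacle here: the corollary is essentially a derived-data-structure wrapper around \Cref{alg:approx-degrees}, with the ``star'' structure coming for free from the fact that all changed counters sit inside the neighborhoods of $u$ and $v$.
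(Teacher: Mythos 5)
Your proposal is correct and matches the paper's (very terse) argument: the paper also simply invokes \Cref{alg:approx-degrees} with $\alpha=\eps^2 d$ and observes that re-evaluating the threshold conditions of \Cref{def: threshold} for each changed counter $d^w(u)$ or $d^w(v)$ yields edge updates all incident to $u$ or $v$, hence stars. You spell out the $O(1)$-per-counter bookkeeping and the aggregation over the $O(1/\eps)$ indices $i$ and two types, but the route is the same.
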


\section{Omitted Proofs of \Cref{sec:kernel-maintenance}: Speeding up Kernel Maintenance}\label{sec:kernel-in-sqrtm-time}
In \Cref{sec:kernel-maintenance} we gave an $O(n/(\eps d))$ update time algorithm for kernel maintenance. In this section we refine this bound, improving it for sparse graphs, by replacing the $n$ factor by $\sqrt{m}\leq n$. In particular, we prove \Cref{kernel-maintenance}, restated below for ease of reference.
\kernelalgo*
For this proof, we will rely on dynamic edge orientations. An \emph{edge orientation} of a graph is an assignment, for each edge $(u,v)$, of a direction, either from $u$ to $v$, or vice versa.
An orientation is an $\alpha$-orientation if each node $v$ has at most $\alpha$ edges oriented away from it.
Every graph admits an $O(\sqrt{m})$-orientation. 
For our needs, we rely on a constant-time dynamic edge orientation algorithm, due to  \cite{bernstein2015fully}, which matches this bound using optimal time and edge flips (i.e., re-orientations).
\begin{lem}[\cite{bernstein2015fully}]
     There exists a deterministic algorithm for maintaining an $O(\sqrt{m})$-orientation with $O(1)$ worst-case update time and $O(1)$ flips per edge update in the worst case.
\end{lem}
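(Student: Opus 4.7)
The plan is to maintain an orientation satisfying a local balance invariant and to argue that (i) the invariant forces max out-degree to be $O(\sqrt m)$, and (ii) restoring the invariant after an update can be done with $O(1)$ worst-case work and $O(1)$ flips. Concretely, I maintain, for some constant $\beta\ge 1$, the invariant
\[
d^+(u) \le d^+(v) + \beta \qquad \text{for every directed edge } u\to v.
\]

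First I would establish the degree bound. Let $v$ attain the maximum out-degree $D$. By the invariant, every out-neighbor $u$ of $v$ has $d^+(u)\ge D-\beta$, so the out-edges of these $D$ vertices already account for at least $D(D-\beta)$ edges (each edge counted once, at its tail). Since $\sum_w d^+(w)=m$, we obtain $D(D-\beta)\le m$, whence $D=O(\sqrt m)$. This is the same ``square-root'' phenomenon underlying the arboricity bound, turned into a purely local condition amenable to dynamic maintenance.

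Next I would describe the dynamic update rules. On insertion of $(u,v)$, orient the new edge from the endpoint of smaller out-degree (say $u$) toward the larger; this increases only $d^+(u)$ by one, which can violate the invariant only on outgoing edges $u\to x$ of $u$. If so, flip one such violating edge $u\to x$ to $x\to u$, which decreases $d^+(u)$ by one (fixing $u$'s violation) at the cost of increasing $d^+(x)$ by one, possibly triggering a violation at $x$, and so on in a chain. On deletion of $(u,v)$, simply remove the edge: out-degrees only decrease, so the invariant is preserved. To implement each step in $O(1)$ time, I would keep, at every vertex $w$, a pointer to an arbitrary out-neighbor of minimum $d^+$ value, updated incrementally whenever $w$'s out-adjacencies change; this lets us test for and locate a violating out-edge in constant time.

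The main obstacle is bounding the length of the flip cascade by $O(1)$ in the worst case. First, I would prove an amortized $O(1)$ flip bound via the potential
\[
\Phi \;=\; \sum_{u\to v}\; \bigl(d^+(u) - d^+(v)\bigr)^+,
\]
showing that an insertion raises $\Phi$ by $O(1)$ while each corrective flip strictly reduces $\Phi$ by at least one, because flipping $u\to x$ in a violating state converts a positive imbalance at $(u,x)$ into a negative one and perturbs at most $O(1)$ other terms. The harder step is de-amortizing this to a worst-case bound. For this I would use the standard scheduling trick: maintain a FIFO queue of ``pending repair tokens''; each update deposits $O(1)$ tokens, and each update processes $O(1)$ tokens from the queue. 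Slight slack in the invariant (allowing an additive constant excess, absorbed into $\beta$) is sufficient to guarantee that the queue never carries more than $O(1)$ pending repairs per vertex, so the degree bound degrades by only a constant factor while every edge update is serviced in worst-case $O(1)$ time and causes at most $O(1)$ flips. Finally, I would note that $m$ changes over time are handled by the standard doubling trick, re-initializing the structure whenever $m$ doubles or halves, at $O(1)$ amortized and (via background work) worst-case cost per update.
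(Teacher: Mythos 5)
The statement you are proving is not actually proved in this paper at all: it is quoted verbatim from Bernstein and Stein \cite{bernstein2015fully}, whose construction is quite different from yours. They orient each edge toward the endpoint of (approximately) larger degree \emph{in $G$}, with a constant-factor slack; then the out-degree of $u$ is at most $\min\{d_G(u),O(m/d_G(u))\}=O(\sqrt{m})$, and because of the slack an edge only needs reorientation after the degree of an endpoint changes by a constant factor, which can be serviced lazily by rescanning $O(1)$ incident edges per update (a round-robin pointer, much like the kernel maintenance in this paper), giving worst-case $O(1)$ time and flips. Your proposal instead tries to maintain the out-degree-balance invariant $d^+(u)\le d^+(v)+\beta$ on every directed edge, and as written it has genuine gaps.

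First, the deletion case is wrong. Deleting an edge oriented $u\to v$ decreases $d^+(u)$, which can break the invariant on \emph{in-edges} $y\to u$ (the constraint $d^+(y)\le d^+(u)+\beta$ gets tighter, not looser), so ``out-degrees only decrease, so the invariant is preserved'' is false. This is not cosmetic: your $O(\sqrt{m})$ bound invokes the invariant at query time, and deletions elsewhere can erode exactly the constraints $d^+(x)\ge D-\beta$ on the out-neighbors of the maximum-out-degree vertex; the doubling trick on $m$ rebuilds the structure but does not repair the invariant within a phase, so deletions need their own (symmetric) flip cascade. Second, the amortization is unsound. With $\Phi=\sum_{u\to v}(d^+(u)-d^+(v))^+$, a single unit change to $d^+(u)$ perturbs a term on \emph{every} edge incident to $u$, so an insertion can raise $\Phi$ by $\Theta(\sqrt{m})$ and a flip perturbs up to $\deg(u)+\deg(x)$ terms, not $O(1)$; the claimed ``$+O(1)$ per update, $-1$ per flip'' accounting fails, and the only bound your cascade argument actually yields is $O(\sqrt{m}/\beta)$ flips per insertion (the hot vertex's out-degree drops by at least $\beta$ per flip). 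Relatedly, the $O(1)$-time test for a violating out-edge is not available: it requires $\min_{x\in N^+(u)}d^+(x)$, and those values change due to updates not incident to $u$, so a cached pointer goes stale unless every change to some $d^+(x)$ notifies all in-neighbors of $x$, which is not constant work. With neither the amortized bound nor the constant-time primitive in place, the FIFO ``repair token'' de-amortization has nothing to de-amortize and its key claim (at most $O(1)$ pending repairs per vertex, with only constant degradation of the invariant) is asserted, not proved. If you want a self-contained proof, the degree-comparison orientation with multiplicative slack and lazy round-robin repair is the route that actually delivers worst-case $O(1)$ time and flips.
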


We now turn to presenting an algorithm with properties asserted by \Cref{kernel-maintenance}. 

\subsection{The Algorithm}\label{sec:kernel-maintenance-algo}

For simplicity, we first discuss our algorithm assuming a fixed edge orientation. The small number of flips per edge update will allow us to extend this result directly to the more realistic setting that edge update to $G$ require changing the edge orientation, at the end of this section.

\medskip\noindent\textbf{High-level intuition.} Our algorithm of \Cref{kernel-maintenance} will follow an approach similar to the algorithm of \Cref{kernel-maintenance-simple}. 
The crucial difference is that now, when an edge $(u,v)\in E_K$ is removed from $G$, instead of having $u$ and $v$ scan through $O(n/(\eps \cdot d))$ neighbors, we will have them scan through only $O(\sqrt{m}/(\eps \cdot d))$ out-neighbors (according to the edge orientation). This is, however, insufficient to satisfy Property \ref{p2:satisfied-edges} on its own. To see this, note that some in-neighbors $w$ of $u$, for example, may have $K$-degree strictly less than $d(1-\eps)$ at some point, after which $u$ may have its $K$-degree likewise decrease below $d(1-\eps)$, without ever inspecting its incoming edges in the interim. Consequently, an edge $(u,w)$ as above would violate Property \ref{p2:satisfied-edges}. To address this problem, we let each node $u$ have a \emph{pool} of in-neighbors with degree strictly less than $d$. When $u$ has its $K$-degree decrease, it first checks, in $O(1)$, whether its pool contains some in-neighbor $w$. If so, we add $(u,w)$ to $E_K$. This resolves all possible violations of Property \ref{p2:satisfied-edges} including $u$, since its degree remains unchanged.
This raises another problem, however, since now $w$ has its degree increase. If this happens often enough, then it belongs to pools of out-neighbors $v$, who may pick $w$ in their pool when their $K$-degree decreases. But if $w$ has its $K$-degree decrease, then the edge $(v,w)$ cannot be added to the kernel $K$, as its addition would violate Property \ref{p1:bounded-deg}. Put otherwise, $w$ should not be in the pool of any out-going neighbor $v$ if $w$ reaches degree $d$. This can be achieved by spending $O(\sqrt{m}/(\eps d))$ per edge addition of $w$ to $K$, by maintaining the following invariant.

\medskip\noindent\textbf{Invariant: pools.} Each vertex $v$ has a pool $p_v$ stored as a doubly-linked list containing a subset of its in-neighbors $w$ with $d_K(w)<d$. Each vertex $w$ with out-going degree $d_{out}$ in the orientation belongs to the pools of $\min\{d_{out}, (d-d_K(w))\cdot \frac{\sqrt{m}}{\eps d}\}$ of its out-going neighbors. 

\smallskip 

The above invariant, together with the properties \ref{p1:bounded-deg} and \ref{p2:satisfied-edges} of the kernel $K$, are easy to maintain in $O(\sqrt{m}/(\eps d))$ using the edge (static) orientation, as follows.
In addition to the above, for each node $v$, we keep two linked lists, $L_{in}$ and $L_{out}$, whose union contains all of $v$'s outgoing neighbors in the orientation. List $L_{in}$ contains all out-neighbors $w$ of $v$ such that $v\in p_w$, as well as a pointer to the relevant element in $p_w$ (allowing to remove $v$ from $p_w$ in $O(1)$ time when accessing $w$ in the first list). The second list, $L_{out}$ contains all out-neighbors $w$ of $v$ such that $v\notin p_w$, as well as a pointer to the pool $p_w$ (allowing to add $v$ to $p_w$ in $O(1)$ time, when accessing $w$ in the second list.)
Using these lists, whenever the $K$-degree of $v$ increases or decreases, we can fix the pool invariant by scanning through $\sqrt{m}/(\eps d)$ elements of either $L_{out}$ or $L_{in}$ in a fairly simple manner, as we now describe.

\medskip\noindent\textbf{The algorithm.} Our algorithm works as follows: when edge $e$ is added, if both endpoints have degree strictly less than $d$, we add $e$ to $E_K$. Otherwise, if the pool invariant is violated (i.e., $e=(u,v)$ is oriented from $u$ to $v$, and node $u$ now belongs to strictly less than $\{d_{out},(d-d_K(u))\cdot \sqrt{m}{\eps d}\}$ out-neighbors), we add $u$ to $p_v$.
Edge deletions are the more intricate subroutine. When an edge $e\notin E_K$ is removed, then if the pool invariant is violated and $e=(u,v)$ is oriented from $u$ to $v$, then we add $u$ to $p_v$. This concludes the operation due to a removal of an edge $e\not\in E_K$.
When an edge $e\in E_K$ is removed, we do the following for both endpoints $v\in e$: first, check if $p_v$ is not empty. If so, pick some $w\in p_v$ and add $(v,w)$ to $E_K$. Then, fix the pool invariant for $w$, by removing $w$ from the list of some $O(\sqrt{m}/(\eps d))$ out-neighbors of $w$.
If $p_v$ is empty, fix the pool invariant for $v$ by scanning through $\sqrt{m}/(\eps d)$ out-going edges $(v,z)$ of $v$ for which $v\not\in p_z$. If any of these neighbors $z$ has $d_K(z)<d$, add $(v,z)$ to $E_K$. Otherwise, add $v$ to the pool $p_z$ for all the $\sqrt{m}/(\eps d)$ out-neighbors scanned.

\medskip\noindent\textbf{Dealing with edge flips.} Edge updates can also result in edges being flipped in the (dynamic) edge orientation. To address this, whenever an edge $e$ is flipped, we process this by removing and adding it back into the graph with the appropriate orientation.

\subsection{Analysis}

We now analyze the above algorithm, and prove that it is precisely a kernel-maintenance algorithm with the desired time and change bounds.
\begin{proof}[Proof of \Cref{kernel-maintenance}]
    We analyze the algorithm described in \Cref{sec:kernel-maintenance-algo}.
    The claimed (worst-case) running time and number of changes per update are immediate from the algorithm's description. 
    We note moreover that the algorithm explicitly fixes any violation of the pool invariant throughout its execution. We now prove by induction on the number of updates that this invariant implies the two kernel properties of the maintained kernel, $K$, which trivially hold on initialization, when the graph $G$ and kernel $K$ are empty.
    
    The maintenance of Property \ref{p1:bounded-deg} of the kernel follows from our never adding an inserted edge $e$ to $E_K$ when $\max_{v\in e} d_K(v)=d$. Indeed, the only time we add an edge $(u,v)$ to $E_K$ without explicitly checking this condition is when $u\in p_v$ is oriented from to $v$ and the $K$-degree of $v$ decreases. But in that case, since Property \ref{p1:bounded-deg} held up to the preceding time step, we have that the $K$-degree decrease of $v$ resulted in $d_K(v)<d$, and since the pool invariant holds throughout, we have that $u\in p_v$ implies that we also have $d_K(u)<d$. Therefore, no addition of an edge to $E_K$ violates Property \ref{p1:bounded-deg} (and trivially, no removal of an edge may violate this property).
    
    We now prove that this algorithm satisfies Property \ref{p2:satisfied-edges}. Consider an edge $(u,v)$, oriented from $u$ to $v$ at time $t$.
    If $d_K(u)\geq d(1-\eps)$, then the condition of Property \ref{p2:satisfied-edges} is satisfied for this edge.
    Otherwise, we have by the pool invariant that $u\in p_v$, since $u$ belongs to the pools of all of its outgoing edges in this case. 
    Consider the last time $t'<t$ when $u$ was added to $p_v$.
    We cannot have that $d_K(v)<d$ at time $t'$, as this would result in $(u,v)$ being added to $E_K$, rather than $u$ being added to $p_v$.
    Therefore, by Property \ref{p1:bounded-deg}, we have that $d_K(v)=d$ at time $t'$. 
    Consequently, after each update which decreases $d_K(v)$, we have that $p_v\ni u$, and so $p_v$ is not empty, implying that for some in-neighbor $w$ of $v$, we add the edge $(v,w)$ to $E_K$, thus restoring $d_K(v)$ to its previous value of $d$. We conclude that the condition of Property \ref{p2:satisfied-edges} is satisfied for edge $(u,v)$ at time $t$.
\end{proof}

\section{Application: Dynamic Weighted Matching}\label{sec:application:MWM}

In \cite{bernstein2021framework}, Bernstein et al.~provided a framework for reducing dynamic \emph{weighted} matching to dynamic unweighted matching. 
For bipartite graphs, this reduction takes an $\alpha$-approximate unweighted matching algorithm with update time $T$, and constructs from it an $\alpha(1+\eps)$-approximate weighted matching algorithm with update time roughly $\tilde{O}_\eps(T)$. 
For general graphs, their result in general loses a $3/2$ factor in the approximation ratio, due to the integrality gap of the fractional matching polytope in bipartite graphs. 
Applying this directly to \Cref{thm:unweighted-kernel} yields a $(3+\eps)$-approximate weighted matching algorithm with update time $\tilde{O}_\eps(\sqrt[4]{m})$.

Bernstein et al.~provide improved bounds for their reduction when applied to kernel-based algorithms, for which they prove that the factor of $3/2$ loss need not be incurred. 
In particular, given a kernel maintenance algorithm $\calA$, their work yields $(2+\eps)$-approximate \emph{weighted} matching algorithms with roughly the same running time as $\calA$. More precisely, letting $\gamma_\eps :=(1/\eps)^{1/\eps}$, their work implies the following.

\begin{restatable}{lem}{kernelreduction}(Implicit in \cite{bernstein2021framework})\label{kernel-reduction}
    Let $\calA$ be a algorithm for maintaining an $(\eps,d)$-kernel $K$ for $\eps\in (0,1/12)$ and  $d \geq 288 \eps^{-3} \log(1/\eps)$, with update time $T(\eps,d,n,m)$ and $C(\eps,d,n,m)$ changes to $K$ per update.
    Then, there exists a $(2+O(\eps))$-approximate dynamic weighted matching algorithm $\calA'$ with update time $$O_\eps(T(\eps,d,\gamma_\eps n, \gamma_\eps m)+d\cdot C(\eps,d,\gamma_\eps n, \gamma_\eps m))\cdot \log (R).$$ 
    If $\calA$ is deterministic/adaptive, then so is $\calA'$. Moreover, if the bounds on the update time of $\calA$ and number of changes to $K$ hold in the worst-case, then so does the bound on the running time of $\calA'.$
\end{restatable}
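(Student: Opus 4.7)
The plan is to invoke the weighted-to-unweighted reduction framework of Bernstein et al.~\cite{bernstein2021framework} in a way that exploits the kernel structure of $\calA$'s output, thereby avoiding the $3/2$ integrality-gap loss that a generic reduction through bipartite matching in general graphs would incur. The proof proceeds in three conceptual stages: (i) reduce weighted matching to a polylogarithmic number of unweighted subproblems via weight-class bucketing, (ii) combine the kernels produced by $\calA$ on each weight class into a single auxiliary sparsifier without the $3/2$ loss, and (iii) maintain a near-maximum matching inside this combined low-degree sparsifier.

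For stage (i), I would partition the edges of $G$ by weight into $O(\log_{1+\eps} R) = O(\eps^{-1}\log R)$ geometric classes $E_1,E_2,\dots$, where $E_j$ collects edges of weight in $[(1+\eps)^j,(1+\eps)^{j+1})$. On the subgraph $G_j=(V,E_j)$, instantiate $\calA$ to maintain an $(\eps,d)$-kernel $K_j$. Each update in $G$ affects exactly one class, so the per-update bookkeeping cost is $O(T(\eps,d,n,m))$ times a $\log R$ factor after summing the stages below. For stage (ii), apply the Bernstein et al.~hierarchical vertex-replication construction, which builds an auxiliary unweighted graph $H$ on $\gamma_\eps n$ nodes and $\gamma_\eps m$ edges by creating $O(1/\eps)$ levels with $O(1/\eps)$ copies per level (giving the $\gamma_\eps=(1/\eps)^{1/\eps}$ blow-up), and embeds $\bigcup_j K_j$ in $H$ so that any matching in $\bigcup_j K_j\subseteq H$ unfolds to a matching in $G$ whose weight approximates $MWM(G)$. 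For stage (iii), since each $K_j$ has maximum degree $d$, the combined sparsifier has maximum degree $O(d)$ (after absorbing the constant number of levels into $d$), so Proposition~\ref{bounded-deg-algo} maintains a $(1+\eps)$-approximate matching inside it in time $O(d/\eps^2)$ per edge change, which is incurred $C(\eps,d,\gamma_\eps n,\gamma_\eps m)$ times per update to $G$.

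The crux, and the step I expect to be the main obstacle, is verifying stage (ii): showing that combining \emph{kernels} across weight classes inherits the tight $(2+O(\eps))$ approximation rather than $(3+O(\eps))$. The reason this holds is precisely the structural content of Lemma~\ref{kernel:basic} together with the Vizing-based rounding of Lemma~\ref{vizing-app}: the fractional matching $\vec{x}$ with $x_e=1/d$ on kernel edges (suitably adjusted on the $M^*\cap E_K$ edges as in equation~(\ref{fractional-matching})) rounds to an integral matching losing only a $(1+1/d)$ factor, independently of whether $G$ is bipartite. The same fractional construction, lifted to $H$ via the replication, satisfies the degree bound $\sum_{v\in e}y_v\geq 1-\eps$ for each edge of a near-optimal weighted matching $M^*_w$, and the resulting $H$-kernel rounds via Vizing's theorem within $(1+1/d)$ of the fractional optimum, giving the claimed $(2+O(\eps))$ approximation for $MWM(G)$.

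Finally, the running time bound follows by accounting: per update to $G$, the algorithm performs one update in the appropriate $G_j$ (cost $T(\eps,d,\gamma_\eps n,\gamma_\eps m)$), triggers $C(\eps,d,\gamma_\eps n,\gamma_\eps m)$ edge changes in $K_j$, and for each such change pays $O(d/\eps^2)$ to refresh the near-maximum matching in the combined sparsifier; multiplying by the $O(\log R)$ bookkeeping across classes yields the stated bound, up to the $O_\eps(\cdot)$ factors hiding $\poly(1/\eps)$ and $\gamma_\eps$ terms. Determinism, adaptivity, and worst-case guarantees transfer to $\calA'$ because the reduction is black-box over $\calA$, uses only deterministic bucketing and deterministic bounded-degree matching (Proposition~\ref{bounded-deg-algo}), and introduces no amortization.
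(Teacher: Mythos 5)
Your plan diverges from the paper's proof in a way that leaves a genuine gap. The paper does \emph{not} partition edges into geometric weight classes and run $\calA$ separately to get per-class kernels $K_j$. Instead it works with the folding/unfolding formalism of Kao et al.~and Bernstein et al.: it constructs a \emph{single} unweighted graph $\phi(G)$ (the ``unfolded'' graph, where each vertex $u$ is replaced by $\max_{e\ni u}w(e)$ copies $u^1,\dots$, and each weight-$w$ edge $(u,v)$ becomes $w$ parallel unweighted edges $(u^i,v^{w-i+1})$), maintains \emph{one} $(\eps,d)$-kernel $H$ of $\phi(G)$ via $\calA$, and then appeals to Lemma~\ref{refolding:kernels} to conclude that the refolded graph $R(H)$ preserves a $(2+O(\eps))$-approximate weighted matching. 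The reduction to the claimed running time is then an immediate application of Lemma~\ref{BDL:refolding-reduction}, with $\gamma_\eps n$ and $\gamma_\eps m$ arising because the framework discretizes weights within a window so $\phi(G)$ has that many vertices and edges, and the $\log R$ factor comes from that framework's windowing.

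The concrete problem with your stage (i)/(ii) architecture is that the union $\bigcup_j K_j$ of kernels of weight-bucketed subgraphs $G_j=(V,E_j)$ is not a kernel of any single graph, and in particular is not an $(\eps,d)$-kernel of $\phi(G)$. Property~\ref{p2:satisfied-edges} is relative to the graph the kernel sparsifies: an edge of $G_j$ absent from $K_j$ only guarantees a high-degree endpoint \emph{in $K_j$}, not in the union, and not in a lifted copy inside $\phi(G)$. So you cannot directly invoke the kernel approximation machinery (your ``crux'' paragraph) on the combined object; Lemma~\ref{refolding:kernels} applies to a kernel of $\phi(G)$, which your construction never produces. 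Separately, you would need to justify that the union has degree $O(d)$ (it could be $O(d/\eps \cdot\log R)$ across classes), and that the degree constraints from different $K_j$ can be reconciled with a global fractional matching argument. The fix is to drop the per-class kernels entirely: instantiate $\calA$ once on the dynamically evolving unfolded graph $\phi(G)$ (whose size is $\gamma_\eps n,\gamma_\eps m$ after the framework's weight discretization), use Lemma~\ref{refolding:kernels} to get $(2+O(\eps))$-refolding approximation, and feed this directly into Lemma~\ref{BDL:refolding-reduction}, which already internalizes the bounded-degree matching maintenance step that you re-derive in your stage~(iii).
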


This lemma immediately implies our dynamic weighted matching algorithm of \Cref{thm:weighted-kernel}, thus completing our proof of \Cref{thm:fast-kernel-algo}.

\begin{proof}[Proof of \Cref{thm:weighted-kernel}]
    Let $d=\sqrt[4]{m}\geq 288\cdot \eps^{-3}\log(1/\eps)$.\footnote{We focus on such sufficiently large $m$, since otherwise we can trivially maintain a $(1+\eps)$-approximate weighted matching in time near-linear in $m\leq f(\eps)$ by using the static algorithm of \cite{duan2014linear} after each update.} By \Cref{kernel-maintenance}, we can deterministically maintain an $(\eps,d)$-kernel with update time $T(\eps,d,n,m)=O(\sqrt{m}/(d\cdot \poly(\eps))=O(\sqrt[4]{m}/\poly(\eps))$ and $C(\eps,d,n,m)=O(1)$ changes per update to $G$, with both bounds holding in the worst case. Plugging these bounds into \Cref{kernel-reduction}, the theorem follows.
\end{proof}

\begin{remark}
    As noted in \cite{bernstein2021framework}, the dependence on $R$ is only logarithmic in $n$, provided the weights are polynomially bounded. 
    If this is not the case, the dependence on $R$ can be decreased exponentially to a doubly-logarithmic $\log\log R$ term, at the cost of extra logarithmic factors in $n$, using the reduction of  \cite{stubbs2017metatheorems}. We refer the reader to \cite[Appendix A]{bernstein2021framework} for a discussion.
\end{remark}

Since the reduction of \Cref{kernel-reduction} is not stated in a black-box manner in \cite{bernstein2021framework}, we outline its proof below, re-stating verbatim or slightly re-wording several of the definitions and lemmata of \cite{bernstein2021framework}.

\subsection{Folding and Unfolding}
In \cite{kao2001decomposition}, Kao et al.~provided a black-box reduction from (integer-)weighted bipartite matching to its unweighted counterpart, with a blow-up of $O(W)$ in the running time, for $W:=\max_e w(e)$.
(This result has since been generalized to general graphs, using other ideas, by \citet{pettie2012simple,huang2012efficient}.) 
The key idea of Kao et al., which Bernstein et al.~make use of, is the notion of graph folding and re-folding, defined below.

\begin{Def}
Let $G = (V, E, w)$ be a graph with non-negative
integral edge weights. The \emph{unfolded graph} $\phi(G)$ is an unweighted graph, with vertex set 
$V(\phi(G)):=\left\{u^i \mid u\in V, i\in [\max_{e\ni u} w(e)]\right\}$ and edge set 
$E(\phi(G)):=\left\{(u^i,v^{w(e)-i+1})
 \mid (u,v)\in E,\, i\in [w(e)]\right\}$ in $\phi(G)$.
\end{Def}
The key result of \citet{kao2001decomposition} is that in bipartite graphs $G$, the maximum (unweighted) matching in $\phi(G)$ has size equal to the maximum weight of a matching in $G$.
In general graphs, \citet{bernstein2021framework} prove similar relationships, by considering the following ``reverse'' operation of folding.
\begin{Def}
    Let $G=(V,E,w)$ be a weighted graph and let $H$ be a subgraph of $\phi(G)$. Then the \emph{refolded graph} $R(H)$ has vertex set $V$ and edge set $E(R(H)):=\{(u,v)\in E\mid  (u^i,v^{w(u,v)-i})\in E(H)\}$.
\end{Def}

\citet{bernstein2021framework}'s framework is particularly useful given a subgraph $H$ of the folded graph whose re-folded counter-part contains a large weighted matching, as in the following definition.
\begin{Def}
    A subgraph $H$ of $\phi(G)$ is \emph{$\alpha$-refolding approximate} if 
    $$MWM(G)\leq \alpha\cdot MWM(R(H)).$$
\end{Def}

As \citet{bernstein2021framework} show, kernels are precisely such refolding approximate subgraphs.
\begin{lem}[\cite{bernstein2021framework}] \label{refolding:kernels}
Let $G$ be a graph with edge weights in $[W]$ and let $d \geq 144 \eps^{-2} \log(2W/\eps)$, and $\eps\leq 1/12$. If $H$ is an $(\eps,d)$-kernel of $\phi(G)$, then $H$ is $1/(1/2 - 3\eps/2)\leq (2+4\eps)$-refolding-approximate
\end{lem}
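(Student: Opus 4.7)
The plan is to chain the desired bound $MWM(G) \leq (2+O(\eps))\cdot MWM(R(H))$ through two intermediate quantities: $\mu(\phi(G))$ (the unweighted matching number of the unfolded graph) and $\mu(H)$ (the unweighted matching number of the kernel). Two of the three links in this chain are essentially immediate from the paper's existing tools, and I will focus the work on the third.

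First, I observe $MWM(G)\leq \mu(\phi(G))$. Given a maximum weight matching $M_G^*$ of $G$, for each edge $(u,v)\in M_G^*$ one can include the $w(u,v)$ ``copy'' edges $(u^i,v^{w(u,v)-i+1})$ for $i\in [w(u,v)]$ in $\phi(G)$; these copies form an unweighted matching in $\phi(G)$ of size $w(M_G^*)=MWM(G)$, since distinct $M_G^*$-edges use disjoint vertex copies. Second, applying the basic kernel lemma (\Cref{kernel:basic}) to the $(\eps,d)$-kernel $H$ of $\phi(G)$ yields $\mu(\phi(G))\leq \tfrac{2(1+\eps)}{1-\eps}\cdot \mu(H)$.

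The hard part, which I expect to be the main obstacle, is the final link: $\mu(H)\leq (1+O(\eps))\cdot MWM(R(H))$. A naive approach fails because several distinct edges of a matching $M\subseteq H$ may ``refold'' to edges sharing an endpoint of $R(H)$: if $u^{i_1}$ is matched to a copy of $(u,v_1)$ while $u^{i_2}$ is matched to a copy of $(u,v_2)\neq (u,v_1)$ in $M$, the refolded set $\{(u,v):a(u,v)>0\}$ (where $a(u,v):=|\{i:(u^i,v^{w(u,v)-i+1})\in M\}|$) is not a matching in $R(H)$. The strong lower bound on $d$ in the hypothesis, namely $d\geq 144\eps^{-2}\log(2W/\eps)$, strongly suggests handling this via a probabilistic argument where $\log W$ enters through a union bound over the (at most) $W$ copies at each vertex. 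Concretely, my plan is to consider a random uniform permutation of the $w(u,v)$ copy-labels of each edge $(u,v)$, and show that under this random ``shift'' the counts $a(u,v)$ concentrate tightly around $|M|\cdot (w(u,v)/\text{normalization})$, so that applying a Vizing-style edge-coloring rounding (as in the proof of \Cref{vizing-app}) to the induced fractional weighted matching on $R(H)$ produces an integral weighted matching of weight at least $(1-O(\eps))\mu(H)$. The $\eps^{-2}$ in the bound on $d$ will govern the tightness of the concentration, and $\log(2W/\eps)$ will handle the union bound over copies and the error slack.

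Assembling the three inequalities gives
\[
MWM(G) \;\leq\; \mu(\phi(G)) \;\leq\; \frac{2(1+\eps)}{1-\eps}\cdot \mu(H) \;\leq\; \frac{2(1+\eps)}{(1-\eps)(1-O(\eps))}\cdot MWM(R(H)),
\]
and the constant on the right simplifies to at most $1/(\tfrac{1}{2}-\tfrac{3\eps}{2})$ for $\eps\leq 1/12$, matching the stated bound. The two outer links are straightforward; all of the difficulty (and all of the use of the $\log W$ slack in $d$) lies in the refolding step, which is the main obstacle and the place where the argument has to depart from the standard bipartite-case identity $\mu(\phi(G))=MWM(G)$ used by Kao et al.
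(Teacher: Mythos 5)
Your proposed chain breaks at the third link: the inequality $\mu(H)\leq (1+O(\eps))\cdot MWM(R(H))$ is simply false, and no amount of probabilistic shifting or use of the $\log W$ slack in $d$ can repair it. A concrete counterexample: let $G$ be a triangle with all edge weights equal to $2$. Then $\phi(G)$ is a $6$-cycle (on the six vertex-copies), which has maximum degree $2$, so $H=\phi(G)$ is a valid $(\eps,d)$-kernel for any $d\geq 2$, including the large $d\geq 144\eps^{-2}\log(2W/\eps)$ required by the hypothesis. Here $\mu(H)=\mu(\phi(G))=3$, while $R(H)=G$ and $MWM(R(H))=MWM(G)=2$, so $\mu(H)/MWM(R(H))=3/2$. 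The $3/2$ integrality gap you correctly worried about on the first link (where $MWM(G)\leq\mu(\phi(G))$ can be strict by that same factor) reappears on the third link going the wrong direction, so chaining through $\mu(\phi(G))$ and then $\mu(H)$ can, at best, give a bound like $3+O(\eps)$, not $2+O(\eps)$.

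Note also that the paper does not prove this lemma but cites it directly from Bernstein et al.~\cite{bernstein2021framework}; the intended argument is structurally different from yours. Rather than passing through the unweighted quantities $\mu(\phi(G))$ and $\mu(H)$ as separate intermediate links, the argument builds a fractional \emph{weighted} matching on $R(H)$ directly from a maximum weight matching $M^*$ of $G$, using the kernel properties of $H$ inside $\phi(G)$ to lower-bound the per-edge fractional degrees (analogous to equations \eqref{out-of-kernel-dual-bound}--\eqref{in-kernel-dual-bound} in the paper's proof of the extended kernel lemma), and then rounds via a Vizing-style edge-coloring as in \Cref{vizing-app}. This one-shot construction avoids ever paying the integrality gap in the bipartite double-cover direction, which is precisely the step your decomposition cannot avoid.
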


The usefulness of refolding approximate subgraphs for dynamic matching algorithms is made apparent by \cite[Theorem 4.4]{bernstein2021framework}.\footnote{The statement of \cite[Theorem 4.4]{bernstein2021framework} available at \url{https://zlangley.com/a-framework-for-dynamic-matching-in-weighted-graphs.pdf} contains a typo, requiring a bound on \emph{recourse}. However, recourse bounds are not defined for subgraph maintenance algorithms, nor necessary in their proof.
We thank Aaron Bernstein for confirming this point (private communication).}

\begin{lem}[\cite{bernstein2021framework}]\label{BDL:refolding-reduction}
    Let $G$ be an $n$-node and $m$-edge graph
    with weights in $[W]$, and let $\eps > 0$. If there is an algorithm $\calA_u$ that maintains an $\alpha$-refolding-approximate degree-$\Delta$ subgraph $H$ of $\phi(G)$ over updates to $G$ with update time $T_u(n,m, \alpha,W)$ (per update to $G$) and with  $C_u(n,m,\alpha,W)$ updates to $H$ per update in $G$, then there is an algorithm $\calA_w$ that maintains a
    $(1 - \eps)\alpha$-approximate maximum weight matching in $G$ with update time $$O_\eps((T_u(\gamma_\eps n,\gamma_\eps m, \alpha,\gamma_\eps) +  \Delta\cdot C_u(\gamma_\eps n,\gamma_\eps m, \alpha,\gamma_\eps )) \cdot \log(R)).$$
    Furthermore, if $\calA_u$ is deterministic, so is $\calA_w$, and if the update time and number of changes to $H$ of $\calA_u$ are worst-case, so is the update time of $\calA_w$.
\end{lem}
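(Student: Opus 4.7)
The plan is to reduce dynamic weighted matching to the given dynamic subgraph maintenance via weight stratification, per-bucket folding, and aggregation into a single low-degree subgraph. First, round every edge weight up to the nearest power of $(1+\eps)$, losing at most a $(1+\eps)$ factor in $MWM$, and partition the edges of $G$ into $L = O(\log R/\log\gamma_\eps) = O_\eps(\log R)$ geometric buckets $G^{(1)},\ldots,G^{(L)}$ whose weights span ratio at most $\gamma_\eps$. After rescaling, each bucket's edge weights are integers in $[1,\gamma_\eps]$, so the folded graph $\phi(G^{(j)})$ has at most $\gamma_\eps n$ vertices and $\gamma_\eps m$ edges. I would run a separate copy of $\calA_u$ on each $\phi(G^{(j)})$ to maintain an $\alpha$-refolding-approximate subgraph $H^{(j)} \subseteq \phi(G^{(j)})$ of degree $\Delta$; since each update to $G$ affects exactly one bucket, the subgraph-maintenance cost per update to $G$ is $T_u(\gamma_\eps n, \gamma_\eps m, \alpha, \gamma_\eps)$, with $C_u(\gamma_\eps n, \gamma_\eps m, \alpha, \gamma_\eps)$ changes to the corresponding $H^{(j)}$.

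Next, aggregate into a single dynamic graph $H := \bigcup_j R(H^{(j)})$. Each $R(H^{(j)}) \subseteq G^{(j)}$ has maximum degree $O(\gamma_\eps \cdot \Delta)$, since each vertex $v$ has only $\gamma_\eps$ copies in $\phi(G^{(j)})$ and each copy contributes at most $\Delta$ edges to $R(H^{(j)})$; hence $H$ has maximum degree $O_\eps(\Delta \log R)$ and changes by $O(C_u)$ edges per update to $G$. I would maintain a $(1+\eps)$-approximate maximum weight matching in $H$ by a weighted analog of \Cref{bounded-deg-algo}---periodic rebuilds of a static $(1+\eps)$-approximate maximum weight matching in $H$ combined with the stability argument of \Cref{stability}, appropriately de-amortized---in worst-case time $O_\eps(\Delta \log R)$ per edge change to $H$. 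Summing all components, the worst-case update time per update to $G$ is $O_\eps((T_u + \Delta \cdot C_u)\log R)$, and determinism and worst-case guarantees are inherited from each constituent algorithm.

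The main obstacle is the approximation analysis, namely showing $MWM(H) \geq MWM(G)/(\alpha(1+O(\eps)))$. The sum $\sum_j MWM(R(H^{(j)})) \geq \sum_j MWM(G^{(j)})/\alpha \geq MWM(G)/\alpha$ follows immediately from $R(H^{(j)})$ being $\alpha$-refolding-approximate for $G^{(j)}$ and from the decomposition $M^* = \bigsqcup_j (M^* \cap G^{(j)})$; but this only bounds the \emph{sum} of weights of per-bucket matchings, not the weight of a single matching in $H$. To close this gap, I would argue via a hierarchical greedy that processes buckets from heaviest to lightest: for $j = L, L-1, \ldots, 1$, take a maximum weight matching in $R(H^{(j)})$ restricted to currently unmatched vertices and append it to a running matching $\tilde{M}$. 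Since consecutive buckets' weights differ by at least $\gamma_\eps \gg 1/\eps$, the vertices removed by heavier matchings destroy at most an $O(1/\gamma_\eps)$-fraction of each lighter bucket's residual optimum, and a telescoping calculation shows $w(\tilde{M}) \geq (1-\eps) \sum_j MWM(R(H^{(j)}))/\alpha \geq (1-O(\eps))\cdot MWM(G)/\alpha$. Since $\tilde{M}$ is a feasible matching in $H$, we conclude $MWM(H) \geq (1-O(\eps))\cdot MWM(G)/\alpha$, and re-parameterizing $\eps$ gives the claimed $(1-\eps)\alpha$ approximation.
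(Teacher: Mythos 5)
First, note that the paper does not prove \Cref{BDL:refolding-reduction} at all---it is imported verbatim as Theorem~4.4 of \cite{bernstein2021framework}---so you are reproving an external black box. Your architecture does match the spirit of that framework (weight levels of ratio $\gamma_\eps$, a separate copy of $\calA_u$ on each folded level with $\gamma_\eps n$ nodes, $\gamma_\eps m$ edges and weights in $[\gamma_\eps]$, refolding, a bounded-degree maintenance step, and a $\log R$ factor from the number of levels), and your accounting of the update time is consistent with the claimed bound. The genuine gap is in the step you yourself flag as the main obstacle: combining the per-bucket matchings with only a $(1+O(\eps))$ loss. With \emph{disjoint} geometric buckets, an edge at the top of bucket $j$ has weight essentially equal to an edge at the bottom of bucket $j+1$, so your heaviest-to-lightest greedy can gain an edge of weight $\gamma_\eps^{j}$ while its two endpoints block two bucket-$j$ edges of weight nearly $\gamma_\eps^{j}$ each; summing the geometric tail over lower buckets, each committed edge can destroy residual weight about twice its own. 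The correct conclusion of the telescoping is therefore $w(\tilde{M}) \geq \frac{1}{3+o(1)}\sum_j MWM(R(H^{(j)}))$, a constant-factor loss, not $(1-\eps)$; the claim that heavier matchings destroy only an ``$O(1/\gamma_\eps)$-fraction of each lighter bucket's residual optimum'' is unjustified and false in general (a small matching in bucket $j+1$ can wipe out a comparable-weight optimum in bucket $j$). A constant-factor loss here would turn the $(2+\eps)$-approximate kernel into roughly a $6$-approximation, defeating the purpose of the lemma.

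Closing this gap requires an additional idea that your proposal is missing and that is exactly where $\gamma_\eps=(1/\eps)^{1/\eps}$ earns its keep: one must create weight \emph{gaps} between consecutive levels, e.g., by maintaining $O(1/\eps)$ shifted level decompositions (each built from sub-intervals of ratio $1/\eps$, grouped $1/\eps$ at a time with a different offset, discarding the boundary sub-interval) and arguing by averaging that some shift retains a $(1-\eps)$-fraction of $MWM(G)$ while consecutive retained levels differ in weight by a factor of at least $1/\eps$; only then does the heaviest-first composition lose $O(\eps)$ per committed edge. A secondary soft spot: maintaining a single $(1+\eps)$-approximate \emph{weighted} matching on the union $H$, whose weights span ratio $R$, via ``a weighted analog of \Cref{bounded-deg-algo}'' is not immediate---the stability argument of \Cref{stability} is cardinality-based, and a phase of $\eps\cdot|M|$ updates can delete weight far exceeding $\eps\cdot w(M)$ when weights are unbounded---so the final maintenance step also needs to be done level-by-level (or otherwise weight-aware), as in \cite{bernstein2021framework}, rather than by a direct appeal to the unweighted stability lemma.
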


\Cref{kernel-reduction} follows directly by combining lemmas \ref{refolding:kernels} and \ref{BDL:refolding-reduction}, and re-parameterizing $\eps$, while noting that $\log(2\gamma_\eps/\eps)\leq 2\cdot (1/\eps)\log(1/\eps)$ for $\eps\leq 1/12$.

\section{Bounding Maximum Degree by Arboricity}\label{sec:arboricity}
In this section, we show how we can replace $\Delta$ in our running time by $\sqrt{m}$ by using the algorithm by Solomon \cite{solomon2018local}. 

\begin{lem}\label{lem:arboricity}
Let $G$ be a dynamic graph and $\Delta$ be its maximum degree. If there exists an $\alpha$-approximate dynamic matching algorithm $\calA$ with worst-case update time of $T(\Delta, n, m)$, then there exists a $(1+\eps)\alpha$-approximate dynamic matching algorithm $\calA'$ with worst-case update time of $T(\sqrt{m}/\eps, n, m) + \log n$. If Algorithm $\calA$ is deterministic (randomized against an adaptive adversary), then so is $\calA'$.
\end{lem}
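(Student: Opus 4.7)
The plan is to interpose a dynamic sparsifier between $G$ and the black-box algorithm $\calA$. Concretely, I would maintain a subgraph $G' \subseteq G$ satisfying: (i) the maximum degree of $G'$ is at most $\sqrt{m}/\eps$; (ii) $\mu(G) \leq (1+\eps)\mu(G')$; and (iii) each edge update to $G$ triggers only $O(1)$ edge updates to $G'$ and is processed in $O(\log n)$ worst-case time. Feeding the $O(1)$ updates per step to $\calA$ run on $G'$ then yields a matching $M$ with
\[
    |M| \;\geq\; \mu(G')/\alpha \;\geq\; \mu(G)/((1+\eps)\alpha),
\]
which is a $(1+\eps)\alpha$-approximate matching in $G$. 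The total worst-case update time per change in $G$ is $T(\sqrt{m}/\eps, n, m) + O(\log n)$, as claimed, because each input update causes at most $O(1)$ updates passed to $\calA$, and the sparsifier's own maintenance adds only $O(\log n)$.

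To maintain such $G'$, I would invoke the deterministic local sparsification scheme of \cite{solomon2018local}, designed to produce a subgraph whose maximum degree is $O(\mathrm{arb}(G)/\eps)$ (where $\mathrm{arb}(G)$ denotes the arboricity) and which preserves the matching size up to a $(1+\eps)$ factor, while making only $O(1)$ changes to the sparsifier per input update and spending $O(\log n)$ worst-case time per update. Since every graph satisfies $\mathrm{arb}(G) \leq \lceil \sqrt{m}\,\rceil$, property (i) follows, after a possible re-parameterization $\eps \to \Theta(\eps)$. As the paper already assumes $m$ is fixed up to a constant factor via the standard doubling/rebuilding trick employed in the proof of \Cref{thm:det-beat-2}, the edge-count bound is uniform across updates. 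Determinism and adversary-adaptivity transfer because the sparsification rule is deterministic and never consults $\calA$'s internal state or output, so an adaptive adversary faces the same task against $\calA'$ as it does against $\calA$.

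The main obstacle is verifying that the precise worst-case bounds we require---namely, $O(1)$ changes to $G'$ per update and $O(\log n)$ processing time per update, both in the worst case rather than merely amortized---are what \cite{solomon2018local} provides, possibly after minor parameter adjustments to absorb constants into $\eps$. Once that is handled, everything else is a straightforward composition of two dynamic algorithms operating in series: Solomon's scheme on top of $\calA$.
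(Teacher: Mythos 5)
Your high-level plan is exactly the paper's: interpose a low-degree sparsifier $G'$ with $\mu(G)\leq(1+\eps)\mu(G')$ between $G$ and the black-box algorithm $\calA$, with $O(1)$ changes to $G'$ and $O(\log n)$ overhead per input update. However, you have correctly identified---and then left unresolved---the one part of the argument that actually needs work. You state you will ``invoke the deterministic local sparsification scheme of \cite{solomon2018local}'' and expect it to hand you the worst-case dynamic bounds (items (iii)), while acknowledging at the end that ``the main obstacle is verifying'' those bounds are in \cite{solomon2018local}. They are not: the paper cites Solomon's Theorem~3 only for the \emph{static} combinatorial fact that the bilateral selection of $\sqrt{m}/\eps$ edges per vertex satisfies $\mu(G)\leq(1+\eps)\mu(G')$, and then \emph{constructs} the dynamic maintenance from scratch. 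Concretely, each vertex keeps two self-balancing BSTs, one holding its selected edges and one its non-selected edges; on an insertion each endpoint decides in $O(\log n)$ time whether to select the new edge, and the edge enters $G'$ iff both endpoints select it; on a deletion, an endpoint that had selected the deleted edge promotes one arbitrary non-selected edge in its BST, and $G'$ gains that edge iff the other endpoint already selected it. This machinery is precisely what delivers the $O(1)$ changes to $G'$ and $O(\log n)$ worst-case time per update, and you cannot skip it by citation.

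A smaller point: you phrase the degree bound via arboricity ($O(\mathrm{arb}(G)/\eps)$, then $\mathrm{arb}(G)\leq\lceil\sqrt{m}\rceil$). Arboricity changes as the graph evolves, and a rule that depends on the current arboricity would itself need dynamic tracking. The paper sidesteps this entirely by hard-coding the per-vertex selection budget as $\sqrt{m}/\eps$ (with $m$ fixed up to a constant via the doubling/rebuilding trick, as you note), which is simpler and makes the degree bound immediate. If you adopt the paper's direct $\sqrt{m}/\eps$ rule and fill in the BST-based dynamic maintenance, your argument closes.
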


\begin{proof}
Suppose that each node of the graph selects $\sqrt{m}/ \eps$ of its edges (if it is possible, otherwise it selects all its neighbors) and we create a graph $G'$ by choosing edges that are selected by both of its endpoints. Maximum degree in $G'$ is at most $O(\sqrt{m} / \eps)$ since each node selects at most $\sqrt{m} / \eps$ edges. In Theorem 3 of \cite{solomon2018local}, Solomon shows that $\mu(G) \leq (1+\eps)\cdot \mu(G')$. 


It remains to show how to maintain $G'$ deterministically in a dynamic setting with logarithmic worst-case update time. Moreover, we should maintain $G'$ in such a way that each update to $G$ causes $O(1)$ updates to $G'$. For each node, we store its selected and non-selected edges in two different self-balancing BSTs. After an insertion into the graph $G$, for each of its endpoints, we insert the edge to the corresponding BST based on the degree of the node. If both endpoints add the edge to the selected-edges BST, we add this edge to the $G'$. After a deletion from the graph, for each of its endpoints, if the edge is in the selected-edges BST, we remove the edge from the selected edges and add another edge from the non-selected edges (if it is possible). If the newly added edge is selected by the other endpoint, we add it to the $G'$. Therefore, after each update to the input graph $G$, we have $O(1)$ updates in $G'$ with $O(\log n)$ worst-case update time.
\end{proof}

\section{Reduction from General to Bipartite Maximum Matching}\label{sec:bipartite:reduction}

In this section, we show a simple overlooked reduction from dynamic matching in general graphs to dynamic bipartite matching with a loss of a $\frac{3}{2}$ factor in the approximation---equal to the integrality gap of the fractional matching polytope. The reduction relies on the classic doubling of Nemhauser and Trotter, useful to prove $\frac{1}{2}$-integrality (and hence $\frac{3}{2}$-integrality gap) of the fractional matching polytope \cite{Nemhauser1975}.

\begin{lem}
Let $\mathcal{A}$ be an $\alpha$-approximation dynamic bipartite matching algorithm with $O(T)$ worst-case update time. Then there exists an $\frac{3\alpha}{2}$-approximation dynamic matching algorithm for general graphs with $O(T)$ worst-case update time.
\end{lem}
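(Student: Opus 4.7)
The plan is to invoke the classical Nemhauser--Trotter bipartite double cover. Given $G = (V,E)$, I would form the bipartite graph $\tilde G = (V_1 \sqcup V_2, \tilde E)$, where $V_1, V_2$ are two disjoint copies of $V$ and $\tilde E = \{(u_1,v_2),(v_1,u_2) : (u,v) \in E\}$. A single edge update to $G$ induces exactly two edge updates to $\tilde G$, so running $\mathcal{A}$ on $\tilde G$ maintains an $\alpha$-approximate bipartite matching $\tilde M$ of $\tilde G$ in $O(T)$ worst-case update time. Any matching $M'$ in $G$ lifts to a matching of twice the size in $\tilde G$ by including both $(u_1,v_2)$ and $(v_1,u_2)$ for each $(u,v) \in M'$; hence $\mu(\tilde G) \geq 2\mu(G)$, and $|\tilde M| \geq 2\mu(G)/\alpha$.

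To convert $\tilde M$ into an integral matching $M$ of $G$, I would define the fractional vector $x_e := \tfrac{1}{2}\,|\tilde M \cap \{(u_1,v_2), (v_1,u_2)\}|$ for each $e=(u,v)\in E$. Since each copy $u_i$ is saturated at most once in $\tilde M$, the vector $x$ is a half-integral fractional matching in $G$ of value $|\tilde M|/2$, and its support has maximum degree $2$. The support therefore decomposes into vertex-disjoint paths and cycles (plus isolated integral edges), and a routine case analysis shows that each such component admits an integral matching of size at least a $\tfrac{2}{3}$ fraction of the fractional value restricted to that component---with the tight case being the triangle, which has fractional value $\tfrac{3}{2}$ but integral matching only $1$, exactly mirroring the integrality gap of the fractional matching polytope. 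Collecting these components yields an integral matching $M$ in $G$ with $|M| \geq \tfrac{2}{3}\cdot \tfrac{|\tilde M|}{2} = \tfrac{|\tilde M|}{3} \geq \tfrac{2\mu(G)}{3\alpha}$, so $\mu(G) \leq \tfrac{3\alpha}{2}\,|M|$, as required.

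For the dynamic maintenance of $M$, I would exploit that each change to $\tilde M$ produced by $\mathcal{A}$ toggles exactly one $\tilde G$-edge, which in turn alters the support of $x$ by at most one edge. This perturbation affects at most one connected component of the support (extending, splitting, merging, or closing a path or cycle), and the matching on that component can be repaired locally. Representing components by doubly linked lists supporting split/merge in constant time, each local repair takes amortized $O(1)$. Since $\mathcal{A}$ issues at most $O(T)$ matching updates per update to $G$, maintenance of $M$ contributes only $O(T)$ per update to $G$, preserving the worst-case $O(T)$ bound.

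The main subtlety will be verifying that the local repair of an odd cycle being opened (or a triangle being created/destroyed) does not force a cascading re-match along a long path, i.e., that one can maintain the near-optimal per-component matching in truly constant time per support change without a logarithmic overhead. Should this prove delicate, a fallback is to appeal to the stability property (\Cref{stability}): recompute $M$ from scratch every $\Theta(\eps\cdot \mu(G))$ updates and spread the work evenly, incurring an additional $(1+\eps)$ factor in the approximation that can be absorbed by a marginal loosening of $\alpha$.
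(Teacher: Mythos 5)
Your proposal is correct and takes essentially the same approach as the paper: the Nemhauser--Trotter bipartite double cover, the induced half-integral fractional matching on $G$, the $2/3$ bound coming from the $3/2$ integrality gap of the fractional matching polytope, and maintenance of a maximum matching on the degree-$2$ support. The paper cites the integrality gap directly rather than re-deriving it per component, and handles the path/cycle maintenance by keeping lists of even/odd edges per component with $O(1)$ worst-case repair per support change (split/merge/break-cycle), which addresses the subtlety you flag at the end without needing your stability-based fallback or its $(1+\eps)$ loss.
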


\begin{proof}
Let $G = (V, E)$ be a dynamic general graph. We define a bipartite graph $G'=(V^1 \cup V^2, E')$, where $V = \{v^i \lvert v \in V \}$ and $E' = \{ 
(u^1, v^2), (u^2, v^1) \lvert (u, v) \in E\}$. Each update to $G$ causes 2 updates in $G'$. Moreover, clearly $\mu(G') \geq 2\mu(G)$. Therefore, we can maintain an $\alpha$-approximation matching $M'$ in $G'$ with $O(T)$ update time which has size at least $|M'| \geq \frac{1}{\alpha}\cdot\mu(G') \geq \frac{2}{\alpha}\cdot\mu(G)$. Now consider a fractional matching $X_{uv} = \frac{\mathds{1}[(u^1, v^2)\in M'] + \mathds{1}[(u^2, v^1)\in M']}{2}$. Clearly, $\sum_{e}X_e = \frac{1}{2}\cdot|M'| \geq \frac{1}{\alpha}\cdot\mu(G)$. On the other hand, by the $\frac{3}{2}$-integrality gap of the fractional matching polytope \cite{Nemhauser1975}, we have that
$$
\mu(\text{support}(X)) \geq \frac{2}{3}\cdot \sum_e X_e \geq \frac{2}{3\alpha} \cdot \mu(G),
$$
where $\text{support}(X)$ is the graph with set of edges that their values are equal to $\frac{1}{2}$ or 1. Hence, a maximum matching in $\text{support}(X)$ (which changes by $O(T)$ edges per update in $G$) is a $\frac{3\alpha}{2}$-approximation matching in $G$. Since the maximum degree of $\text{support}(X)$ is at most 2, it consists of disjoint paths and cycles. Therefore, it is trivial to maintain maximum matching in $\text{support}(X)$ in $O(1)$ per update in $\text{support}(X)$, by maintaining two different lists of even/odd edges of the paths and cycles (for odd length cycles, we store one edge separately). A deletion into $\text{support}(X)$ causes a path to break into two different paths and a cycle to convert to a path. An insertion into $\text{support}(X)$ causes a path to become a larger path or create a new one (we cannot have an insertion incident to a cycle). All the previous operations can be done in $O(1)$. Therefore, we can maintain maximum matching of $\text{support}(X)$ in $O(T)$ per update in $G$.
\end{proof}

As a corollary of the above lemma, together with the bipartite algorithm of \cite{bernstein2015fully}, we immediately obtain a deterministic $(\frac{9}{4}+\eps)$-approximate dynamic matching algorithm in general graphs with update time $O(\sqrt[4]{m}\cdot \poly(1/\eps))$.

\bibliographystyle{plainnat}
\bibliography{main_arxiv} 

\end{document}